
    \documentclass[journal]{IEEEtran}
    
    \usepackage[utf8]{inputenc}
    \usepackage[english]{babel}
    \usepackage[T1]{fontenc}
    \usepackage{amsfonts}
    \usepackage{amsmath}
    \usepackage{amssymb}
    \usepackage{amsthm}
    \usepackage{tikz}
    \usepackage{graphicx}
    \usepackage{color}
    \usepackage{pgfpages}
    \usepackage{stmaryrd}
    \usepackage{xstring}
    \usepackage[ruled, vlined, linesnumbered]{algorithm2e}
    \usepackage{pgfplots, pgfplotstable}
    \usepackage{subcaption}
    \usepackage{hyperref}
    \usepackage{verbatim}
    
    \usetikzlibrary{calc, graphs}
    \pgfplotsset{compat=newest}

    
    \newcommand{\etal}{\emph{et al.}}
    \newcommand{\ie}{\emph{i.e.}}
    
    \newcommand{\eg}{\emph{e.g.}}
    \renewcommand{\th}{^\text{th}}
    

    \newcommand{\figref}[1]{Figure~\ref{#1}}
    \newcommand{\secref}[1]{Section~\ref{#1}}
    \newcommand{\algref}[1]{Algorithm~\ref{#1}}
    \newcommand{\propref}[1]{Proposition~\ref{#1}}
    
    \newcommand{\defref}[1]{Definition~\ref{#1}}
    \newcommand{\corref}[1]{corollary~\ref{#1}}
    \newcommand{\lemref}[1]{Lemma~\ref{#1}}
    \newcommand{\conjref}[1]{Conjecture~\ref{#1}}

    \makeatletter\newcommand\footnoteref[1]{\protected@xdef\@thefnmark{\ref{#1}}\@footnotemark}\makeatother
    
    \theoremstyle{definition}
    \newtheorem{definition}{Definition}
    \newtheorem{proposition}{Proposition}
    \newtheorem{corollary}{Corollary}
    \newtheorem{remark}{Remark}
    \newtheorem{lemma}{Lemma}
    \newtheorem{conjecture}{Conjecture}
    
    \SetKwProg{Function}{function}{}{}
    \newcommand{\State}[1]{#1 \\}
    \SetKw{KwAnd}{and}
    \SetKw{KwSt}{s.t.}
    \SetKw{KwTrue}{true}
    \SetKw{KwFalse}{false}
    \SetKw{KwNull}{NULL}
    \SetKw{KwBreak}{break}
    \SetKw{KwTo}{to}
    \SetKw{KwNot}{not}
    \SetKwRepeat{Do}{do}{while}

    \newcommand{\constantName}[1]{\text{$\MakeUppercase{#1}$}}
    \newcommand{\mathSetName}[1]{\text{$\mathbb{\MakeUppercase{#1}}$}}
    \newcommand{\setName}[1]{\text{$\mathcal{\MakeUppercase{#1}}$}}
    \newcommand{\variableName}[1]{\text{$\MakeLowercase{#1}$}}
    \newcommand{\vectorName}[1]{\text{$\mathbf{\MakeLowercase{#1}}$}}
    \newcommand{\vectorSymbolName}[1]{\text{$\mathbf{\boldsymbol{#1}}$}}
    \newcommand{\matrixName}[1]{\text{$\mathbf{\MakeUppercase{#1}}$}}
    
    \newcommand{\functionName}[1]{\text{$\MakeLowercase{#1}$}}
    
    \newcommand{\interval}[1]{\text{$\left[ #1 \right]$}}
    \newcommand{\set}[1]{\text{$\left\{ #1 \right\}$}}
    \newcommand{\condSet}[2]{\text{$\left\{ #1 \;\middle|\; #2 \right\}$}}
    \newcommand{\tuple}[1]{\text{$\langle #1 \rangle$}}
    \newcommand{\sequence}[1]{\text{$\left( #1 \right)$}}
    \newcommand{\intInterval}[1]{\text{$\left\llbracket #1 \right\rrbracket$}}
    \newcommand{\cardinal}[1]{\text{$\left| #1 \right|$}}
    \newcommand{\absoluteValue}[1]{\text{$\left| #1 \right|$}}
    \newcommand{\norm}[2]{\text{$\left\| #1 \right\|_{#2}$}}
    \newcommand{\entry}[1]{\text{$\left[ #1 \right]$}}
    \renewcommand{\vector}[1]{\text{$\protect\begin{bmatrix} #1 \end{bmatrix}$}}
    \newcommand{\neighborhood}[2]{\text{$\setName{N}_{#1}(#2)$}}
    \newcommand{\positiveIntegers}{\text{$\mathSetName{N}$}}
    \newcommand{\strictlyPositiveIntegers}{\text{$\mathSetName{N}^*$}}
    \newcommand{\integers}{\text{$\mathSetName{Z}$}}
    \newcommand{\reals}{\text{$\mathSetName{R}$}}
    \newcommand{\integersModulo}[1]{\text{$\integers/#1\integers$}}
    
    \newcommand{\lNorm}[1]{\text{$\ell_{#1}$}}
    \newcommand{\approximate}[1]{\text{$\widetilde{#1}$}}
    \newcommand{\complexity}[1]{\text{$\mathcal{O}\left(#1\right)$}}
    
    \newcommand{\graph}{\text{$\setName{G}$}}
    \newcommand{\digraph}[1]{\text{${\overrightarrow{\graph}}^{#1}$}}
    \newcommand{\vertices}{\text{$\setName{V}$}}
    \newcommand{\edges}{\text{$\setName{E}$}}
    \newcommand{\diedges}[1]{\text{${\overrightarrow{\edges}}^{#1}$}}
    \newcommand{\vertex}{\text{$\variableName{v}$}}
    \newcommand{\vertexVector}{\text{$\vectorName{v}$}}
    \newcommand{\diracVector}{\text{$\vectorName{e}$}}
    \newcommand{\edge}[2]{\text{$\set{#1, #2}$}}
    \newcommand{\diedge}[2]{\text{$\sequence{#1, #2}$}}
    \newcommand{\graphOrder}{\text{$\constantName{N}$}}
    \newcommand{\constant}{\text{$\constantName{C}$}}
    \newcommand{\radius}{\text{$\constantName{R}$}}

    \newcommand{\adjacency}{\text{$\matrixName{A}$}}

    \newcommand{\signalVector}{\text{$\vectorName{x}$}}
    
    \newcommand{\gridDimensions}{\text{$\vectorName{d}$}}
    \newcommand{\gridNbDimensions}{\text{$\constantName{D}$}}
    \renewcommand{\i}{\text{$\variableName{i}$}}
    \renewcommand{\j}{\text{$\variableName{j}$}}
    \renewcommand{\k}{\text{$\variableName{k}$}}
    \newcommand{\blackHole}{\text{$\bot$}}
    
    \newcommand{\transformation}{\text{$\functionName{\phi}$}}
    \newcommand{\losslessTransformation}{\text{$\functionName{\phi}^*$}}
    \newcommand{\transformations}[1]{\text{$\Phi_{#1}$}}
    \newcommand{\losslessTransformations}[1]{\text{$\Phi^*_{#1}$}}
    \newcommand{\translation}{\text{$\functionName{\psi}$}}
    \newcommand{\losslessTranslation}{\text{$\functionName{\psi}^*$}}
    \newcommand{\differenceVector}{\text{$\vectorSymbolName{\delta}$}}
    
    \newcommand{\translations}[1]{\text{$\Psi_{#1}$}}
    \newcommand{\losslessTranslations}[1]{\text{$\Psi^*_{#1}$}}
    \newcommand{\geodesic}{\text{$\functionName{d}$}}
    \newcommand{\score}{\text{$\functionName{s}$}}
    \newcommand{\loss}{\text{$\functionName{loss}$}}
    \newcommand{\EC}[1]{\text{EC$\left(\transformations{#1}\right)$}}
    \newcommand{\WNP}[1]{\text{WNP$\left(\transformations{#1}\right)$}}
    \newcommand{\SNP}[1]{\text{SNP$\left(\transformations{#1}\right)$}}
    \newcommand{\ISO}[1]{\text{ISO$\left(\transformations{#1}\right)$}}
    \newcommand{\losslessEC}[1]{\text{EC$\left(\losslessTransformations{#1}\right)$}}
    \newcommand{\losslessWNP}[1]{\text{WNP$\left(\losslessTransformations{#1}\right)$}}
    \newcommand{\losslessSNP}[1]{\text{SNP$\left(\losslessTransformations{#1}\right)$}}
    \newcommand{\losslessISO}[1]{\text{ISO$\left(\losslessTransformations{#1}\right)$}}



    \title{A neighborhood-preserving\\translation operator on graphs}

    \author
    {
        Bastien Pasdeloup$^\dagger$,
        Vincent Gripon$^\ddagger$,
        Jean-Charles Vialatte$^\ddagger$, \\
        Nicolas Grelier$^\ddagger$,
        Dominique Pastor$^\ddagger$
        \thanks{$\dagger$ Ecole Polytechnique de Lausanne, EPFL STI IEL LTS4, Station 11, CH-1015 Lausanne, Switzerland. Email: \{name.surname\}@epfl.ch.}
        \thanks{$\ddagger$ UMR CNRS Lab-STICC, IMT Atlantique, Technopôle Brest-Iroise, 29238 Brest Cedex 03, France. Email: \{name.surname\}@imt-atlantique.fr.}
    }
    

    \begin{document}
        
            
            \maketitle
            
            \begin{abstract}
                In this paper, we introduce translation operators on graphs.
                Contrary to spectrally-defined translations in the framework of graph signal processing, our operators mimic neighborhood-preserving properties of translation operators defined in Euclidean spaces directly in the vertex domain, and therefore do not deform a signal as it is translated.
                We show that in the case of grid graphs built on top of a metric space, these operators exactly match underlying Euclidean translations, suggesting that they completely leverage the underlying metric.
                More generally, these translations are defined on any graph, and can therefore be used to process signals on those graphs.
                We show that identifying proposed translations is in general an NP-Complete problem.
                To cope with this issue, we introduce relaxed versions of these operators, and illustrate translation of signals on random graphs.
            \end{abstract}
            
            
            \section{Introduction}
            \label{introduction}
            
                Graph signal processing is a generalization of classical signal processing that arose a few years ago.
                The field developed around the observations that eigenvectors of a particular matrix --- the Laplacian matrix --- associated with a ring graph as depicted in \figref{ringGraph} can be chosen to correspond to classical discrete Fourier modes.
                In more details, a \emph{graph Fourier basis} associated with a graph of $\graphOrder$ vertices is a basis defined by its eigenvectors, and a spectral representation of any signal on a graph --- a vector in $\reals^\graphOrder$ --- can be obtained by projecting it into this particular basis, thus providing a \emph{spectral representation} for the signal.
                
                The correspondence between eigenvectors of the Laplacian matrix and the Fourier basis has been extended to any graph on which signals can be observed.
                Researchers have then successfully been able to find tools such as convolution, filtering, or modulation of signals on graphs (see \cite{shuman2013emerging, ortega2018graph} for an overview of such tools).
                
                \begin{figure}
                    \centering
                    \begin{tikzpicture}
                        \draw (0,0) ellipse (3.5 and 1);
                        \node [draw, fill=white, circle] at ($(0,0)+(0:3.5 and 1)$) {};
                        \node [draw, fill=white, circle] at ($(0,0)+(35:3.5 and 1)$) {};
                        \node [draw, fill=white, circle] at ($(0,0)+(60:3.5 and 1)$) {};
                        \node [draw, fill=white, circle] at ($(0,0)+(80:3.5 and 1)$) {};
                        \node [draw, fill=white, circle] at ($(0,0)+(100:3.5 and 1)$) {};
                        \node [draw, fill=white, circle] at ($(0,0)+(120:3.5 and 1)$) {};
                        \node [draw, fill=white, circle] at ($(0,0)+(145:3.5 and 1)$) {};
                        \node [draw, fill=white, circle] at ($(0,0)+(180:3.5 and 1)$) {};
                        \node [draw, fill=white, circle] at ($(0,0)+(215:3.5 and 1)$) {};
                        \node [draw, fill=white, circle] at ($(0,0)+(240:3.5 and 1)$) {};
                        \node [draw, fill=white, circle] at ($(0,0)+(260:3.5 and 1)$) {};
                        \node [draw, fill=white, circle] at ($(0,0)+(280:3.5 and 1)$) {};
                        \node [draw, fill=white, circle] at ($(0,0)+(300:3.5 and 1)$) {};
                        \node [draw, fill=white, circle] at ($(0,0)+(325:3.5 and 1)$) {};
                    \end{tikzpicture}
                    \caption
                    {
                        Example of a ring graph.
                        Eigenvectors of the Laplacian matrix associated with the smallest eigenvalues correspond to the Fourier modes associated with the lowest frequencies in classical Fourier analysis.
                        The correspondence also holds as the eigenvalues increase.
                    }
                    \label{ringGraph}
                \end{figure}
                
                Among these tools, one of paramount importance is the translation operator, that allows one to move a signal on the graph.
                While understanding translation of temporal signals or images is straightforward due to the underlying vector space, it is not the case for signals on graphs in general, since such objects only consist of vertices and edges linking them, without any explicit mention of an underlying vector space.
                Multiple definitions of translations for signals on graphs have been proposed in the literature, but none has the property --- that we observe in classical signal processing --- that adjacent signal entries necessarily remain adjacent after translation.
                This has the effect to deform the signal as it is translated, either by breaking neighborhoods or by changing the signal energy, as illustrated in \figref{imgTranslation}.
                In this work, we introduce a novel translation definition that operates in the vertex domain, and enforces shape preservation of the translated signals.
                
                \begin{figure}
                    \centering
                    \begin{subfigure}{0.47\linewidth}
                        \includegraphics[width=\linewidth]{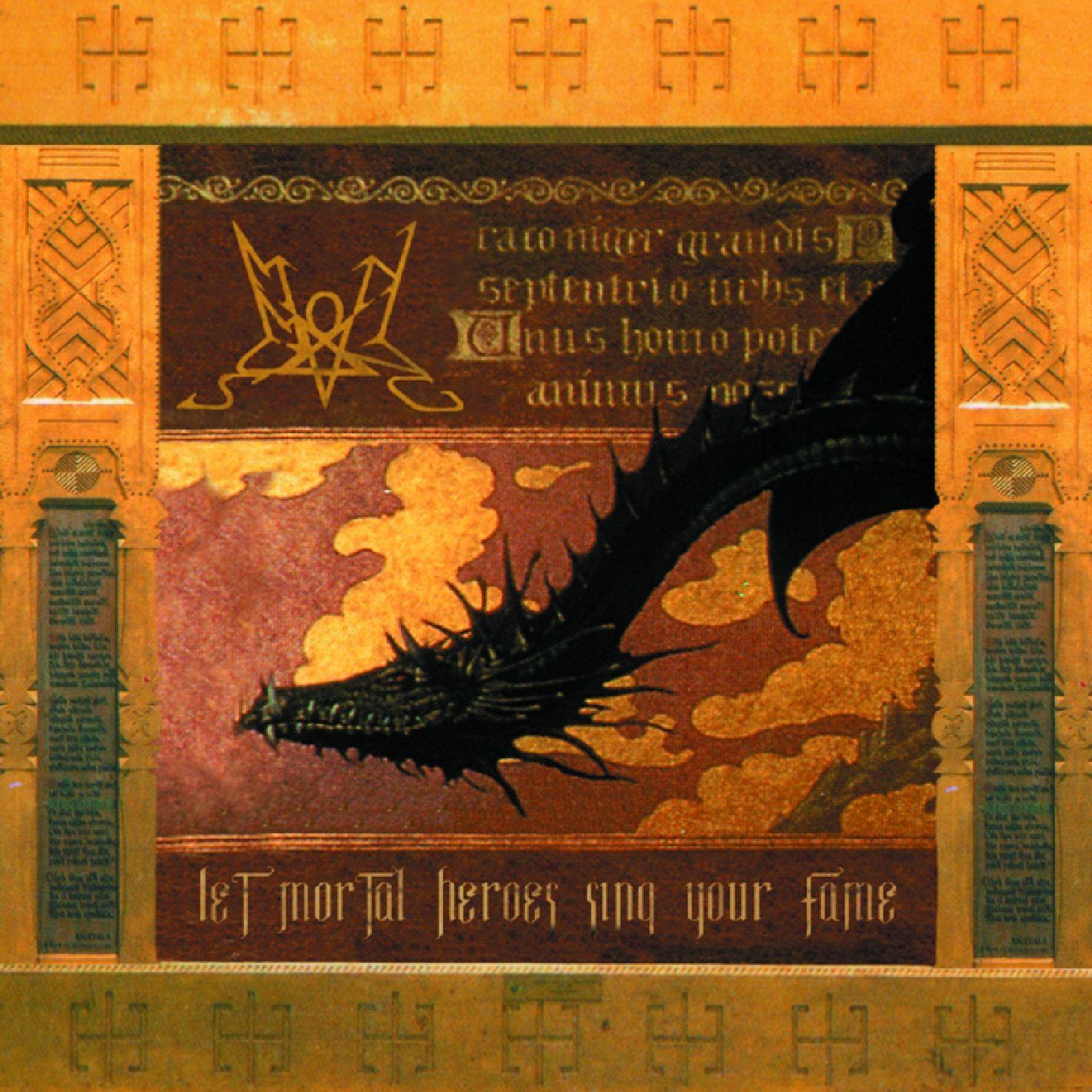}
                        \caption{}
                    \end{subfigure}
                    \hfill
                    \begin{subfigure}{0.47\linewidth}
                        \includegraphics[width=\linewidth]{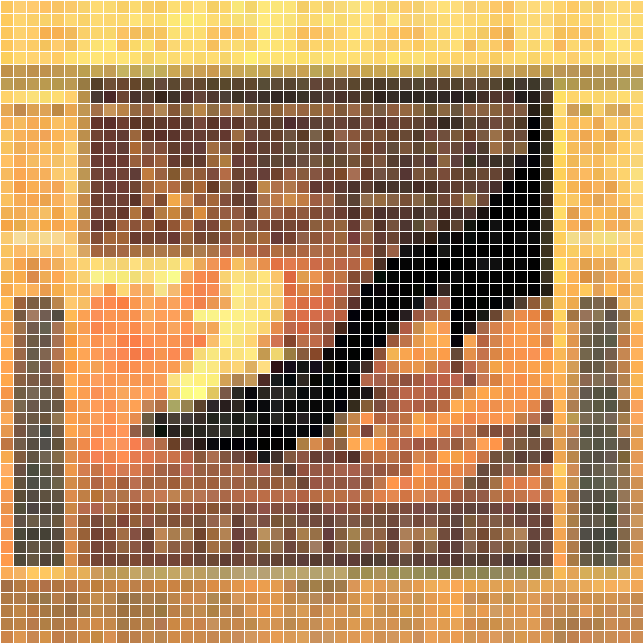}
                        \caption{}
                    \end{subfigure} \\
                    \vspace{0.2cm}
                    
                    \begin{subfigure}{0.47\linewidth}
                        \includegraphics[width=\linewidth]{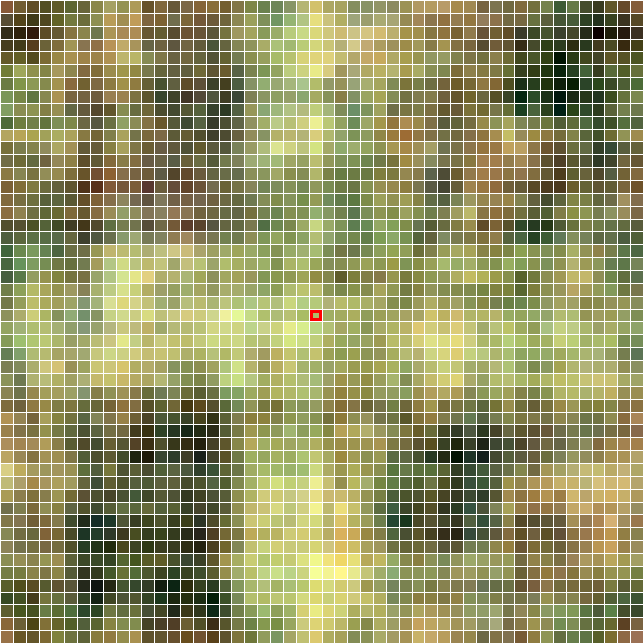}
                        \caption{}
                    \end{subfigure}
                    \hfill
                    \begin{subfigure}{0.47\linewidth}
                        \includegraphics[width=\linewidth]{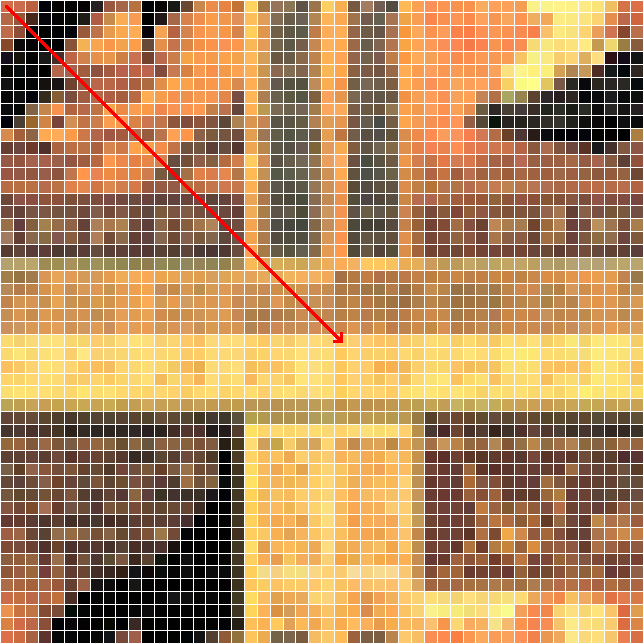}
                        \caption{}
                    \end{subfigure}
                    \caption
                    {
                        An image \textbf{(a)} that we want to translate, given as a signal of pixel intensity on a two-dimensional torus graph (see \figref{gridAndTorus}) \textbf{(b)}.
                        In the framework of graph signal processing, translation is achieved by convolving the initial image with a Dirac signal at a given location, highlighted in red in \textbf{(c)}.
                        The translation of the signal in \textbf{(b)} using this method is the image depicted in \textbf{(c)}.
                        When considering Euclidean translations, one would have expected to obtain the image in \textbf{(d)} after translation, or only its lower-right part in case of translation on a non-toric grid graph.
                    }
                    \label{imgTranslation}
                \end{figure}
                
                We advocate for the use of neighborhood-preserving translations because of three main reasons. First, it ensures obtained translations are isometries (cf.~\secref{definitions}), in the sense that they preserve distances. In many applications (\eg, machine learning), it thus ensures that patterns have similar shapes when they are shifted on the underlying graph.
                Second, it allows to draw a perfect match between Euclidean translations and graph translations when considering graphs that are defined using an underlying Euclidean metric space (cf.~\secref{resultsProofs}). As such, neighborhood-preserving translations can be seen as an extension of Euclidean translations.
                Finally, it preserves locality of shifted patterns on graphs, and as such can be used to extend locally coherent concepts that echo directly to classical domains, such as stationarity or uncertainty.
                
                In this article, we propose a framework to define translations of signals on arbitrary graphs that preserve adjacency, as translations of time signals or images would.
                Translations as we define them can be seen as an orientation of a subset of edges in the graph, with some neighborhood preservation constraints.
                Contrary to existing translation operators on graphs, we do not modify the signal entries as we translate them, with the exception of some special cases when we accept to lose part of the signal due to \emph{border effects}.
                Importantly, our translations do not rely on the use of any underlying metric space, which makes them applicable to any graph.
                
                The present article is organized as follows.
                First, \secref{related} recalls existing definitions for translations defined on graphs.
                In \secref{definitions}, we introduce neighborhood-preserving translation operators on graphs.
                \secref{resultsProofs} then gives results on these operators, such as the NP-completeness of the problem that consists in identifying them.
                To cope with this complexity issue, \secref{relaxation} then introduces a possible relaxation of the introduced operators.
                Finally, \secref{examples} depicts obtained translations on some graphs.

            \section{Related work}
            \label{related}
                
                \subsection{The graph shift approach}
                
                    Studying the case of the ring graph, Püschel and Moura \cite{pueschel2006}, followed by Sandrihaila and Moura \cite{sandryhaila2013discrete}, propose a notion of \emph{graph shift} as the adjacency matrix of the graph on which signals are defined.
                    In particular, when considering the directed ring graph --- \ie, the orientation of all edges of the ring graph in the same direction --- as a graph shift, multiplication of a signal by this shift has the effect to \emph{advance} it in time.
                    In the general case, considering an adjacency matrix as a translation operator has the effect to \emph{diffuse} a signal as it is translated.
                    Note that this is also the case where the adjacency matrix is normalized by its eigenvalue with the highest magnitude \cite{sandryhaila2014discrete}, in which case the signal energy only decreases as it is translated.
                    
                    For this reason, translation of signals with this approach cannot match our objective of conserving the signal entries during translation.
                    However, our approach is similar to this one in the sense that we identify translations by finding a subset of non-null entries of the adjacency matrix, which can be interpreted as an orientation of a subset of edges in the graph.
                    In particular, the directed ring graph is a valid translation on the ring graph according to our definitions.

                \subsection{The convolutive approach}
                
                    In the context of applying wavelets to graph signals, Hammond \etal{} \cite{hammond2011wavelets} propose to define translation as a localization function to move a wavelet to a particular location of the graph.
                    This is done by applying the wavelet to an \emph{impulse}, \ie, a signal that has all its energy concentrated at a single vertex.
                    
                    The same approach is taken by Shuman \etal{} \cite{shuman2012windowed, shuman2013emerging}, who propose a definition of translation of a signal \emph{to a vertex $\vertex$}, by convolution of this signal with an impulse located on $\vertex$.
                    This is done with analogy to the classical result in Fourier analysis that states that convolution in the time domain (in our case the graph) is equivalent to multiplication in the frequency domain (in our case the spectral domain of the graph).
                    
                    With this approach, the signal is moved to a particular location rather than by a certain quantity.
                    The convolution operation does not take the neighborhood in consideration, and allows modification of the signal when translating it.

                \subsection{The isometric approach}
                
                    Girault \etal{} \cite{girault2015translation, Girault2015a} propose a translation operator for graphs that is isometric with respect to the $\lNorm{2}$ norm, \ie, that does not change the signal energy as it is translated.
                    Their approach consists in changing the phase of the signal in the spectral domain to move it in the graph domain.
                    Additionally to keeping the signal norm unchanged, this operator has the property to preserve the signal localization, \ie, to have its energy located around a target vertex \cite{girault2016localization}.
                    
                    This approach can also be considered as convolutive, since the translation is performed by convolving the signal with complex exponentials.
                    Therefore, it suffers from the same drawback as the method introduced before, and can transform the signal while translating it.
                    
                    Gavili and Zhang \cite{2015arXiv151103512G} take a similar direction, and also propose a phase change for translation.
                    Contrary to the approach of Girault \etal{}, their solution does not take the graph spectrum into consideration.
                    Again, this method does not have any neighboring preservation property.

                \subsection{Neighborhood-preserving translations}
                
                    This article is an extended version of \cite{grelier2016neighborhood}.
                    In this work, authors explored translations on grid and torus graphs, and showed that Euclidean translations of images are equivalent to neighborhood preserving properties on these graphs.
                    It is worth noting that translations introduced in that first work were used in \cite{pasdeloup2017extending, lassance2018matching} in order to propose a generalization of convolutional neural networks to irregular domains.
                    
                    In the present article, we first reformulate the results in \cite{grelier2016neighborhood} to make them more general and comprehensive.
                    Additionally, we provide properties of the translations we propose, and show that identifying them is an NP-complete problem.
                    Then, we propose a relaxation of translations to address the complexity issue, and illustrate on random graphs.

            \section{Definitions}
            \label{definitions}
                
                In this section we introduce important definitions.
                After presenting some families of graphs --- that will be useful for demonstrations and examples --- we introduce a definition of transformations and translations of a signal on a graph.
                Connections to intuitive translations on an Euclidean space are made in \secref{resultsProofs}.
                
                \subsection{Some families of graphs}

                    \begin{definition}[Graph]
                        A \emph{graph} is a tuple $\graph = \tuple{\vertices, \edges}$, where $\vertices$ is the set of vertices and $\edges \subset \binom{\vertices}{2}$ is the set of edges\footnote{$\binom{\vertices}{2}$ denotes the set of unordered pairs of distinct elements in $\vertices$.}.
                        Graphs defined this way are by construction \emph{simple} (\ie, $\forall \vertex_1, \vertex_2 \in \vertices : \edge{\vertex_1}{\vertex_2} \in \edges \Rightarrow \vertex_1 \neq \vertex_2$) and \emph{undirected} (\ie, $\forall \vertex_1, \vertex_2 \in \vertices : \edge{\vertex_1}{\vertex_2} = \edge{\vertex_2}{\vertex_1}$).
                        \label{graph}
                    \end{definition}
                    
                    \begin{definition}[Digraph]
                        A \emph{digraph}, or \emph{directed graph}, is a tuple $\digraph{} = \tuple{\vertices, \diedges{}}$, where $\vertices$ is the set of vertices and $\diedges{} \subset \vertices \times \vertices$ is the set of directed edges, or \emph{diedges}.
                        Contrary to (undirected) graphs, the order of the vertices in elements of $\diedges{}$ matters.
                        In this article, we consider \emph{simple} digraphs only.
                    \end{definition}
                    
                    A digraph can be seen as a graph, from which some edges have been oriented to a particular direction:
                    
                    \begin{definition}[Orientation of a graph]
                        Let $\graph = \tuple{\vertices, \edges}$.
                        An orientation of $\graph$ is a digraph $\digraph{} = \tuple{\vertices, \diedges{}}$ such that $\forall \diedge{\vertex_1}{\vertex_2} \in \diedges{} : \edge{\vertex_1}{\vertex_2} \in \edges$.
                    \end{definition}
                    
                    It is often preferred to index vertices from $1$ to $\graphOrder = \cardinal{\vertices}$, with $\cardinal{\cdot}$ being the cardinality operator.
                    Thus, without loss of generality, and denoting $\intInterval{\constant_1, \constant_2}$ the set of all integers between $\constant_1$ and $\constant_2$, both included, we suppose in the remaining of this work that $\vertices = \intInterval{1, \graphOrder}$.
                    This allows us to define the adjacency matrix of a graph:
                    
                    \begin{definition}[Adjacency matrix]
                        A \emph{(binary) adjacency matrix} $\adjacency$ for a graph $\graph = \tuple{\vertices, \edges}$ is a $\graphOrder \times \graphOrder$ square matrix with:
                        $$
                            \forall \vertex_1, \variableName{v}_2 \in \vertices :
                            \adjacency\entry{\vertex_1, \vertex_2} =
                            \left\{
                                \begin{array}{cl}
                                    1 & \text{if~} \edge{\vertex_1}{\vertex_2} \in \edges \\
                                    0 & \text{otherwise}
                                \end{array}
                            \right.
                            \;.
                        $$
                        \label{adjacencyMatrix}
                    \end{definition}
                    
                    In this article, we note $\adjacency\entry{\i, \j}$ the entry of matrix $\adjacency$ at row $\i$ and column $\j$.
                    Additionally, we use the notations $\adjacency\entry{\i, :}$ and $\adjacency\entry{:, \j}$ for the $\i\th$ row and $\j\th$ column of $\adjacency$, respectively.
                    
                    Note that \defref{adjacencyMatrix} holds for digraphs, but in that case the adjacency matrix is not necessarily symmetric.
                    This adjacency matrix provides a convenient way to represent adjacency between vertices.
                    
                    Finally, we note $\neighborhood{H}{\vertex_1} \subseteq \vertices$ the $H$-hop \emph{neighborhood} of a vertex $\vertex_1 \in \vertices$, \ie, the set of vertices $\vertex_2 \in \vertices$ such that $\geodesic(\vertex_1, \vertex_2) = H$, where $\geodesic$ is the geodesic distance on the graph \cite{moore1959shortest}.
                    
                    For the sake of demonstrations and examples in this article, we introduce some particular graphs:
                    
                    \begin{definition}[Complete graph]
                        The \emph{complete graph} $\graph_c = \tuple{\vertices_c, \edges_c}$ of order $\graphOrder$ is the graph such that:
                        $$
                            \edges_c = \binom{\vertices_c}{2}
                            \;.
                        $$
                    \end{definition}
                    
                    Grid graphs are a natural support for (1D) time or (2D) image signals.
                    We introduce them in the general case of dimension $\gridNbDimensions$:

                    \begin{definition}[Grid graph]
                        Let $\gridDimensions \in \strictlyPositiveIntegers^\gridNbDimensions$.
                        The \emph{grid graph} $\graph_g = \tuple{\vertices_g, \edges_g}$ yielded by the dimensions vector $\gridDimensions$ is the graph such that:
                        \begin{itemize}
                            \item $\vertices_g = \intInterval{1, \gridDimensions\entry{1}} \times \intInterval{1, \gridDimensions\entry{2}} \times \dots \times \intInterval{1, \gridDimensions\entry{\gridNbDimensions}}$;
                            \item $\forall \vertexVector_1, \vertexVector_2 \in \vertices_g : (\edge{\vertexVector_1}{\vertexVector_2} \in \edges_g) \Leftrightarrow (\exists \i \in \intInterval{1, \gridNbDimensions} : (\absoluteValue{\vertexVector_1\entry{\i} - \vertexVector_2\entry{\i}} = 1) \wedge (\forall \j \in \intInterval{1, \gridNbDimensions}, \j \neq \i : \vertexVector_1\entry{\j} = \vertexVector_2\entry{\j}))$.
                        \end{itemize}
                        \label{defGrid}
                    \end{definition}
                    
                    The torus graph of dimensions $\gridDimensions$ can be defined just as the grid graph, considering operations to be performed over $\integersModulo{\gridDimensions\entry{\i}}$ for the $\i\th$ coordinate:
                    
                    \begin{definition}[Torus graph]
                        Let $\gridDimensions \in \strictlyPositiveIntegers^\gridNbDimensions$.
                        The \emph{torus graph} $\graph_t = \tuple{\vertices_t, \edges_t}$ yielded by the dimensions vector $\gridDimensions$ is the graph such that:
                        \begin{itemize}
                            \item $\vertices_t = \intInterval{1, \gridDimensions\entry{1}} \times \intInterval{1, \gridDimensions\entry{2}} \times \dots \times \intInterval{1, \gridDimensions\entry{\gridNbDimensions}}$;
                            \item $\forall \vertexVector_1, \vertexVector_2 \in \vertices_t : (\edge{\vertexVector_1}{\vertexVector_2} \in \edges_t) \Leftrightarrow (\exists \i \in \intInterval{1, \gridNbDimensions} : (\absoluteValue{\vertexVector_1\entry{\i} - \vertexVector_2\entry{\i}} \in \set{1, \gridDimensions\entry{\gridNbDimensions} - 1}) \wedge (\forall \j \in \intInterval{1, \gridNbDimensions}, \j \neq \i : \vertexVector_1\entry{\j} = \vertexVector_2\entry{\j}))$.
                        \end{itemize}
                        \label{defTorus}
                    \end{definition}
                    
                    As an example, \figref{gridAndTorus} provides a visual representation of the grid graph and the torus graph that are yielded by the dimensions vector $\gridDimensions = \vector{6\\5}$.
                    Vertices are placed according to the coordinates associated with vertices in $\vertices_g$ and $\vertices_t$.
                    
                    \begin{figure}
                        \centering
                        \begin{tikzpicture}[scale=0.6, thick]
                          \tikzstyle{every node} = [draw, circle];
                          \foreach \i in {0,1,2,3,4,5}{
                            \foreach \j in {0,1,2,3,4}{
                              \node(\i\j) at (\i,\j) {};
                            }
                          }
                          \path[]
                          \foreach \i in {0,1,2,3,4,5}{
                            \foreach \j/\jj in {0/1,1/2,2/3,3/4}{
                              (\i\j) edge (\i\jj)
                            }
                          }
                          \foreach \i/\ii in {0/1,1/2,2/3,3/4,4/5}{
                            \foreach \j in {0,1,2,3,4}{
                              (\i\j) edge (\ii\j)
                            }
                          }
                          ;
                        \end{tikzpicture}
                        ~~~~~
                        \begin{tikzpicture}[scale=0.6,thick]
                          \tikzstyle{every node} = [draw, circle];
                          \foreach \i in {0,1,2,3,4,5}{
                            \foreach \j in {0,1,2,3,4}{
                              \node(\i\j) at (\i,\j) {};
                            }
                          }
                          \path[]
                          \foreach \i in {0,1,2,3,4,5}{
                            \foreach \j/\jj in {0/1,1/2,2/3,3/4}{
                              (\i\j) edge (\i\jj)
                            }
                          }
                          \foreach \i/\ii in {0/1,1/2,2/3,3/4,4/5}{
                            \foreach \j in {0,1,2,3,4}{
                              (\i\j) edge (\ii\j)
                            }
                          }
                          \foreach \i in {0,1,2,3,4,5}{
                            (\i4) edge[bend left] (\i0)
                          }
                          \foreach \j in {0,1,2,3,4}{
                            (0\j) edge[bend left] (5\j)
                          }
                          ;
                        \end{tikzpicture}
                        \caption[]{Example of the grid graph (left) and torus graph (right), both with dimensions $\gridDimensions = \vector{6\\5}$.}
                        \label{gridAndTorus}
                    \end{figure}
                    
                    It is important to notice that the grid graph and the torus graph are defined by associating coordinates with their vertices corresponding to a regular sampling of a portion of the Euclidean space.
                    Therefore, vertices of $\vertices_g$ and $\vertices_t$ are by construction isomorphic to vectors, and defining a notion of translation is natural.
                    However, if the connection between vertices and the underlying metric space is unknown, it is not obvious to retrieve the underlying translations (which, in this case, boils down to retrieving the connection between vertices and the underlying vectors).
                    We show in \secref{resultsProofs} that our proposed definitions of translations is one way of achieving it.
                    
                    When considering graphs in the general case, no underlying metric space is available, and only the existence of connections among vertices is observed.
                    Thus, there is no notion of translation to a particular direction due to the absence of an underlying Euclidean space, and one can only rely on the neighborhood of the vertices.
                    In other words, graphs are useful to encapsulate a notion of neighborhood, but they do not naturally convey a notion of directionality.
                    
                    Finally, one category of graphs we will consider in this article are random geometric graphs, defined as follows:
                    
                    \begin{definition}[Random geometric graph]
                        A graph $\graph_{rg} = \tuple{\vertices_{rg}, \edges_{rg}}$ of order $\graphOrder$ following a random geometric model of parameter $\radius$ is a graph such that every vertex is associated with a uniformly randomly drawn location in a two-dimensional unit square. Then, edges correspond to pairs of vertices whose locations are under a certain radius, \ie:
                        \begin{itemize}
                            \item $\vertices_{rg} = \set{\vertexVector_1, \dots, \vertexVector_\graphOrder}$, with $\forall \i \in \intInterval{1, \graphOrder} : \vertexVector_\i \in \interval{0, 1}^2$;
                            \item $\forall \vertexVector_1, \vertexVector_2 \in \vertices_{rg} : \edge{\vertexVector_1}{\vertexVector_2} \in \edges_{rg} \Leftrightarrow \norm{\vertexVector_1 - \vertexVector_2}{2} < \radius$, where $\norm{\cdot}{2}$ measures the $\lNorm{2}$ norm of a vector.
                        \end{itemize}
                    \end{definition}

                \subsection{Transformations and translations on graphs}
                
                    Let us consider a graph $\graph = \tuple{\vertices, \edges}$.
                    Additionally, let us introduce an element $\blackHole$ such that $\blackHole \notin \vertices$.
                    
                    \begin{definition}[Transformation]
                        A \emph{transformation} on a graph $\graph = \tuple{\vertices, \edges}$ is a function $\transformation : \vertices \cup \set{\blackHole} \to \vertices \cup \set{\blackHole}$ such that
                        \begin{itemize}
                            \item $\forall \vertex_1, \vertex_2 \in \vertices_{\not\blackHole} : (\transformation(\vertex_1) = \transformation(\vertex_2)) \Rightarrow (\vertex_1 = \vertex_2)$;
                            \item $\transformation(\blackHole) = \blackHole$;
                        \end{itemize}
                        where $\vertices_{\not\blackHole} = \{\vertex\in \vertices: \transformation(\vertex) \neq \blackHole\}$.
                        We denote the set of transformations on $\graph$ by $\transformations{\graph}$.
                    \end{definition}

                    Informally, a transformation $\transformation \in \transformations{\graph}$ on a graph is a function that is injective for every vertex whose image is not $\blackHole$.
                    Including $\blackHole$ in the definition domain allows a natural composition of transformations.

                    \begin{definition}[Loss of a transformation]
                        We call \emph{loss} of a transformation the quantity $\cardinal{\condSet{\vertex \in \vertices}{\transformation(\vertex) = \blackHole}}$, noted $\loss(\transformation)$.
                        In the case where $\loss(\transformation) = 0$, we say that $\transformation$ is \emph{lossless}, and we note it $\losslessTransformation$.
                        We denote the set of lossless transformations on $\graph$ by $\losslessTransformations{\graph}$.
                    \end{definition}
                    
                    In the case of lossless transformations, every vertex has an image in $\vertices$.
                    Therefore, they are bijective from $\vertices$ to $\vertices$.
                    
                    It is also interesting to notice that every graph $\graph = \tuple{\vertices, \edges}$ admits a transformation of loss $\graphOrder$:
                    \begin{equation}
                        \transformation_\blackHole :
                        \left\{
                        \begin{array}{ccc}
                             \vertices \cup \set{\blackHole} & \to     & \vertices \cup \set{\blackHole} \\
                             \vertex   & \mapsto & \blackHole
                        \end{array}
                        \right.
                        \;.
                        \label{transformationBlackHole}
                    \end{equation}
                    
                    Transformations do not take into consideration the edges of the graph.
                    To add the constraint that vertices should be mapped to vertices in their neighborhood, we introduce edge-constrained transformations:
                    
                    \begin{definition}[Edge-constrained (EC) transformation]
                        A transformation $\transformation \in \transformations{\graph}$ on a graph $\graph = \tuple{\vertices, \edges}$ is said to be \emph{edge-constrained} if:
                        $$
                            \forall \vertex \in \vertices_{\not\blackHole} : \edge{\vertex}{\transformation(\vertex)} \in \edges
                            \;.
                        $$
                        We denote the set of EC transformations on $\graph$ by $\EC{\graph}$, and the set of lossless EC transformations on $\graph$ by $\losslessEC{\graph}$.
                        \label{defEC}
                    \end{definition}
                    
                    \begin{proposition}
                        An EC transformation $\transformation \in \EC{\graph}$ on a graph $\graph = \tuple{\vertices, \edges}$ injectively defines an orientation of $\graph$, with the possible exclusion of vertices of which image is $\blackHole$.
                        \label{orientationOfEdges}
                    \end{proposition}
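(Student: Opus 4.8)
The plan is to exhibit the orientation explicitly and then verify separately that the construction is a valid orientation of $\graph$ and that the assignment sending a transformation to its orientation is injective. I would associate with any $\transformation \in \EC{\graph}$ the digraph $\digraph{}_\transformation = \tuple{\vertices, \diedges{}_\transformation}$ whose arc set is $\diedges{}_\transformation = \condSet{\diedge{\vertex}{\transformation(\vertex)}}{\vertex \in \vertices_{\not\blackHole}}$, placing one arc $\diedge{\vertex}{\transformation(\vertex)}$ for each vertex $\vertex$ whose image is not $\blackHole$. To see that $\digraph{}_\transformation$ is an orientation of $\graph$, it suffices that every arc lies over an edge: for $\vertex \in \vertices_{\not\blackHole}$, the edge-constrained condition of \defref{defEC} gives $\edge{\vertex}{\transformation(\vertex)} \in \edges$, and since $\graph$ is simple this edge joins distinct vertices, so no arc is a self-loop. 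This is exactly the defining condition of an orientation, while vertices of image $\blackHole$ contribute no outgoing arc, which matches the announced possible exclusion.

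Next I would establish injectivity by reconstructing $\transformation$ from $\digraph{}_\transformation$, \ie{} by exhibiting a left inverse of the assignment $\transformation \mapsto \digraph{}_\transformation$. The key point is that $\transformation$ is single-valued, so each vertex $\vertex$ is the tail of at most one arc of $\diedges{}_\transformation$; I can therefore recover $\transformation$ via the rule $\transformation(\vertex) = \vertex_2$ whenever $\diedge{\vertex}{\vertex_2} \in \diedges{}_\transformation$, and $\transformation(\vertex) = \blackHole$ when $\vertex$ is the tail of no arc (together with $\transformation(\blackHole) = \blackHole$). Because this rule recovers $\transformation$ from the sole knowledge of $\diedges{}_\transformation$, any two transformations yielding the same digraph must coincide, which is precisely the injectivity claimed.

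The step I expect to require the most care is keeping the two distinct notions of injectivity apart. As a transformation, $\transformation$ is by definition one-to-one on $\vertices_{\not\blackHole}$; this makes each vertex the head of at most one arc, so the induced orientation even has in-degree at most one everywhere, but that property is not what the proposition needs. What actually drives the injectivity of $\transformation \mapsto \digraph{}_\transformation$ is instead that $\transformation$ is a function, hence of out-degree at most one, which is what makes the reconstruction above well defined. I would be careful to invoke the correct property in each half of the argument, and to handle the $\blackHole$ case explicitly, so that a vertex with no outgoing arc is unambiguously sent back to $\blackHole$ and the claimed left inverse is genuinely well defined on the image of the assignment.
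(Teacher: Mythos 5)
Your proposal is correct and takes essentially the same approach as the paper: both associate with $\transformation$ the digraph (in the paper, encoded by the matrix $\adjacency_\transformation$) whose arcs join each vertex of $\vertices_{\not\blackHole}$ to its image, check via the EC property that every arc lies over an edge of $\edges$, and derive injectivity from the fact that the digraph determines $\transformation$ --- your explicit left-inverse reconstruction is exactly what the paper compresses into ``$\adjacency_{\transformation_1} = \adjacency_{\transformation_2}$ implies $\transformation_1 = \transformation_2$''. The only difference is presentational: the paper routes the orientation check through a three-case entrywise comparison $\adjacency_\transformation \leq \adjacency$, which your direct appeal to \defref{defEC} makes unnecessary.
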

                    
                    \begin{proof}
                        Let $\adjacency$ be the adjacency matrix of $\graph$, and let $\adjacency_\transformation$ be a $\graphOrder \times \graphOrder$ matrix defined as follows:
                        $$
                            \forall \vertex_1, \vertex_2 \in \vertices :
                            \adjacency_\transformation\entry{\vertex_1, \vertex_2} =
                            \left\{
                                \begin{array}{cl}
                                    1 & \text{if~} \vertex_2 = \transformation(\vertex_1) \\
                                    0 & \text{otherwise}
                                \end{array}
                            \right.
                            \;.
                        $$
                        Remind that $\adjacency$ and $\adjacency_\transformation$ take their values in $\set{0, 1}$.
                        First, we show that $\forall \vertex_1, \vertex_2 \in \vertices : \adjacency_\transformation\entry{\vertex_1, \vertex_2} \leq \adjacency\entry{\vertex_1, \vertex_2}$.
                        Let us consider a vertex $\vertex_1 \in \vertices$.
                        There are three possible cases:
                        \begin{enumerate}
                            \item $\transformation(\vertex_1) = \blackHole$.
                                  In that case, $\forall \vertex_2 \in \vertices : \adjacency_\transformation\entry{\vertex_1, \vertex_2} = 0$.
                            \item $\exists \vertex_2 \in \neighborhood{1}{\vertex_1} : (\vertex_2 = \transformation(\vertex_1)) \wedge (\vertex_1 = \transformation(\vertex_2))$.
                                  In that case, $\adjacency_\transformation\entry{\vertex_1, \vertex_2} = \adjacency_\transformation\entry{\vertex_2, \vertex_1} = \adjacency\entry{\vertex_1, \vertex_2} = 1$, and $\forall \vertex_3 \in \vertices, \vertex_3 \neq \vertex_2 : \adjacency_\transformation\entry{\vertex_1, \vertex_3} = \adjacency_\transformation\entry{\vertex_3, \vertex_1} = 0$ (due to the injectivity of $\transformation$).
                            \item $\exists \vertex_2 \in \neighborhood{1}{\vertex_1} : (\vertex_2 = \transformation(\vertex_1)) \wedge (\vertex_1 \neq \transformation(\vertex_2))$.
                                  In that case, $\adjacency_\transformation\entry{\vertex_1, \vertex_2} = \adjacency\entry{\vertex_1, \vertex_2} = 1$, and $\adjacency_\transformation\entry{\vertex_2, \vertex_1} < \adjacency\entry{\vertex_2, \vertex_1}$, and $\forall \vertex_3 \in \vertices, \vertex_3 \neq \vertex_2 : \adjacency_\transformation\entry{\vertex_1, \vertex_3} = \adjacency_\transformation\entry{\vertex_3, \vertex_1} = 0$ (due to the injectivity of $\transformation$).
                        \end{enumerate}
                        In all cases, entries of $\adjacency_\transformation$ are lower or equal than those of $\adjacency$.
                        Additionally, due to case 3), there may exist $\vertex_1, \vertex_2 \in \vertices : \adjacency_\transformation\entry{\vertex_1, \vertex_2} < \adjacency\entry{\vertex_1, \vertex_2}$.
                        Therefore, $\adjacency_\transformation$ is not necessarily symmetric, and corresponds to a digraph in which every diedge contains elements that form an edge in $\edges$.

                        Additionally, if $\adjacency_{\transformation_1} = \adjacency_{\transformation_2}$, then $\forall \vertex \in \vertices : \transformation_1(\vertex) = \transformation_2(\vertex)$, \ie, $\transformation_1 = \transformation_2$.
                        So the mapping $\transformation \mapsto \adjacency_\transformation$ is injective.
                    \end{proof}
                    
                    The proof of \propref{orientationOfEdges} shows that it is possible to represent an EC transformation $\transformation$ on a graph $\graph = \tuple{ \vertices, \edges}$ by a digraph $\digraph{\transformation} = \tuple{\vertices, \diedges{\transformation}}$, with
                    $$
                        \forall \vertex \in \vertices : (\transformation(\vertex) \neq \blackHole) \Leftrightarrow \left(\diedge{\vertex}{\transformation(\vertex)} \in \diedges{\transformation}\right)
                        \;.
                    $$
                    
                    This allows a visual representation of EC transformations on a graph, where edges of $\edges$ are depicted with dotted lines, on top of which edges of $\diedges{\transformation}$ are drawn with plain arrows.
                    Additionally, we mark the vertices that have their image being $\blackHole$ by coloring them in black.
                    \figref{petersenTransformations} depicts an EC transformation on an example graph.
                    
                    \begin{figure}
                        \centering
                        \scalebox{0.87}
                        {
                            \begin{tikzpicture}[thick]
                              \foreach \i/\angle in {1/18,2/90,3/162,4/234,5/306}{
                                \node (out\i) [draw, circle] at (\angle:1.5cm) {};
                                \node (in\i) [draw, circle] at (\angle:0.75cm) {};
                              }
                              \path
                              \foreach \i in {1,...,5}{
                                (out\i) edge (in\i)        
                              }
                              (out1) edge (out2)
                              (out2) edge (out3)
                              (out3) edge (out4)
                              (out4) edge (out5)
                              (out5) edge (out1)
                              (in1) edge (in3)
                              (in3) edge (in5)
                              (in5) edge (in2)
                              (in2) edge (in4)
                              (in4) edge (in1);
                            \end{tikzpicture}
                            ~~~~~~~~~~
                            \begin{tikzpicture}[thick]
                              \foreach \i/\angle in {1/18,2/90,3/162,4/234,5/306}{
                                \node (out\i) [draw, circle] at (\angle:1.5cm) {};
                                \node (in\i) [draw, circle] at (\angle:0.75cm) {};
                              }
                              \node (in1) [fill=black, circle] at (18:0.75cm) {};
                              \path[dotted]
                              \foreach \i in {1,...,5}{
                                (out\i) edge (in\i)        
                              }
                              (out1) edge (out2)
                              (out2) edge (out3)
                              (out3) edge (out4)
                              (out4) edge (out5)
                              (out5) edge (out1)
                              (in1) edge (in3)
                              (in3) edge (in5)
                              (in5) edge (in2)
                              (in2) edge (in4)
                              (in4) edge (in1);
                              \path[->, red]
                              (out1) edge (out2)
                              (out2) edge (out3)
                              (out3) edge (out4)
                              (out4) edge (out5)
                              (out5) edge (out1)
                              (in2) edge (in4)
                              (in4) edge (in1)
                              (in3) edge (in5)
                              (in5) edge (in3);
                            \end{tikzpicture}
                        }
                        \caption{Example of a simple, undirected graph (left) and an associated EC transformation with loss $1$ (right). The vertex whose image is $\blackHole$ is filled in black.}
                        \label{petersenTransformations}
                    \end{figure}
                    
                    Using this correspondence with a digraph, we can reformulate the loss of an EC transformation as follows:
                    
                    \begin{proposition}
                        Let $\transformation \in \EC{\graph}$ be an EC transformation on a graph $\graph = \tuple{\vertices, \edges}$, with associated digraph $\digraph{\transformation} = \tuple{\vertices, \diedges{\transformation}}$.
                        Let $\adjacency_\transformation$ be the adjacency matrix associated with $\digraph{\transformation}$, then:
                        $$
                            \loss(\transformation) = \graphOrder - \cardinal{\condSet{\set{\vertex_1, \vertex_2} \in \binom{\vertices}{2}}{\adjacency_\transformation\entry{\vertex_1, \vertex_2} = 1}}
                            \;.
                        $$
                        \label{lossMatrix}
                    \end{proposition}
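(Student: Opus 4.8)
The plan is to reduce the claimed identity to an elementary counting argument on the rows of $\adjacency_\transformation$. First I would rewrite the loss directly from its definition: since $\vertices$ partitions into $\vertices_{\not\blackHole}$ and its complement $\condSet{\vertex \in \vertices}{\transformation(\vertex) = \blackHole}$, we immediately obtain $\loss(\transformation) = \graphOrder - \cardinal{\vertices_{\not\blackHole}}$. It then suffices to show that $\cardinal{\vertices_{\not\blackHole}}$ equals the number of diedges of $\digraph{\transformation}$, i.e.\ the number of nonzero entries of $\adjacency_\transformation$.

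To establish this, I would inspect $\adjacency_\transformation$ row by row, using its definition from the proof of \propref{orientationOfEdges}: $\adjacency_\transformation\entry{\vertex_1, \vertex_2} = 1$ exactly when $\vertex_2 = \transformation(\vertex_1)$. For a vertex $\vertex \in \vertices_{\not\blackHole}$, the image $\transformation(\vertex)$ is a single well-defined vertex of $\vertices$ (and, by the edge constraint, distinct from $\vertex$), so row $\vertex$ of $\adjacency_\transformation$ carries exactly one $1$. For a vertex with $\transformation(\vertex) = \blackHole$, no $\vertex_2$ satisfies $\vertex_2 = \transformation(\vertex)$, so its row is identically zero. Summing over rows then yields a bijection between $\vertices_{\not\blackHole}$ and the diedges of $\digraph{\transformation}$, hence $\cardinal{\vertices_{\not\blackHole}} = \cardinal{\diedges{\transformation}}$; combining with the first step gives the result.

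The step I expect to require the most care is matching this count of diedges against the unordered-pair form appearing in the statement. Because $\transformation$ may send two adjacent vertices to one another (the mutual-image situation isolated as case 2 in the proof of \propref{orientationOfEdges}), both $\adjacency_\transformation\entry{\vertex_1, \vertex_2}$ and $\adjacency_\transformation\entry{\vertex_2, \vertex_1}$ can equal $1$ for a single pair $\set{\vertex_1, \vertex_2}$. I would therefore make explicit that the quantity being counted is the number of oriented diedges, equivalently the total number of $1$ entries of $\adjacency_\transformation$, so that each non-$\blackHole$ vertex is accounted for exactly once rather than conflating two opposite diedges into one undirected edge. Once the correspondence is phrased at the level of vertices mapped to diedges, the identity follows with no further computation.
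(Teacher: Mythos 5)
Your argument is correct and is essentially the same as the paper's: the paper also proceeds row by row, observing that injectivity gives every vertex of $\vertices_{\not\blackHole}$ exactly one $1$ in its row of $\adjacency_\transformation$, that lost vertices give zero rows, and hence that $\loss(\transformation)$ is $\graphOrder$ minus the number of vertices whose image lies in $\vertices$.

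However, the caution in your final paragraph is not mere bookkeeping --- it identifies a real defect that the paper's proof silently steps over. Read literally, the statement counts unordered pairs $\set{\vertex_1, \vertex_2} \in \binom{\vertices}{2}$, and that count is wrong precisely in the mutual-image situation you single out: when $\transformation(\vertex_1) = \vertex_2$ and $\transformation(\vertex_2) = \vertex_1$, the two opposite diedges $\diedge{\vertex_1}{\vertex_2}$ and $\diedge{\vertex_2}{\vertex_1}$ collapse to a single unordered pair, so the right-hand side exceeds $\loss(\transformation)$ by one for each such pair. The paper's own \figref{petersenTransformations} is a counterexample to the literal statement: that EC transformation has $\graphOrder = 10$ and loss $1$, with nine diedges, but the two inner vertices mapped to one another contribute only one unordered pair, leaving eight pairs in total, so the displayed formula would give loss $2$. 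The paper's proof in fact establishes the diedge version (equivalently, $\graphOrder$ minus the number of $1$-entries of $\adjacency_\transformation$) and never reconciles it with the unordered-pair expression in the statement; your explicit restatement at the level of diedges is the correct reading of \propref{lossMatrix}, and your proof of that version is complete.
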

                    
                    \begin{proof}
                        Let $\vertex_1 \in \vertices$.
                        If $\transformation(\vertex_1) \in \vertices$, then due to injectivity, there is a unique $\vertex_2 \in \vertices$ such that $\adjacency_\transformation\entry{\vertex_1, \vertex_2} = 1$.
                        If $\transformation(\vertex_1) = \blackHole$, then $\forall \vertex_2 \in \vertices : \adjacency_\transformation\entry{\vertex_1, \vertex_2} = 0$.
                        Therefore, $\loss(\transformation)$ is $\graphOrder$ minus the number of vertices that have an image in $\vertices$. 
                    \end{proof}
                    
                    Note that not all graphs admit lossless EC transformations.
                    Indeed, we can derive a few sufficient properties as well as necessary ones for an EC transformation to be lossless:
                    
                    \begin{proposition}
                        Consider a graph $\graph = \tuple{\vertices, \edges}$.
                        In order to have $\losslessEC{\graph} \neq \emptyset$, we have the following properties:
                        \begin{enumerate}
                            \item (Necessary): $\forall \vertex \in \vertices : \cardinal{\neighborhood{1}{\vertex}} > 0$;
                            \item (Necessary): No vertex is the unique neighbor for two other vertices;
                            \item (Sufficient): There exists an Hamiltonian cycle in $\graph$, \ie, a cycle that contains every vertex of $\vertices$ exactly once;
                            \item (Sufficient): There exists a perfect matching between all vertices in $\vertices$, \ie, there is a subset $\edges'$ of $\edges$ such that every vertex appears exactly once in the edges of $\edges'$.
                        \end{enumerate}
                        \label{conditionsECTransformations}
                    \end{proposition}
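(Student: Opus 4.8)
The plan is to first translate the existence of a lossless EC transformation into a purely combinatorial statement, and then check each of the four claims against that reformulation. Recall that a lossless transformation sends no vertex to $\blackHole$, so its image lies entirely in $\vertices$; combined with injectivity and $\cardinal{\vertices} = \graphOrder$, it is therefore a permutation of $\vertices$. Being edge-constrained forces $\edge{\vertex}{\transformation(\vertex)} \in \edges$ for every $\vertex$. Hence $\losslessEC{\graph} \neq \emptyset$ is equivalent to the existence of a permutation $\transformation$ of $\vertices$ that sends every vertex to one of its neighbors, and I would state and exploit this equivalence throughout.

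For the two necessary conditions I would argue by contraposition. If some $\vertex$ satisfies $\cardinal{\neighborhood{1}{\vertex}} = 0$, then $\vertex$ has no neighbor in $\vertices$, so no permutation can map $\vertex$ to an adjacent vertex and $\losslessEC{\graph} = \emptyset$; this yields condition 1. For condition 2, suppose a vertex $w$ is the unique neighbor of two distinct vertices $\vertex_1$ and $\vertex_2$. Any neighbor-preserving permutation must send each of $\vertex_1$ and $\vertex_2$ to its only available neighbor $w$, forcing $\transformation(\vertex_1) = \transformation(\vertex_2) = w$ and contradicting injectivity; hence no lossless EC transformation exists.

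For the two sufficient conditions I would exhibit an explicit transformation in each case. Given a Hamiltonian cycle $\vertex_1, \vertex_2, \dots, \vertex_\graphOrder, \vertex_1$, the cyclic shift $\transformation(\vertex_i) = \vertex_{i+1}$ (with indices read modulo $\graphOrder$, so $\vertex_{\graphOrder+1} = \vertex_1$) is a permutation that maps each vertex to an adjacent one, and so lies in $\losslessEC{\graph}$; this settles condition 3. Given a perfect matching $\edges' \subseteq \edges$, I would define $\transformation$ to swap the endpoints of each matched edge: for $\edge{\vertex_1}{\vertex_2} \in \edges'$, set $\transformation(\vertex_1) = \vertex_2$ and $\transformation(\vertex_2) = \vertex_1$. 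Since every vertex lies in exactly one edge of $\edges'$, this is a well-defined involution, hence a permutation, and each vertex is sent to a neighbor, giving condition 4. I do not expect a genuine obstacle here: once the reformulation as a neighbor-preserving permutation is in place, every claim reduces to a short direct argument, and the only point demanding slight care is the correct reading of condition 2, namely that a single vertex serves as the \emph{sole} neighbor of two distinct others.
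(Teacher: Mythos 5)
Your proposal is correct and follows essentially the same approach as the paper: contraposition via injectivity for the two necessary conditions, and the explicit successor-in-cycle and matching-swap constructions for the two sufficient ones. Your up-front reformulation of lossless EC transformations as neighbor-preserving permutations is just an explicit statement of what the paper establishes implicitly (lossless transformations are bijections from $\vertices$ to $\vertices$), so the arguments coincide step for step.
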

    
                    \begin{proof}
                        Let $\transformation \in \EC{\graph}$ be an EC transformation.
                        Let us consider the properties in the same order as above:
                        \begin{enumerate}
                            \item Let $\vertex \in \vertices$.
                                  If $\cardinal{\neighborhood{1}{\vertex}} = 0$, then the case $\edge{\vertex}{\transformation(\vertex)} \in \edges$ of \defref{defEC} is never matched, therefore $\transformation(\vertex) = \blackHole$.
                            \item Let $\vertex_1, \vertex_2, \vertex_3 \in \vertices$, with $\neighborhood{1}{\vertex_1} = \set{\vertex_3}$ and $\neighborhood{1}{\vertex_2} = \set{\vertex_3}$.
                                  To avoid the case where a vertex has its image equal to $\blackHole$, we must have $\transformation(\vertex_1) = \vertex_3$ and $\transformation(\vertex_2) = \vertex_3$.
                                  However, this contradicts injectivity of transformations.
                            \item Let $\vertex_1 \to \vertex_2 \to \dots \to \vertex_\graphOrder \to \vertex_1$ be a Hamiltonian cycle.
                                  The transformation that associates every vertex with its sucessor in the cycle is EC, and lossless.
                            \item If a perfect matching exists, we can determine $\edges' \subset \edges$ with $\cardinal{\edges'} = \frac{\graphOrder}{2}$ such that $\forall \vertex_1 \in \vertices : \exists \vertex_2 \in \vertices : \edge{\vertex_1}{\vertex_2} \in \edges'$.
                                  In this case, the transformation that associates with $\vertex_1$ its neighbor $\vertex_2$ is EC and lossless.
                        \end{enumerate}
                    \end{proof}
                
                    Among all transformations, we are in particular interested in translations.
                    Since their definition is not straightforward, let us first introduce the following properties for transformations:
                    
                    \begin{definition}[Weakly neighborhood-preserving (WNP) transformation]
                        A transformation $\transformation \in \transformations{\graph}$ on a graph $\graph = \tuple{\vertices, \edges}$ is said to be \emph{weakly neighborhood-preserving} if:
                        \begin{equation*}
                            \begin{split}
                                \forall \vertex_1, \vertex_2 \in \vertices_{\not\blackHole} : \edge{\vertex_1}{\vertex_2} \in \edges \Rightarrow \edge{\transformation(\vertex_1)}{\transformation(\vertex_2)} \in \edges
                                \;.
                            \end{split}
                        \end{equation*}
                        We note $\WNP{\graph}$ the set of WNP transformations on $\graph$, and $\losslessWNP{\graph}$ the set of lossless WNP transformations on $\graph$.
                    \end{definition}
                    
                    Informally, WNP transformations conserve existing neighborhoods.
                    However, note that two vertices that are not neighbors may be associated with neighboring vertices through a WNP transformation.
                    Transformations that do not create additional neighborhoods are characterized as follows:
                    
                    \begin{definition}[Strongly neighborhood-preserving (SNP) transformation]
                        A transformation $\transformation \in \transformations{\graph}$ on a graph $\graph = \tuple{\vertices, \edges}$ is said to be \emph{strongly neighborhood-preserving} if:
                        \begin{equation*}
                            \forall \vertex_1, \vertex_2 \in \vertices_{\not\blackHole} : \edge{\vertex_1}{\vertex_2} \in \edges \Leftrightarrow \edge{\transformation(\vertex_1)}{\transformation(\vertex_2)} \in \edges
                            \;.
                        \end{equation*}
                        We denote the set of SNP transformations on $\graph$ by $\SNP{\graph}$, and the set of lossless SNP transformations on $\graph$ by $\losslessSNP{\graph}$.
                        \label{snpDef}
                    \end{definition}

                    We can now define translations on graphs as follows:
                    
                    \begin{definition}[Translation on a graph]
                        A \emph{translation} $\translation \in \transformations{\graph}$ on a graph $\graph = \tuple{\vertices, \edges}$ is an EC and SNP transformation.
                        We denote the set of translations on $\graph$ by $\translations{\graph}$, and the set of lossless translations on $\graph$ by $\losslessTranslations{\graph}$.
                        \label{translationDef}
                    \end{definition}
                    
                    \begin{figure}
                        \centering
                        \scalebox{0.87}
                        {
                            \begin{tikzpicture}[thick]
                              \foreach \i/\angle in {1/18,2/90,3/162,4/234,5/306}{
                                \node (out\i) [draw, circle] at (\angle:1.5cm) {};
                                \node (in\i) [draw, circle] at (\angle:0.75cm) {};
                              }
                              \node (out1) [fill=black, circle] at (18:1.5cm) {};
                              \node (out2) [fill=black, circle] at (90:1.5cm) {};
                              \node (out3) [fill=black, circle] at (162:1.5cm) {};
                              \node (out4) [fill=black, circle] at (234:1.5cm) {};
                              \node (in3) [fill=black, circle] at (162:0.75cm) {};
                              \path[dotted]
                              \foreach \i in {1,...,5}{
                                (out\i) edge (in\i)        
                              }
                              (out1) edge (out2)
                              (out2) edge (out3)
                              (out3) edge (out4)
                              (out4) edge (out5)
                              (out5) edge (out1)
                              (in1) edge (in3)
                              (in3) edge (in5)
                              (in5) edge (in2)
                              (in2) edge (in4)
                              (in4) edge (in1);
                              \path[->, red]
                              (out5) edge (in5)
                              (in5) edge (in2)
                              (in2) edge (in4)
                              (in4) edge (in1)
                              (in1) edge (in3);
                            \end{tikzpicture}
                            ~~~~~
                            \begin{tikzpicture}[thick]
                              \foreach \i/\angle in {1/18,2/90,3/162,4/234,5/306}{
                                \node (out\i) [draw, circle] at (\angle:1.5cm) {};
                                \node (in\i) [fill=black, circle] at (\angle:0.75cm) {};
                              }
                              \path[dotted]
                              \foreach \i in {1,...,5}{
                                (out\i) edge (in\i)        
                              }
                              (out1) edge (out2)
                              (out2) edge (out3)
                              (out3) edge (out4)
                              (out4) edge (out5)
                              (out5) edge (out1)
                              (in1) edge (in3)
                              (in3) edge (in5)
                              (in5) edge (in2)
                              (in2) edge (in4)
                              (in4) edge (in1);
                              \path[->, red]
                              (out1) edge (out2)
                              (out2) edge (out3)
                              (out3) edge (out4)
                              (out4) edge (out5)
                              (out5) edge (out1);
                            \end{tikzpicture}
                        }
                      \caption{Examples of transformations that are translations on the Petersen graph in \figref{petersenTransformations}.}
                      \label{petersenTranslations}
                    \end{figure}
                    
                    \figref{petersenTranslations} depicts two examples of translations on a graph.
                    Again, note that the function $\transformation_\blackHole$ introduced in \eqref{transformationBlackHole} is a translation for any graph $\graph = \tuple{\vertices, \edges}$.
                    Additionally, we observe the following property:
                    
                    \begin{proposition}
                        Let $\translation \in \translations{\graph}$ be a translation on a graph $\graph = \tuple{\vertices, \edges}$, with associated digraph $\digraph{\translation} = \tuple{\vertices, \diedges{\translation}}$.
                        Edges of $\diedges{\translation}$ can be partitioned into directed cycles, and directed paths that have one vertex for which the image is $\blackHole$.
                        \label{translationsAreCyclesOrPaths}
                    \end{proposition}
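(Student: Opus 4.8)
The plan is to extract the claimed structure directly from the fact that $\translation$ is a transformation, i.e., a function that is injective on the vertices it does not send to $\blackHole$; neither the EC nor the SNP property is actually needed for this particular statement, so the result is really about the functional graph of a partial injection. First I would record two degree bounds on $\digraph{\translation}$. Every vertex $\vertex$ has out-degree at most $1$, because its only candidate out-edge is $\diedge{\vertex}{\translation(\vertex)}$, which belongs to $\diedges{\translation}$ exactly when $\translation(\vertex) \neq \blackHole$. Every vertex has in-degree at most $1$, because $\translation(\vertex_1) = \translation(\vertex_2)$ with both images different from $\blackHole$ forces $\vertex_1 = \vertex_2$ by injectivity. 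Thus each vertex has a unique successor when its image is not $\blackHole$, and a unique predecessor when it has one.

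Next I would prove the structural dichotomy that a digraph with all in- and out-degrees at most $1$ has its edge set partitioned into the edge sets of vertex-disjoint directed paths and directed cycles. I would fix an arbitrary diedge and grow it into a maximal directed walk in both directions: forward through unique successors, $\vertex \to \translation(\vertex) \to \translation(\translation(\vertex)) \to \dots$, and backward through unique predecessors $\translation^{-1}$. Since $\vertices$ is finite, the forward walk must stop, either at a vertex whose image is $\blackHole$, or by revisiting a vertex. Because successors and predecessors are unique, the maximal walk through a given diedge is uniquely determined, so these walks are pairwise edge-disjoint and their union is all of $\diedges{\translation}$; this is what makes the collection a genuine partition. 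A walk that terminates at a $\blackHole$-vertex forward and at an in-degree-$0$ vertex backward is a directed path, while a walk that closes up is a directed cycle.

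I would then identify the $\blackHole$-vertices. Partitioning the edge set means that a vertex with image $\blackHole$ and no preimage contributes no diedge and simply does not appear, so every member of the partition has at least one edge. In a cycle every vertex has out-degree $1$, hence no vertex maps to $\blackHole$. In a path $\vertex_0 \to \dots \to \vertex_k$ with $k \geq 1$, each internal vertex $\vertex_i$ with $i < k$ satisfies $\translation(\vertex_i) = \vertex_{i+1} \in \vertices$, whereas the terminal vertex $\vertex_k$ has out-degree $0$, i.e., $\translation(\vertex_k) = \blackHole$; since a connected component of $\digraph{\translation}$ contains every diedge incident to its vertices, this out-degree is $0$ in the whole digraph as well. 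Hence each path has exactly one vertex whose image is $\blackHole$, namely its sink, which is precisely the asserted form.

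The main obstacle I anticipate is the middle step: turning the informal ``follow the arrows'' idea into a genuine partition rather than a mere covering, and in particular excluding a lasso. If the forward walk $\vertex_0 \to \dots \to \vertex_k$ first repeats at $\vertex_k = \vertex_j$, then both $\vertex_{k-1}$ and $\vertex_{j-1}$ would be predecessors of $\vertex_j$; these are distinct since $\vertex_0, \dots, \vertex_{k-1}$ are distinct, so the in-degree bound forces $j = 0$ and the walk closes into a clean cycle rather than a lasso. Once this point is settled, the degree bounds and the $\blackHole$ bookkeeping are routine.
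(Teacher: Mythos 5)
Your proposal is correct and takes essentially the same route as the paper's proof: injectivity of the transformation gives each vertex a unique successor (or $\blackHole$) and at most one predecessor, from which the decomposition into directed cycles and directed paths ending at a $\blackHole$-vertex follows. The paper states this in three terse sentences; you merely make explicit the degree bounds, the exclusion of lassos, and the partition (rather than covering) property that the paper leaves implicit.
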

                    
                    \begin{proof}
                        By injectivity of transformations, any vertex $\vertex_1 \in \vertices$ has an image by $\translation$ which is either a vertex $\vertex_2 \in \vertices$ with no other inverse image, or $\blackHole$.
                        Therefore, every vertex belongs either to a path $\vertex_1 \to \vertex_2 \to \dots \to \blackHole$, or a cycle $\vertex_1 \to \vertex_2 \to \dots \to \vertex_1$.
                        Additionally, the associated digraph restricts the existence of these paths and cycles to paths and cycles that exist in $\edges$.
                    \end{proof}
                    
                    It is also interesting to notice that every translation admits an inverse translation with the same loss:
                    
                    \begin{proposition}
                        Let $\translation \in \translations{\graph}$ be a translation on a graph $\graph = \tuple{\vertices, \edges}$, with associated digraph $\digraph{\translation} = \tuple{\vertices, \diedges{\translation}}$ of adjacency matrix $\adjacency_\translation$.
                        Let us call $\translation^{-1}$ the inverse translation associated with the digraph $\digraph{\translation^{-1}} = \tuple{\vertices, \diedges{\translation^{-1}}}$, with $\diedge{\vertex_1}{\vertex_2} \in \diedges{\translation^{-1}} \Leftrightarrow \diedge{\vertex_2}{\vertex_1} \in \diedges{\translation}$.
                        We have the relation $\loss(\translation) = \loss(\translation^{-1})$.
                        \label{sameLossForInverse}
                    \end{proposition}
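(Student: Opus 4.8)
The plan is to express the loss purely as a count of directed edges in the associated digraph, a quantity that is manifestly invariant under edge reversal. First I would recall that for a transformation $\translation$, a vertex $\vertex \in \vertices$ satisfies $\translation(\vertex) = \blackHole$ exactly when it has no outgoing diedge in $\diedges{\translation}$; equivalently, the map $\vertex \mapsto \diedge{\vertex}{\translation(\vertex)}$ is a bijection between $\condSet{\vertex \in \vertices}{\translation(\vertex) \neq \blackHole}$ and $\diedges{\translation}$. Since this leaves exactly $\graphOrder - \cardinal{\diedges{\translation}}$ vertices mapped to $\blackHole$, I obtain the identity $\loss(\translation) = \graphOrder - \cardinal{\diedges{\translation}}$, which is essentially \propref{lossMatrix} rephrased in terms of the diedge set rather than the entries of $\adjacency_\translation$.

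Next I would apply this same identity to $\translation^{-1}$, giving $\loss(\translation^{-1}) = \graphOrder - \cardinal{\diedges{\translation^{-1}}}$. The defining relation $\diedge{\vertex_1}{\vertex_2} \in \diedges{\translation^{-1}} \Leftrightarrow \diedge{\vertex_2}{\vertex_1} \in \diedges{\translation}$ exhibits edge reversal $\diedge{\vertex_2}{\vertex_1} \mapsto \diedge{\vertex_1}{\vertex_2}$ as a bijection from $\diedges{\translation}$ onto $\diedges{\translation^{-1}}$, so $\cardinal{\diedges{\translation^{-1}}} = \cardinal{\diedges{\translation}}$. Combining the two identities immediately yields $\loss(\translation) = \loss(\translation^{-1})$.

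The one point that needs care is that $\translation^{-1}$ must genuinely be a transformation for $\loss(\translation^{-1})$ to be defined: reversal requires each vertex to have at most one incoming diedge in $\diedges{\translation}$, which is exactly the injectivity of $\translation$ guaranteed by the transformation definition (so both the out-degrees and the in-degrees in $\diedges{\translation}$ are at most $1$, and counting the edges by their sources or by their sinks gives the same total). Equivalently, one can read the result off \propref{translationsAreCyclesOrPaths}: reversal sends directed cycles to directed cycles and directed paths to directed paths, and since the loss equals the number of paths in the decomposition — each contributing exactly one $\blackHole$-image vertex, namely the former source of the path once it is reversed — the count of such vertices is preserved. I expect no substantial obstacle here; the only subtlety is ensuring that the bijection-by-reversal argument never implicitly assumes symmetry of $\adjacency_\translation$, which it does not.
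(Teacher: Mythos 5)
Your argument for the equality itself is correct, and it is essentially the paper's argument: both proofs reduce the loss to a count attached to the digraph --- you count $\cardinal{\diedges{\translation}}$ directly via the bijection between vertices with non-$\blackHole$ image and outgoing diedges, while the paper counts the non-zero entries of $\adjacency_\translation$ via \propref{lossMatrix} --- and both conclude by noting that this count is invariant under reversal, \ie, under transposition of $\adjacency_\translation$. Your check that in-degrees and out-degrees in $\diedges{\translation}$ are both at most $1$ (functionality plus injectivity of $\translation$) is exactly what makes $\translation^{-1}$ a well-defined injective map, and your alternative reading through \propref{translationsAreCyclesOrPaths} (reversal permutes the path/cycle decomposition, and the loss equals the number of paths) is also sound.

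What you leave out, and what the paper spends most of its proof on, is the verification that $\translation^{-1}$ is not merely a transformation but a translation: it is EC because every diedge of $\diedges{\translation^{-1}}$ is a reversed diedge of $\diedges{\translation}$, hence still supported on an edge of $\edges$ (the underlying graph being undirected), and it is SNP because the SNP condition is a biconditional, so it transfers to the inverse map on its domain of non-$\blackHole$ images. This is not needed for the bare equality $\loss(\translation) = \loss(\translation^{-1})$, which only requires $\translation^{-1}$ to be a transformation so that its loss is defined --- a point you do cover. But the proposition's wording calls $\translation^{-1}$ ``the inverse translation'', so translation-hood is part of what is asserted, and the paper relies on it immediately afterwards: \propref{minimalTranslationsExist} cites this proposition and needs $\translation^{-1} \in \translations{\graph}$, not just $\translation^{-1} \in \transformations{\graph}$, for its minimality argument to make sense. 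Note also that if you invoke \propref{lossMatrix} literally for $\translation^{-1}$ rather than rederiving the count as you do, you need the EC property first, since that proposition is stated for EC transformations. These checks are short, so this is an omission rather than an error, but a complete proof should include them.
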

                    
                    \begin{proof}
                        First, let us notice that $\diedges{\translation^{-1}}$ is the exact same set of edges as in $\diedges{\translation}$, but with reverse direction.
                        Therefore, since $\translation$ is EC, it is also the case for $\translation^{-1}$.
                        Additionally, since $\translation$ is SNP, it preserves the existing neighborhoods and does not create additional ones.
                        Therefore, it is also the case for the converse, and $\translation^{-1}$ is thus SNP.
                        From \defref{translationDef}, $\translation^{-1}$ is therefore a translation.
                        Finally, from \propref{lossMatrix}, and noticing that the adjacency matrix associated with $\digraph{\translation^{-1}}$ is the transpose of $\adjacency_\translation$, we conclude.
                    \end{proof}
                    
                    Translations can be given a well-founded relation $\prec$:
                    \begin{equation*}
                        \begin{split}
                            \forall \translation_1, \translation_2 \in \translations{\graph} : (\translation_1 \prec \translation_2) \Leftrightarrow (\loss(\translation_1) > \loss(\translation_2)) \\
                            \wedge (\exists \vertex \in \vertices : \translation_1(\vertex) = \translation_2(\vertex))
                            \;.
                        \end{split}
                    \end{equation*}
                    
                    \begin{proposition}
                        Let $\translation_1, \translation_2, \translation_3 \in \translations{\graph}$. The relation $\prec$ has the following properties:
                        \begin{enumerate}
                            \item It is \emph{irreflexive}, \ie, $\translation_1 \prec \translation_1$ is not true.
                            \item It is \emph{antisymmetric}, \ie, it is not possible to have both $\translation_1 \prec \translation_2$ and $\translation_2 \prec \translation_1$.
                            \item It is \emph{intransitive}, \ie, it is not true that $((\translation_1 \prec \translation_2) \wedge (\translation_2 \prec \translation_3)) \Rightarrow (\translation_1 \prec \translation_3)$.
                        \end{enumerate}
                    \end{proposition}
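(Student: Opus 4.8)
The plan is to treat the three items separately. Parts 1 and 2 are immediate consequences of the fact that $\loss$ takes its values in the totally ordered set $\positiveIntegers$, whereas part 3 is the substantive claim and requires exhibiting an explicit triple of translations.

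For irreflexivity, I would observe that $\translation_1 \prec \translation_1$ would force $\loss(\translation_1) > \loss(\translation_1)$, which no quantity satisfies; the agreement clause is irrelevant here. For antisymmetry, I would assume for contradiction that $\translation_1 \prec \translation_2$ and $\translation_2 \prec \translation_1$ both hold, so that their loss clauses yield $\loss(\translation_1) > \loss(\translation_2)$ and $\loss(\translation_2) > \loss(\translation_1)$ at once, which is impossible. Thus both properties are inherited directly from the strict order on losses, regardless of whether the two translations share a common value.

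The heart of the proposition is intransitivity, which I would establish by producing a counterexample. The guiding remark is that the loss clause alone \emph{is} transitive: if $\translation_1 \prec \translation_2$ and $\translation_2 \prec \translation_3$, then $\loss(\translation_1) > \loss(\translation_2) > \loss(\translation_3)$ already forces $\loss(\translation_1) > \loss(\translation_3)$. Consequently any failure of transitivity must originate in the second clause, i.e. the relation ``$\translation$ and $\translation'$ coincide on at least one vertex,'' which is the genuinely non-transitive ingredient. I therefore need three translations such that $\translation_1$ agrees with $\translation_2$ somewhere and $\translation_2$ agrees with $\translation_3$ somewhere, yet $\translation_1$ and $\translation_3$ disagree at \emph{every} vertex. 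To realize this I would take $\translation_1 = \transformation_\blackHole$, the everywhere-$\blackHole$ translation of \eqref{transformationBlackHole}, of maximal loss $\graphOrder$; let $\translation_2$ be a translation with $0 < \loss(\translation_2) < \graphOrder$; and let $\translation_3$ be lossless. The chain $\graphOrder > \loss(\translation_2) > 0$ settles all loss comparisons. Since $\translation_1$ sends every vertex to $\blackHole$, it agrees with $\translation_2$ precisely on the nonempty set of vertices that $\translation_2$ maps to $\blackHole$, giving $\translation_1 \prec \translation_2$; choosing $\translation_2$ to coincide with $\translation_3$ at some vertex gives $\translation_2 \prec \translation_3$; but a lossless $\translation_3$ never returns $\blackHole$, so $\translation_1(\vertex) = \blackHole \neq \translation_3(\vertex)$ for every $\vertex$, whence $\translation_1 \not\prec \translation_3$, contradicting the transitivity implication. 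A concrete witness is a cycle $C_n$ with $n \geq 3$: take $\translation_3$ to be the rotation sending each vertex to its cyclic successor (lossless by \propref{conditionsECTransformations}), and $\translation_2$ the partial rotation that maps one chosen vertex to $\blackHole$ and every other vertex to its successor, so that $\loss(\translation_2) = 1$ and $\translation_2$ coincides with $\translation_3$ on the remaining $n-1$ vertices.

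The only step demanding care is verifying that this $\translation_2$ is a bona fide translation, i.e. EC and, more delicately, SNP. Because the full rotation is a graph automorphism it preserves edges and non-edges among every pair of vertices, so its restriction to the vertices not sent to $\blackHole$ remains SNP; on a cycle this is a short finite check. I expect no real difficulty here, but on a more general host graph one would have to confirm that redirecting a vertex to $\blackHole$ neither creates nor destroys adjacencies among the surviving vertices, which is the sole point where the construction could be fragile.
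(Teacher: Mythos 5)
Your treatment of irreflexivity and antisymmetry is correct and is essentially the paper's own argument: both properties are inherited from the strict order $>$ on losses, and the agreement clause plays no role. For intransitivity you take a genuinely different route. The paper builds its counterexample on a six-vertex path graph using three translations whose pairwise agreements all occur on genuine directed edges: $\translation_1$ moves one endpoint a single step rightward, $\translation_2$ contains that same edge plus one further rightward edge and one leftward edge at the opposite end, and $\translation_3$ is the full leftward shift; then $\translation_1 \prec \translation_2$ and $\translation_2 \prec \translation_3$ via shared edges, while $\translation_1$ and $\translation_3$ agree nowhere. You instead anchor the chain at $\transformation_\blackHole$ from \eqref{transformationBlackHole} and use a one-vertex-deleted rotation together with the full rotation on a cycle. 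Your guiding remark --- that the loss clause alone is transitive, so any failure of transitivity must come from the agreement clause --- is correct and is a clean way to motivate either construction, and your verification that the deleted rotation remains EC and SNP (as the restriction of a graph automorphism) is sound.

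The one point you should repair is that your first comparison, $\transformation_\blackHole \prec \translation_2$, holds only because you count agreement at $\blackHole$ (both functions send the deleted vertex to $\blackHole$) as satisfying the clause $\exists \vertex \in \vertices : \translation_1(\vertex) = \translation_2(\vertex)$. This is licensed by the definition of $\prec$ as literally written, so your proof is formally valid against that definition. However, the paper itself consistently reads this clause as ``the two translations have a directed edge in common'' --- see the wording of its own proof of this proposition and of \propref{minimalTranslationsExist} --- and under that reading $\transformation_\blackHole$, which induces no edges at all, is comparable to nothing, so your chain breaks at its very first link; the paper's example, by contrast, is immune to this ambiguity. The fix is cheap and stays on your cycle: with $\graphOrder \geq 7$, let $\translation_3$ be the full clockwise rotation, let $\translation_2$ consist of one clockwise edge and one counterclockwise edge chosen far enough apart that SNP holds, and let $\translation_1$ consist of that single counterclockwise edge alone. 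Then $\loss(\translation_1) = \graphOrder - 1 > \loss(\translation_2) = \graphOrder - 2 > \loss(\translation_3) = 0$, consecutive pairs share a genuine edge, $\translation_1$ and $\translation_3$ share neither an edge nor a vertex mapped to $\blackHole$, and intransitivity follows under either reading of the agreement clause.
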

                    
                    \begin{proof}
                        Let us consider the three properties separately:
                        \begin{enumerate}
                            \item By definition, $\translation_1$ is comparable to itself, since there exists at least one edge in common.
                                  Comparison is then made using $<$, which is an irreflexive order on $\integers$.
                            \item In the case where $\nexists \vertex \in \vertices : \translation_1(\vertex) = \translation_2(\vertex)$, then $\translation_1$ and $\translation_2$ are not comparable (noted $\translation_1 \sim \translation_2$).
                                  In the case where such an edge exists, $<$ is an antisymmetric order on $\integers$.
                            \item Let us consider the following graph: 
                                  $$
                                      \begin{tikzpicture}[thick]
                                          \node (0) [draw, circle] at (0,0) {};
                                          \node (1) [draw, circle] at (1,0) {};
                                          \node (2) [draw, circle] at (2,0) {};
                                          \node (3) [draw, circle] at (3,0) {};
                                          \node (4) [draw, circle] at (4,0) {};
                                          \node (5) [draw, circle] at (5,0) {};
                                          \path[]
                                          (0) edge (1)
                                          (1) edge (2)
                                          (2) edge (3)
                                          (3) edge (4)
                                          (4) edge (5);
                                    \end{tikzpicture}
                                  $$
                                  Let $\translation_1, \translation_2, \translation_3$ be the following translations:
                                  $$
                                      \begin{tikzpicture}[thick]
                                          \node (name) at (-1,0) {$\translation_1:$};
                                          \node (0) [draw, circle] at (0,0) {};
                                          \node (1) [fill=black, circle] at (1,0) {};
                                          \node (2) [fill=black, circle] at (2,0) {};
                                          \node (3) [fill=black, circle] at (3,0) {};
                                          \node (4) [fill=black, circle] at (4,0) {};
                                          \node (5) [fill=black, circle] at (5,0) {};
                                          \path[dotted]
                                          (0) edge (1)
                                          (1) edge (2)
                                          (2) edge (3)
                                          (3) edge (4)
                                          (4) edge (5);
                                          \path[->, red]
                                          (0) edge (1);
                                    \end{tikzpicture}
                                    \hspace{1.25cm}
                                  $$
                                  $$
                                      \begin{tikzpicture}[thick]
                                          \node (name) at (-1,0) {$\translation_2:$};
                                          \node (0) [draw, circle] at (0,0) {};
                                          \node (1) [draw, circle] at (1,0) {};
                                          \node (2) [fill=black, circle] at (2,0) {};
                                          \node (3) [fill=black, circle] at (3,0) {};
                                          \node (4) [fill=black, circle] at (4,0) {};
                                          \node (5) [draw, circle] at (5,0) {};
                                          \path[dotted]
                                          (0) edge (1)
                                          (1) edge (2)
                                          (2) edge (3)
                                          (3) edge (4)
                                          (4) edge (5);
                                          \path[->, red]
                                          (0) edge (1)
                                          (1) edge (2)
                                          (5) edge (4);
                                    \end{tikzpicture}
                                    \hspace{1.25cm}
                                  $$
                                  $$
                                      \begin{tikzpicture}[thick]
                                          \node (name) at (-1,0) {$\translation_3:$};
                                          \node (0) [fill=black, circle] at (0,0) {};
                                          \node (1) [draw, circle] at (1,0) {};
                                          \node (2) [draw, circle] at (2,0) {};
                                          \node (3) [draw, circle] at (3,0) {};
                                          \node (4) [draw, circle] at (4,0) {};
                                          \node (5) [draw, circle] at (5,0) {};
                                          \path[dotted]
                                          (0) edge (1)
                                          (1) edge (2)
                                          (2) edge (3)
                                          (3) edge (4)
                                          (4) edge (5);
                                          \path[->, red]
                                          (1) edge (0)
                                          (2) edge (1)
                                          (3) edge (2)
                                          (4) edge (3)
                                          (5) edge (4);
                                    \end{tikzpicture}
                                    \hspace{1.25cm}
                                  $$
                                  In this example, $\translation_1 \prec \translation_2$ and $\translation_2 \prec \translation_3$.
                                  However, $\translation_1$ and $\translation_3$ have no edge in common, thus $\translation_1 \sim \translation_3$.
                                  Still, note that $\translation_3^{-1}$, the inverse translation of $\translation_3$, is comparable with $\translation_1$ and $\translation_2$.
                                  Therefore, $\prec$ is not an \emph{antitransitive} relation.
                        \end{enumerate}
                    \end{proof}
                    
                    Using this relation, we define minimal translations:
                    
                    \begin{definition}[Minimal translation]
                        A translation $\translation_1 \in \translations{\graph}$ is \emph{minimal} if there is no $\translation_2 \in \translations{\graph}$ such that $\translation_1 \prec \translation_2$, \ie, if it minimizes the loss.
                    \end{definition}
                    
                    Indeed, lossless translations $\losslessTranslation \in \losslessTranslations{\graph}$ are necessarily minimal.
                    Additionally, we define pseudo-minimal translations:
                    
                    \begin{definition}[Pseudo-minimal translation]
                        \emph{Pseudo-minimal} translations are defined inductively.
                        A translation $\translation_1 \in \translations{\graph}$ is pseudo-minimal if one of the following holds:
                        \begin{enumerate}
                            \item $\translation_1$ is minimal;
                            \item Any translation $\translation_2 \in \translations{\graph}$ such that $\translation_1 \prec \translation_2$ is not pseudo-minimal.
                        \end{enumerate}
                    \end{definition}
                    
                    \begin{proposition}
                        Any graph $\graph = \tuple{\vertices, \edges}$ admits at least one minimal translation $\translation \in \translations{\graph}$.
                        Also, $\translation^{-1}$ is minimal.
                        \label{minimalTranslationsExist}
                    \end{proposition}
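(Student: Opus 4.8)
The plan is to avoid reasoning about minimal elements of $\prec$ directly (which is delicate, since $\prec$ is intransitive) and instead to exhibit a translation of minimal \emph{loss}, then observe that any loss-minimizer is automatically minimal with respect to $\prec$.

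First I would record that $\translations{\graph}$ is never empty: as already noted after \defref{translationDef}, the map $\transformation_\blackHole$ of \eqref{transformationBlackHole} lies in $\translations{\graph}$ for every graph (its EC and SNP conditions hold vacuously, as $\vertices_{\not\blackHole} = \emptyset$), and it has loss $\graphOrder$. Since $\loss$ takes values in the finite set $\intInterval{0, \graphOrder}$, the set $\set{\loss(\translation) : \translation \in \translations{\graph}}$ is a nonempty set of nonnegative integers and therefore has a least element. I would then fix a translation $\translation \in \translations{\graph}$ attaining this minimum.

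Next I would verify that this $\translation$ is minimal. Suppose not: then there is a $\translation_2 \in \translations{\graph}$ with $\translation \prec \translation_2$. By the definition of $\prec$, the first conjunct forces $\loss(\translation_2) < \loss(\translation)$, which contradicts the choice of $\translation$ as a loss-minimizer. Hence no such $\translation_2$ exists and $\translation$ is minimal. The point worth stressing here is that minimality with respect to $\prec$ is strictly weaker than minimizing the loss, because $\prec$ only compares translations sharing a common diedge; it is exactly this gap that makes a loss-minimizer a convenient witness, since it is minimal for the trivial reason that nothing has smaller loss at all.

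Finally, for the second claim I would invoke \propref{sameLossForInverse}: the inverse $\translation^{-1}$ is again a translation and satisfies $\loss(\translation^{-1}) = \loss(\translation)$. Consequently $\translation^{-1}$ also attains the minimal loss, and repeating the contradiction argument of the previous paragraph shows that $\translation^{-1}$ is minimal as well. I do not expect a genuine obstacle in this proof; the only subtlety is resisting the temptation to argue about minimal elements of the intransitive relation $\prec$ abstractly, whereas grounding everything in the integer-valued, bounded quantity $\loss$ sidesteps that difficulty entirely.
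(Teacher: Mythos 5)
Your proof is correct, but it takes a genuinely different route from the paper's. You ground everything in global loss-minimization: since $\translations{\graph} \ni \transformation_\blackHole$ is nonempty and $\loss$ takes integer values in $\intInterval{0, \graphOrder}$, a loss-minimizer $\translation$ exists, and it is $\prec$-minimal for the trivial reason that $\translation \prec \translation_2$ would force $\loss(\translation_2) < \loss(\translation)$; the inverse is handled by noting, via \propref{sameLossForInverse}, that $\translation^{-1}$ is a translation of equal loss, hence also a loss-minimizer, hence minimal. The paper instead constructs a maximal ascending $\prec$-chain starting from a single-edge translation (after splitting off the edgeless case), invokes well-foundedness of $<$ on the decreasing losses to conclude the chain is finite, and takes its last element; for the inverse it uses an edge-reversal argument, showing that $\translation_\j^{-1} \prec \translation_\k$ would imply $\translation_\j \prec \translation_\k^{-1}$ and contradict minimality of $\translation_\j$. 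Your argument is shorter and avoids two delicate points: the chain bookkeeping, and the fact that reversing diedges preserves the ``shared image'' conjunct of $\prec$ (which is slightly subtle when the shared image is $\blackHole$). What the paper's version buys in exchange is strictly more information: the chain construction produces a minimal translation lying $\prec$-above any prescribed starting translation (which is what the inductive definition of pseudo-minimal translations later exploits), and the edge-reversal argument shows that the inverse of \emph{every} minimal translation is minimal --- a stronger fact than yours, since, as you yourself point out, $\prec$-minimality is strictly weaker than loss-minimization, so there may be minimal translations that your argument's inverse step does not reach.
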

                    
                    \begin{proof}
                        In order to show that a minimal translation exists, let us study the following cases:
                        \begin{enumerate}
                            \item In the case where $\cardinal{\edges} = 0$, the only possible transformation is $\transformation_\blackHole$ \eqref{transformationBlackHole}, which is thus minimal;
                            \item In the more general case, let us consider an edge $\edge{\vertex_1}{\vertex_2} \in \edges$.
                                  The function $\translation_1$ such that $\translation_1(\vertex_1) = \vertex_2$, and $\forall \vertex_3 \neq \vertex_1 : \translation_1(\vertex_3) = \blackHole$ is obviously a translation.
                                  Now, consider a maximal sequence $(\translation_\i)_\i$ of translations of which first element is $\translation_1$ and such that $\forall \i: \translation_\i \prec \translation_{\i+1}$.
                                  This sequence is necessarily finite, since $\loss(\translation_\i)$ decreases and $<$ is a well-founded order on $\integers$.
                                  By definition, the last element $\translation_\j$ of this sequence is a minimal translation.
                                  
                                  Now, let us show that $\translation_\j^{-1}$ --- the inverse translation as defined in \propref{sameLossForInverse} --- is also minimal.
                                  From \propref{sameLossForInverse}, we have $\loss(\translation_\j) = \loss(\translation_\j^{-1})$.
                                  Now, let us imagine that there exists $\translation_\k$ such that $\translation_\j^{-1} \prec \translation_\k$.
                                  Using the same reasoning as above, and noticing that $\translation_\j^{-1}$ and $\translation_\k$ share at least one edge, we obtain that $\translation_\j \prec \translation_\k^{-1}$.
                                  Since $\translation_\j$ is minimal, we reach a contradiction.
                                  As a consequence, both $\translation_\j$ and $\translation_\j^{-1}$ are minimal translations.
                                  It is interesting to notice that a special case occurs when $\translation_\j$ is a perfect matching between all vertices in $\vertices$.
                                  In this situation, we have $\translation_\j = \translation_\j^{-1}$.
                        \end{enumerate}
                    \end{proof}
                    
                    To sum up the various sets we introduced in this section, \figref{vennSets} presents the corresponding Venn diagram.

                    \begin{figure}
                        \centering
                        \scalebox{0.9}
                        {
                            \begin{tikzpicture}
                                \draw (0,0) ellipse (4cm and 2cm);
                                \draw (-1.3cm,0) ellipse (2.3cm and 1.3cm);
                                \draw (1.5cm,0) ellipse (2.0cm and 1.3cm);
                                \draw (-1.3cm,0) ellipse (1.8cm and 0.8cm);
                                \node (1) [fill=white] at (0, 2cm) {$\transformations{\graph}$};
                                \node (2) [fill=white] at (-1.3cm, 1.3cm) {$\WNP{\graph}$};
                                \node (3) [fill=white] at (-1.3cm, 0.8cm) {$\SNP{\graph}$};
                                \node (4) [fill=white] at (1.5cm, 1.3cm) {$\EC{\graph}$};
                                \node (5) [fill=white] at (0, 0) {$\translations{\graph}$};
                            \end{tikzpicture}
                        }
                        \caption{Venn diagram summarizing the various types of transformations introduced in this section.}
                        \label{vennSets}
                    \end{figure}

                \subsection{Isometries on graphs}
                    
                    Translations on Euclidean spaces are isometries.
                    When it comes to graphs, we also want the translations to be distance-preserving functions.
                    However, since in the general case there is no Euclidean space associated with the graph, we consider here the geodesic distance $\geodesic$.
                    
                    \begin{definition}[Isometry on a graph]
                        A transformation $\transformation \in \transformations{\graph}$ on a graph $\graph = \tuple{\vertices, \edges}$ is an \emph{isometry} if
                        \begin{equation*}
                            \begin{split}
                                \forall \vertex_1, \vertex_2 \in \vertices_{\not\blackHole} : \geodesic(\vertex_1, \vertex_2) = \geodesic(\transformation(\vertex_1), \transformation(\vertex_2))
                                \;.
                            \end{split}
                        \end{equation*}
                        We denote the set of isometries on $\graph$ by $\ISO{\graph}$, and the set of lossless isometries on $\graph$ by $\losslessISO{\graph}$.
                    \end{definition}
                    
                    Examples of isometries are the translations presented in \figref{petersenTranslations}.
                    
                    \begin{proposition}
                        Let $\graph = \tuple{\vertices, \edges}$ be a graph.
                        We have that $\losslessSNP{\graph} \subset \losslessISO{\graph}$.
                        \label{losslessSNPAreISO}
                    \end{proposition}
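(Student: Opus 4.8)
The plan is to prove the inclusion by showing that any lossless SNP transformation preserves the geodesic distance $\geodesic$. Observe first that the reverse inclusion is immediate: a lossless isometry maps an edge --- i.e.\ a pair of vertices at geodesic distance $1$ --- to a pair again at distance $1$, and conversely, so it satisfies the SNP condition. Consequently the statement is in fact an equality, but it suffices to establish $\losslessSNP{\graph} \subseteq \losslessISO{\graph}$, which is the non-trivial direction.

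I would begin with the structural observation that a lossless SNP transformation $\losslessTransformation$ is a bijection from $\vertices$ to $\vertices$ (lossless transformations being bijective) whose adjacency equivalence $\edge{\vertex_1}{\vertex_2} \in \edges \Leftrightarrow \edge{\losslessTransformation(\vertex_1)}{\losslessTransformation(\vertex_2)} \in \edges$ (\defref{snpDef}) is symmetric in its two implications. Hence the inverse map $\losslessTransformation^{-1}$, which exists and is itself lossless because $\losslessTransformation$ is a bijection on $\vertices$, is again a lossless SNP transformation. This symmetry is what will let me run the same argument twice.

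The core step is the bound $\geodesic(\losslessTransformation(\vertex_1), \losslessTransformation(\vertex_2)) \leq \geodesic(\vertex_1, \vertex_2)$ for all $\vertex_1, \vertex_2 \in \vertices$. I would take a shortest path $\vertex_1 = \graphPath_0, \graphPath_1, \dots, \graphPath_\k = \vertex_2$ realizing $\geodesic(\vertex_1, \vertex_2) = \k$, so that each $\edge{\graphPath_{\i}}{\graphPath_{\i+1}} \in \edges$; the forward implication of SNP then gives $\edge{\losslessTransformation(\graphPath_{\i})}{\losslessTransformation(\graphPath_{\i+1})} \in \edges$ for every $\i$. Thus the images form a walk of length $\k$ from $\losslessTransformation(\vertex_1)$ to $\losslessTransformation(\vertex_2)$, which bounds their geodesic distance by $\k$. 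Applying this very bound to the lossless SNP transformation $\losslessTransformation^{-1}$ at the pair $\losslessTransformation(\vertex_1), \losslessTransformation(\vertex_2)$ yields the reverse inequality, and combining the two gives the equality $\geodesic(\losslessTransformation(\vertex_1), \losslessTransformation(\vertex_2)) = \geodesic(\vertex_1, \vertex_2)$, i.e.\ the isometry property; losslessness is inherited by hypothesis.

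The one point requiring care is the case of a disconnected $\graph$, where $\geodesic$ may be infinite. Here I would note that the walk-image argument shows $\losslessTransformation$ maps each connected component onto a connected component, so two vertices lie at finite distance if and only if their images do; the two inequalities above then still combine into an equality, with both sides infinite exactly when the vertices sit in distinct components. This bookkeeping on components is the main thing to get right --- the remainder is just the standard fact that a graph automorphism is an isometry for the shortest-path metric.
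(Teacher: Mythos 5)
Your proof is correct, and it improves on the paper's own argument at its weakest point. The paper proceeds by a case analysis on $\geodesic(\vertex_1, \vertex_2)$: when the distance is infinite it derives a contradiction by pulling a hypothetical path between the images back through the reverse implication of \defref{snpDef}, and when a shortest path exists it pushes that path forward and then simply \emph{asserts} that the image path is again a shortest path ``since $\losslessTransformation$ does not create nor remove neighborhoods.'' That last assertion is precisely the claim that needs justification, and it is implicitly using bijectivity to pull back any shorter path between the images. Your organization makes this rigorous: you isolate a single one-sided lemma (the image of a shortest path under the forward SNP implication is a walk, hence $\geodesic(\losslessTransformation(\vertex_1), \losslessTransformation(\vertex_2)) \leq \geodesic(\vertex_1, \vertex_2)$), prove that $\losslessTransformation^{-1}$ is itself a lossless SNP transformation, and apply the lemma twice to get the equality, with the disconnected case absorbed into the same two inequalities via the component-mapping observation. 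What the paper's route buys is brevity and a self-contained treatment of the infinite-distance case; what yours buys is that the ``shortest path maps to shortest path'' step is actually proved rather than asserted, and as a bonus your opening remark that lossless isometries are conversely SNP shows the inclusion is in fact an equality $\losslessSNP{\graph} = \losslessISO{\graph}$, which is strictly more than the proposition claims.
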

                    
                    \begin{proof}
                        Let $\vertex_1, \vertex_2 \in \vertices$.
                        For a lossless SNP transformation $\losslessTransformation \in \losslessSNP{\graph}$, we distinguish the following cases:
                        \begin{enumerate}
                            \item There is no path between vertices $\vertex_1$ and $\vertex_2$, \ie, $\geodesic(\vertex_1, \vertex_2)$ is infinite.
                                  This corresponds to the case where they belong to different connected components.
                                  By contradiction, let $\geodesic(\losslessTransformation(\vertex_1), \losslessTransformation(\vertex_2))$ be finite.
                                  This implies that there exists a path $\losslessTransformation(\vertex_1) \rightarrow \losslessTransformation(\vertex_{\i_1}) \rightarrow \losslessTransformation(\vertex_{\i_2}) \rightarrow \dots \rightarrow \losslessTransformation(\vertex_{\i_\k}) \rightarrow \losslessTransformation(\vertex_2)$ in the graph.
                                  Since $\losslessTransformation$ is SNP, we have that $\vertex_1 \rightarrow \vertex_{\i_1} \rightarrow \vertex_{\i_2} \rightarrow \dots \rightarrow \vertex_{\i_\k} \rightarrow \vertex_2$ is also a path in the graph, therefore we reach a contradiction.
                                  As a consequence, $\geodesic(\vertex_1, \vertex_2)$ and $\geodesic(\losslessTransformation(\vertex_1), \losslessTransformation(\vertex_2))$ are both infinite.
                            \item A shortest path $\vertex_1 \rightarrow \vertex_{\i_1} \rightarrow \vertex_{\i_2} \rightarrow \dots \rightarrow \vertex_{\i_\k} \rightarrow \vertex_2$ exists.
                                  Since $\losslessTransformation$ is lossless, $\losslessTransformation(\vertex_1) \rightarrow \losslessTransformation(\vertex_{\i_1}) \rightarrow \losslessTransformation(\vertex_{\i_2}) \rightarrow \dots \rightarrow \losslessTransformation(\vertex_{\i_\k}) \rightarrow \losslessTransformation(\vertex_2)$ is also a path (no intermediar vertex has its image equal to $\blackHole$).
                                  Additionally, $\losslessTransformation$ being SNP, it does not create nor remove neighborhoods, so $\losslessTransformation(\vertex_1) \rightarrow \losslessTransformation(\vertex_{\i_1}) \rightarrow \losslessTransformation(\vertex_{\i_2}) \rightarrow \dots \rightarrow \losslessTransformation(\vertex_{\i_\k}) \rightarrow \losslessTransformation(\vertex_2)$ is also a shortest path from $\losslessTransformation(\vertex_1)$ to $\losslessTransformation(\vertex_2)$.
                                  Therefore, we have $\geodesic(\vertex_1, \vertex_2) = \geodesic(\losslessTransformation(\vertex_1), \losslessTransformation(\vertex_2))$.
                        \end{enumerate}
                    \end{proof}
                
                    \begin{corollary}
                        Let $\graph = \tuple{\vertices, \edges}$ be a graph.
                        $\losslessTranslations{\graph} \subset \losslessISO{\graph}$.
                    \end{corollary}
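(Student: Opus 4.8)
The plan is to obtain this corollary as an immediate consequence of \propref{losslessSNPAreISO}, exploiting the fact that translations are, by definition, a restricted subclass of SNP transformations. First I would recall \defref{translationDef}, which states that a translation $\translation \in \translations{\graph}$ is a transformation that is simultaneously EC and SNP. Specializing to the lossless case, a lossless translation is therefore in particular a lossless SNP transformation; dropping the edge-constraint requirement only enlarges the class. This yields the inclusion $\losslessTranslations{\graph} \subseteq \losslessSNP{\graph}$ with essentially no computation.

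Next I would chain this with the already-established strict inclusion $\losslessSNP{\graph} \subset \losslessISO{\graph}$ from \propref{losslessSNPAreISO}. Composing the two relations gives $\losslessTranslations{\graph} \subseteq \losslessSNP{\graph} \subset \losslessISO{\graph}$, and hence $\losslessTranslations{\graph} \subset \losslessISO{\graph}$, which is exactly the claim. Intuitively, the content is entirely carried by \propref{losslessSNPAreISO}: being lossless and strongly neighborhood-preserving is what forces geodesic distances to be preserved, and translations inherit this property simply because they are SNP.

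The only subtle point worth verifying is that the resulting inclusion is genuinely strict, as written with $\subset$. Since \propref{losslessSNPAreISO} already gives a strict inclusion, there exists some $\transformation \in \losslessISO{\graph} \setminus \losslessSNP{\graph}$; because $\losslessTranslations{\graph} \subseteq \losslessSNP{\graph}$, that same $\transformation$ witnesses $\losslessISO{\graph} \setminus \losslessTranslations{\graph} \neq \emptyset$, so strictness is preserved through the chain. I do not anticipate any real obstacle here: the argument is a two-step inclusion chase, and the hardest reasoning (the infinite-distance and shortest-path cases) has already been discharged in the proof of \propref{losslessSNPAreISO}.
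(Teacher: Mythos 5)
Your proof is correct and takes essentially the same route as the paper: the paper's entire proof is the one-line inclusion chain $\losslessTranslations{\graph} \subset \losslessSNP{\graph} \subset \losslessISO{\graph}$, i.e., translations are SNP by definition and then \propref{losslessSNPAreISO} applies. Your extra paragraph on preserving strictness through the chain only makes explicit what the paper leaves implicit.
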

                    
                    \begin{proof}
                        $\losslessTranslations{\graph} \subset \losslessSNP{\graph} \subset \losslessISO{\graph}$.
                    \end{proof}
                    
                    Note that \propref{losslessSNPAreISO} holds for lossless SNP transformations only.
                    In the more general case of SNP transformations $\transformation \in \SNP{\graph}$, having a vertex that has its image equal to $\blackHole$ may cause $\geodesic(\transformation(\vertex_1), \transformation(\vertex_2))$ and $\geodesic(\vertex_1, \vertex_2)$ to be different.
                    As an example, \figref{SNPAreNotISO} depicts a SNP transformation $\transformation \in \SNP{\graph}$ for which $\geodesic(\vertex_1, \vertex_2) < \geodesic(\transformation(\vertex_1), \transformation(\vertex_2))$.
                    When considering the inverse transformation $\transformation^{-1}$, we have $\geodesic(\vertex_1, \vertex_2) > \geodesic(\transformation^{-1}(\vertex_1), \transformation^{-1}(\vertex_2))$.
                    
                    \begin{figure}
                        \centering
                        \begin{tikzpicture}[thick]
                          \node (1) [draw, circle] at (1, 0) {};
                          \node (2) [draw, circle] at (2, 1) {};
                          \node (3) [draw, circle] at (2, 0) {$\vertex_1$};
                          \node (4) [draw, circle] at (3, 0) {};
                          \node (5) [fill=black, circle] at (4.5, 0) {};
                          \node (6) [draw, circle] at (6, 0) {};
                          \node (7) [draw, circle] at (7, 1) {};
                          \node (8) [draw, circle] at (7, 0) {$\vertex_2$};
                          \node (9) [draw, circle] at (8, 0) {};
                          \path[dotted]
                          (4) edge (5)
                          (5) edge (6);
                          \path[->, red]
                          (1) edge (2)
                          (3) edge (1)
                          (2) edge (4)
                          (4) edge (3)
                          (6) edge (8)
                          (8) edge (9)
                          (9) edge (7)
                          (7) edge (6);
                        \end{tikzpicture}
                        \caption
                        {
                            Example of a SNP transformation $\transformation \in \SNP{\graph}$ with a non-zero loss that is not an isometry.
                            In this example, $\geodesic(\vertex_1, \vertex_2) = 4$, $\geodesic(\transformation(\vertex_1), \transformation(\vertex_2)) = 6$ and $\geodesic(\transformation^{-1}(\vertex_1), \transformation^{-1}(\vertex_2)) = 2$.
                        }
                      \label{SNPAreNotISO}
                    \end{figure}
                    
                    Still, it is interesting to note that some transformations with a non-zero loss are isometries.
                    Examples of such are depicted in \figref{petersenTranslations}.

            \section{Results and proofs}
            \label{resultsProofs}
                
                In this section, we are interested in identifying translations on arbitrary graphs.
                After introducing some generic results, we study the particular case of the grid and torus graphs, and show that some particular translations of ours match the underlying Euclidean ones.
                
                \subsection{Results on generic graphs}
                    
                    As stated before, the function $\transformation_\blackHole$ introduced in \eqref{transformationBlackHole} is a translation for any graph.
                    However, it is not very interesting, since it destroys all signal information when translating it.
                    Therefore, we need to identify more complex translations that keep most of the signal entries.
                    As a consequence, we are particularly interested in minimal translations.

                    Before trying to identify translations, let us provide bounds on the number of translations:
                    
                    \begin{proposition}
                        A graph with order $\graphOrder$ cannot admit more than
                        $$
                            \sum_{\k = 0}^\graphOrder \frac{1}{(\graphOrder-\k)!} \sum_{\j = 0}^\k (-1)^\j \binom{\k}{\j} (\graphOrder - \j)!
                        $$
                        translations.
                        This number is reached for the complete graph $\graph_c = \tuple{\vertices_c, \edges_c}$.
                        \label{nbTranslations}
                    \end{proposition}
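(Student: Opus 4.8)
The plan is to reduce the count to a purely combinatorial one carried out on the complete graph, exploiting the fact that $\graph_c$ imposes the weakest possible constraints on a translation. Concretely, I would first prove the inequality $\cardinal{\translations{\graph}} \leq \cardinal{\translations{\graph_c}}$ for every graph $\graph = \tuple{\vertices, \edges}$ of order $\graphOrder$, and then enumerate $\translations{\graph_c}$ exactly by inclusion--exclusion.

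For the reduction I would work on the common vertex set $\vertices = \intInterval{1, \graphOrder}$ and show that every $\translation \in \translations{\graph}$ is, as a function $\vertices \cup \set{\blackHole} \to \vertices \cup \set{\blackHole}$, also an element of $\translations{\graph_c}$. A translation is an injective partial map fixing $\blackHole$; being EC on $\graph$ forces each non-$\blackHole$ vertex to be sent to a neighbor, hence $\translation(\vertex) \neq \vertex$, which is precisely the edge constraint on $\graph_c$ (whose only missing pairs are the diagonal self-loops). The SNP constraint is automatically satisfied on $\graph_c$: any two distinct vertices are adjacent in $\edges_c$, and injectivity guarantees the two images are distinct, hence adjacent as well, so the required biconditional holds trivially. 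Since distinct functions remain distinct under this identification, $\translations{\graph} \hookrightarrow \translations{\graph_c}$, which both proves the bound and shows it is attained at $\graph_c$.

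It then remains to count $\translations{\graph_c}$. By the same observation, a function is a translation on $\graph_c$ if and only if it is a partial injection of $\vertices_c$ with no fixed point. I would enumerate these by stratifying on the number $\k$ of vertices whose image lies in $\vertices_c$ (so $\graphOrder - \k$ vertices map to $\blackHole$), choosing this active domain, and then counting injections of it into $\vertices_c$ that avoid the $\k$ forbidden diagonal images. The latter is a standard inclusion--exclusion: forcing $\j$ of the prescribed fixed points and freely completing the remaining partial injection contributes $\binom{\k}{\j}(\graphOrder-\j)!/(\graphOrder-\k)!$, so the number of fixed-point-free injections from a fixed $\k$-set equals
$$
    \frac{1}{(\graphOrder-\k)!} \sum_{\j=0}^\k (-1)^\j \binom{\k}{\j} (\graphOrder-\j)!
    \;.
$$
Summing this over all admissible choices of the active domain and over $\k$ yields the claimed expression.

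I expect the reduction step to be routine once one notices that EC and SNP both degenerate on a graph in which every pair of vertices is an edge; the main obstacle is the combinatorial enumeration, where care is needed to set up the inclusion--exclusion for a \emph{partial} rather than total injection, and to correctly account for the choice of the active domain before the alternating sum is collected into closed form.
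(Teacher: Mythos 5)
Your reduction step is sound and is essentially the paper's: since $\edges \subset \edges_c$, a translation on any $\graph$ of order $\graphOrder$ is EC on $\graph_c$ (equivalently, EC on a simple graph already forbids fixed points), and SNP degenerates on the complete graph by injectivity, so $\translations{\graph} \subseteq \translations{\graph_c}$ and the bound is attained at $\graph_c$. No issue there.

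The genuine gap is in the final accounting of the enumeration step. You correctly stratify the translations on $\graph_c$ by the size $k$ of the active domain, and you correctly compute, by inclusion--exclusion, that the number of fixed-point-free injections from a \emph{fixed} $k$-set into $\vertices_c$ is $\frac{1}{(\graphOrder-k)!}\sum_{j=0}^{k}(-1)^j\binom{k}{j}(\graphOrder-j)!$. But your last sentence then claims that ``summing over all admissible choices of the active domain and over $k$'' yields the stated expression --- it does not. Summing over the $\binom{\graphOrder}{k}$ possible active domains multiplies each term by $\binom{\graphOrder}{k}$, so your own setup produces $\sum_{k=0}^{\graphOrder}\binom{\graphOrder}{k}\frac{1}{(\graphOrder-k)!}\sum_{j=0}^{k}(-1)^j\binom{k}{j}(\graphOrder-j)!$, which differs from the proposition's formula as soon as $\graphOrder \geq 2$: for $\graphOrder = 3$ it gives $1 + 3\cdot 2 + 3\cdot 3 + 2 = 18$, versus $1+2+3+2=8$ for the stated sum, and direct enumeration of fixed-point-free partial injections on three mutually adjacent vertices indeed yields $18$. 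So the concluding equality you assert is false. It is worth noting that the paper's own proof silently does what you avoided: it identifies the translations of loss $\graphOrder - k$ with the fixed-point-free injections ``of $k$ elements to $\graphOrder$,'' implicitly fixing \emph{which} $k$ vertices survive and thereby dropping the factor $\binom{\graphOrder}{k}$; under the paper's definitions distinct active domains give distinct functions, so that factor cannot be dropped. Your more careful bookkeeping thus exposes a mismatch with the closed form in the statement itself; to make your argument self-consistent you must either keep the binomial factor (and conclude a formula different from the proposition's), or justify ignoring the choice of active domain, which the definitions do not permit.
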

                    
                    \begin{proof}
                        Every EC transformation on $\graph_c$ is necessary SNP, since all vertices are pairwise linked and therefore share the same neighborhood.
                        However, not every transformation on such graph is EC, since transformations can map vertices to themselves, and we consider simple graphs only.
                        Therefore, for a fixed loss $\graphOrder - \k$ ($\k \leq \graphOrder$), the set of translations of loss $\graphOrder - \k$ is exactly the set of injective functions that have no fixed points of $\k$ elements to $\graphOrder$.
                        The cardinal of such a set is given by the solution of the \emph{$(\graphOrder, \k)$-matching problem} in \cite{10.2307/2689812} as follows:
                        \begin{equation}
                            \frac{1}{(\graphOrder-\k)!} \sum_{\j = 0}^\k (-1)^\j \binom{\k}{\j} (\graphOrder - \j)!
                            \;.
                            \label{nkMatching}
                        \end{equation}
                        By summing for every possible value of $\k$, corresponding to the number of vertices that have an image different to $\blackHole$, we obtain the number of translations on $\graph_c$.
                        Then, note that any graph $\graph = \tuple{\vertices, \edges}$ of order $\graphOrder$ has its edges $\edges \subset \edges_c$.
                        As a consequence, since any EC transformation is a translation on $\graph_c$, it follows that any translation on $\graph$ is also a translation on $\graph_c$.
                        Therefore, a graph of order $\graphOrder$ cannot admit more translations than the complete graph.
                    \end{proof}
                    
                    This characterization of the number of translations gives us the following result on the number of minimal translations:
                    
                    \begin{proposition}
                        A graph of order $\graphOrder$ cannot admit more than
                        $$
                            \graphOrder! \sum_{\j = 0}^\graphOrder \frac{(-1)^\j}{\j!}
                        $$
                        minimal translations.
                        This number is reached for the complete graph $\graph_c = \tuple{\vertices_c, \edges_c}$.
                        \label{nbMinTranslations}
                    \end{proposition}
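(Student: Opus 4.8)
The plan is to follow the template of \propref{nbTranslations}: single out the complete graph $\graph_c = \tuple{\vertices_c, \edges_c}$ as the extremal case, count its minimal translations exactly, and then transfer the resulting bound to an arbitrary graph through the inclusion $\edges \subset \edges_c$. The organising remark is that the target quantity $\graphOrder! \sum_{\j=0}^\graphOrder \frac{(-1)^\j}{\j!}$ is exactly the number of derangements of $\graphOrder$ elements, that is, the number of fixed-point-free permutations of $\vertices_c$, which is precisely $\cardinal{\losslessTranslations{\graph_c}}$.

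First I would determine the minimal translations of $\graph_c$. From the proof of \propref{nbTranslations}, the translations of $\graph_c$ are exactly the injective, fixed-point-free partial maps on $\vertices_c$, and the lossless ones are exactly the derangements. The crucial step is to prove that on $\graph_c$ a translation is minimal \emph{if and only if} it is lossless. The forward implication holds on every graph, as lossless translations minimise the loss. For the converse I would take any translation $\translation$ with $\loss(\translation) \geq 1$ and produce a lossless $\translation'$ with $\translation \prec \translation'$: by \propref{translationsAreCyclesOrPaths} the oriented edges of $\translation$ split into directed cycles and directed paths ending at $\blackHole$, and because all vertices of $\graph_c$ are mutually adjacent I can close every path into a cycle --- linking a path tail to a path head, and splicing any length-zero path (isolated vertex) into an existing cycle --- to obtain a derangement $\translation'$ that retains at least one oriented edge of $\translation$. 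Then $\loss(\translation') = 0 < \loss(\translation)$ and $\translation(\vertex) = \translation'(\vertex)$ for some $\vertex$, so $\translation \prec \translation'$ and $\translation$ is not minimal. Guaranteeing that the closure creates no fixed point and keeps a shared edge, notably in the isolated-vertex case, is the one delicate point here.

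Having identified minimality with losslessness on $\graph_c$, the count follows from \eqref{nkMatching} by taking $\k = \graphOrder$, i.e. loss $\graphOrder - \k = 0$: the number of minimal translations of $\graph_c$ equals $\frac{1}{0!}\sum_{\j=0}^\graphOrder (-1)^\j \binom{\graphOrder}{\j}(\graphOrder - \j)! = \graphOrder! \sum_{\j=0}^\graphOrder \frac{(-1)^\j}{\j!}$, which is the announced value.

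It remains to derive the upper bound for an arbitrary graph $\graph = \tuple{\vertices, \edges}$ of order $\graphOrder$. Here $\edges \subset \edges_c$ guarantees that every translation of $\graph$ is also a translation of $\graph_c$, but I expect this to be where the main obstacle lies: minimality is relative to the ambient set of translations, which is strictly richer in $\graph_c$, so a minimal translation of $\graph$ need not remain minimal in $\graph_c$. The plan is to attach to each minimal translation of $\graph$ a lossless translation of $\graph_c$, obtained by closing its cycle/path decomposition exactly as above, and to argue that this assignment can be made injective; the number of minimal translations of $\graph$ is then at most $\cardinal{\losslessTranslations{\graph_c}}$, the number of derangements. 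Proving injectivity of this extension is the heart of the argument, while equality for $\graph_c$ is immediate since there the minimal translations coincide with the derangements.
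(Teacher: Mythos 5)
Your first step --- on the complete graph, minimal translations are exactly the lossless ones, and these are counted by specializing \eqref{nkMatching} to $\k = \graphOrder$, giving the derangement number --- is correct and is essentially the paper's own argument; the paper merely builds the dominating lossless translation more directly, by extending any single diedge of a lossy translation into a Hamiltonian cycle of $\graph_c$, rather than by closing and splicing the cycle/path decomposition of \propref{translationsAreCyclesOrPaths}. (Both versions share the same benign blind spot concerning $\transformation_\blackHole$, which has no diedge to preserve.) The divergence, and the problem, lies entirely in the transfer to an arbitrary graph $\graph$.

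You correctly observe that minimality does not transfer from $\graph$ to $\graph_c$, and that the count therefore hinges on exhibiting an \emph{injective} assignment of minimal translations of $\graph$ to derangements of $\vertices$. You defer this injectivity as ``the heart of the argument'', and this is a genuine, unfixable gap: no such injective assignment can exist. Consider the path graph on three vertices $a - b - c$. It has no lossless translation (both $a$ and $c$ have $b$ as their unique neighbor, violating condition 2 of \propref{conditionsECTransformations}), so no translation of loss $1$ can be dominated, and all four of them are minimal: the two shifts ($a \to b,\, b \to c,\, c \mapsto \blackHole$ and its reverse) and the two edge swaps ($a \leftrightarrow b$ with $c \mapsto \blackHole$, and $b \leftrightarrow c$ with $a \mapsto \blackHole$). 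Yet there are only $3!\left(\frac{1}{2!} - \frac{1}{3!}\right) = 2$ derangements of three elements, so by pigeonhole any assignment into derangements must collide, whatever closure rule you adopt. The same example shows the trouble runs deeper than your write-up: the bound claimed by the proposition is itself violated here (four minimal translations against a claimed maximum of two), and the paper's own one-line transfer --- ``any minimal translation on $\graph$ is included in a minimal translation on $\graph_c$'' --- also fails, since the swap $a \leftrightarrow b$, $c \mapsto \blackHole$ is a restriction of no derangement of $\set{a, b, c}$ (any extension must fix $c$). So you have in fact put your finger on a real flaw that the paper glosses over; but as a proof, your plan stalls at a step that is not merely missing --- it is false.
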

                    
                    \begin{proof}
                        Minimal translations on $\graph_c$ are necessarily lossless, since any translation shares at least a diedge with an Hamiltonian cycle on this graph, which is lossless.
                        By particularizing \eqref{nkMatching} for $\k = \graphOrder$, we obtain the number of lossless translations on $\graph_c$, which is exactly the number of derangements of a set of $\graphOrder$ elements, \ie, the number of permutations of $\graphOrder$ elements with no fixed points \cite{de1713essay}.
                        
                        Using the same reasoning as in the proof of \propref{nbTranslations}, any minimal translation on a graph $\graph$ is included in a minimal translation on $\graph_c$.
                        Therefore, a graph of order $\graphOrder$ cannot admit more minimal translations than the complete graph.
                    \end{proof}
                    
                    The number of translations on graphs is therefore exponential in the general case.
                    Additionally, we prove that identifying translations is a complex problem:
                    
                    \begin{proposition}
                        The problem of deciding, for an input graph $\graph = \tuple{\vertices, \edges}$ and two subsets $\vertices_1$ and $\vertices_2$ of $\vertices$, if there is a translation for which the image is exactly $\vertices_2$ and with inverse images only in $\vertices_1$ is NP-complete.
                        \label{npCompletenessIdentification}
                    \end{proposition}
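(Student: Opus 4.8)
The plan is to place the problem in NP and then prove NP-hardness by a reduction from the classical \emph{clique} problem. Membership in NP is straightforward: a candidate translation $\translation$ serves as a certificate, encoded by its associated digraph $\digraph{\translation}$ (equivalently the matrix $\adjacency_\translation$ of \propref{orientationOfEdges}). Given $\translation$, one verifies in time $\complexity{\graphOrder^2}$ that it is injective on $\vertices_{\not\blackHole}$, that it is edge-constrained in the sense of \defref{defEC} (each $\vertex \in \vertices_{\not\blackHole}$ satisfies $\edge{\vertex}{\translation(\vertex)} \in \edges$), that it is strongly neighborhood-preserving in the sense of \defref{snpDef} (for all $\vertex_1, \vertex_2 \in \vertices_{\not\blackHole}$, $\edge{\vertex_1}{\vertex_2} \in \edges \Leftrightarrow \edge{\translation(\vertex_1)}{\translation(\vertex_2)} \in \edges$), and finally that its image equals $\vertices_2$ while $\vertices_{\not\blackHole} \subseteq \vertices_1$. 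Each test is polynomial, so the problem lies in NP.

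For NP-hardness I reduce from clique: given a graph $H = \tuple{\vertices_H, \edges_H}$ and an integer $k$, decide whether $H$ contains a clique on $k$ vertices. From $(H, k)$ I build an instance as follows. Take a disjoint complete graph $\graph_c = \tuple{\vertices_c, \edges_c}$ on $k$ fresh vertices $u_1, \dots, u_k$, and form $\graph = \tuple{\vertices, \edges}$ by adding, on top of $H$ and $\graph_c$, every edge between a vertex of $\vertices_H$ and a vertex of $\vertices_c$; then set $\vertices_1 = \vertices_H$ and $\vertices_2 = \vertices_c$. This is clearly polynomial in the size of $(H, k)$, and crucially I add no edge inside $\vertices_H$, so the edges of $\graph$ among $\vertices_H$ are exactly those of $\edges_H$.

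The correctness hinges on one observation. Because $\vertices_2 = \vertices_c$ induces a complete graph, for any admissible $\translation$ any two distinct vertices of the image are adjacent; since $\translation$ must be SNP, the biconditional forces every pair of distinct vertices of $\vertices_{\not\blackHole} \subseteq \vertices_1 = \vertices_H$ to be adjacent as well, \ie, $\vertices_{\not\blackHole}$ induces a clique in $H$, of size $k$ because the image is exactly $\vertices_c$ and $\translation$ is injective. Conversely, from a $k$-clique $\set{c_1, \dots, c_k}$ of $H$, the map sending $c_i \mapsto u_i$ (and every other vertex to $\blackHole$) is injective, is edge-constrained thanks to the complete bipartite join, and is SNP because all relevant pairs are adjacent on both sides; its image is exactly $\vertices_2$ and its domain lies in $\vertices_1$. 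Hence $H$ has a $k$-clique if and only if the constructed instance is a yes-instance, which together with NP membership yields NP-completeness.

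The main obstacle is choosing a reduction in which the two translation constraints \emph{decouple}: the edge-constraint must be satisfiable regardless of the solution, while the SNP condition must carry the combinatorial hardness. The complete bipartite join is what trivializes the former, and I must confirm that these extra edges do not perturb the SNP biconditional --- they do not, since SNP only constrains pairs lying inside $\vertices_{\not\blackHole}$ together with their images, which remain within $\vertices_H$ and $\vertices_c$ respectively. Verifying this clean separation, and checking that no spurious translation (mapping some vertex of $\vertices_c$ or using a chord of $H$) can arise, is the delicate part of the argument.
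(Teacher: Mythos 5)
Your proof is correct, and it uses the very same gadget as the paper: a complete bipartite join between a ``host'' side (your $\vertices_H$, taken as $\vertices_1$) and a ``pattern'' side (taken as $\vertices_2$), with the join edges trivializing the EC constraint so that all the hardness is carried by the SNP condition. The difference is the source problem: the paper reduces from subgraph isomorphism, with an arbitrary graph $\graph_2$ playing the role of the pattern, whereas you reduce from clique, \ie, you specialize the pattern to $K_k$. This specialization is not just cosmetic --- it quietly repairs a small imprecision in the paper's argument. Since SNP is a biconditional, a translation onto $\vertices_2$ witnesses an \emph{induced} subgraph of $\graph_1$ isomorphic to $\graph_2$; conversely, a non-induced subgraph isomorphism (say, a path inside a triangle) does \emph{not} yield an SNP map, because adjacent preimages may have non-adjacent images. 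The paper's equivalence is therefore correct only when ``subgraph isomorphism'' is read as \emph{induced} subgraph isomorphism (which is also NP-complete, so the result stands). With your complete pattern $K_k$, containment as a subgraph, as an induced subgraph, and as a clique all coincide, so the biconditional matches the clique condition exactly and no such care is needed. What you give up is generality: the paper's reduction exhibits that the translation problem encodes all of (induced) subgraph isomorphism, while yours only exhibits the clique fragment --- but for establishing NP-completeness, either suffices, and yours is the tighter argument.
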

                    
                    \begin{proof}
                        To prove this result, we first prove that the problem is NP, and then that it is NP-hard.
                        
                        All possible transformations with inverse image set $\vertices_1$ and image set $\vertices_2$ can be generated non-deterministically by induction as follows:
                        \begin{enumerate}
                            \item $\transformation_\blackHole \in \transformations{\graph}$;
                            \item If we have a transformation $\transformation_1 \in \transformations{\graph}$, and two vertices $\vertex_1 \in \vertices_1$, $\vertex_2 \in \vertices_2$ such that $\transformation_1(\vertex_1) = \blackHole$ and  $\nexists \vertex_3 \in \vertices_1 : \transformation_1(\vertex_3) = \vertex_2$, then define $\transformation_2 \in \transformations{\graph}$ as follows:
                                  \begin{itemize}
                                      \item $\transformation_2(\vertex_1) = \vertex_2$;
                                      \item $\forall \vertex_3 \in \vertices_1, \vertex_3 \neq \vertex_1 : \transformation_2(\vertex_3) = \transformation_1(\vertex_3)$.
                                  \end{itemize}
                        \end{enumerate}
                        Furthermore, determining whether any such transformation is a translation or not can be done by checking EC and SNP constraints, which can be done in polynomial time.
                        So the problem is NP.
                    
                        Then, we prove the problem is NP-hard by reduction from the subgraph isomorphism problem.
                        Consider two graphs $\graph_1 = \tuple{\vertices_1, \edges_1}$ and $\graph_2 = \tuple{\vertices_2, \edges_2}$.
                        Without loss of generality, we consider that $\vertices_1 \cap \vertices_2 = \emptyset$.
                        The more general case where $\vertices_1 \cap \vertices_2 \neq \emptyset$ can be included in the proof by duplication of the vertices in the intersection.
                        
                        From these graphs, we build the graph $\graph_3 = \tuple{\vertices_1 \cup \vertices_2, \edges_3}$, where $\edges_3 = \edges_1 \cup \edges_2 \cup \set{\edge{\vertex_1}{\vertex_2}, \vertex_1 \in \vertices_1, \vertex_2 \in \vertices_2}$.
                        Note that this construction is at most quadratic in the order of $\graph$.
                        Then, we show that answering our problem on $\graph_3$ solves the problem of subgraph isomorphism between $\graph_1$ and $\graph_2$.
                        
                        To this end, consider the following two properties, that we prove to be equivalent:
                        \begin{enumerate}
                            \item There is a translation of which image set is $\vertices_2$ and inverse images are in $\vertices_1$;
                            \item There is a subgraph of $\graph_1$ isomorph to $\graph_2$.
                        \end{enumerate}
                        
                        We prove this in two steps.
                        First, consider there exists such a translation.
                        Then, since it is SNP, the subgraph corresponding to the inverse images of vertices in $\vertices_2$ is isomorph to $\graph_2$.
                        
                        Conversely, consider there exists an isomorphism, then the transformation that associates each vertex in $\vertices_2$ with its corresponding vertex in $\vertices_1$ is a translation.
                        indeed, it is EC because of the complete bipartite subgraph connecting vertices in $\vertices_2$ to vertices in $\vertices_1$, and it is SNP as a particularization of the isomorphism property.
                        
                        As a consequence, the problem of deciding, for an input graph $\graph = \tuple{\vertices, \edges}$ and two subsets $\vertices_1$ and $\vertices_2$ of $\vertices$, if there is a translation for which the image is exactly $\vertices_2$ and with inverse images only in $\vertices_1$ is at least as difficult as the subgraph isomorphism problem.
                        Since it is also NP, it is NP-complete.
                    \end{proof}
                    
                    These results tell us that finding the translations on a graph is a hard problem.
                    In \secref{relaxation}, we introduce relaxations of the proposed translations to tackle this issue.

                \subsection{Correspondence with Euclidean translations on metric spaces}
                
                    In the following paragraphs, we focus on the particular case of highly regular graphs, namely the torus graph and the grid graph.
                    We show that, without taking into account the underlying metric space, the translations that minimize the loss correspond to Euclidean translations of images.
                    
	                \subsubsection{Application to the torus graph}
	                \label{resultsTorus} ~
	                
	                    Let us first consider the case of the torus graph $\graph_t = \tuple{\vertices_t, \edges_t}$ yielded by a dimensions vector $\gridDimensions \in \strictlyPositiveIntegers^\gridNbDimensions$.
	                    Such graphs are highly regular, and are often used to model classical domains, such as the periodical time with a $1$-dimensional torus graph, or the pixels of a periodical image with a $2$-dimensional torus graph.
	                    Additionally, what makes these graphs interesting is the fact that they are constructed using an Euclidean space (see \defref{defTorus}).
	                    For this reason, we use the notation $\vertex \in \vertices_t$ when referring to the index of vertex $\vertex$, and $\vertexVector \in \vertices_t$ when referring to its Euclidean coordinates in $\gridDimensions$.
	                    
	                    In this section, we aim to find a relation between translations defined on an Euclidean space, and those defined on the graph, with no reference to the underlying metrics.
	                    To do so, let us first formalize the notion of Euclidean translation on $\graph_t$.
	                    
	                    \begin{definition}[Euclidean translation on the torus graph]
	                        An \emph{Euclidean translation} $\translation_t$ on the torus graph is such that:
	                        $$
	                            \exists \differenceVector \in \positiveIntegers^\gridNbDimensions : \forall \vertexVector \in \vertices_t : \translation_t(\vertexVector) = \vertexVector + \differenceVector
	                            \;.
	                        $$
	                        \label{defEuclideanTranslationTorus}
	                    \end{definition}
	                    
	                    \begin{definition}[Dirac vector for dimension $i$]
	                        Let $\diracVector_i$ be the vector in $\set{0, 1}^\gridNbDimensions$ with a single non-null entry $\i \in \intInterval{1, \gridNbDimensions}$, defined as follows:
	                        $$
	                            \forall \j \in \intInterval{1, \gridNbDimensions} :
	                            \diracVector_\i\entry{\j} = \left\{
	                                \begin{array}{cl}
	                                    1 & \text{if~} \j = \i \\
	                                    0 & \text{otherwise}
	                                \end{array}
	                            \right.
	                            \;.
	                        $$
	                    \end{definition}
	                    Remember from \defref{defTorus} that coordinates of the torus graph are defined in $\integersModulo{\gridDimensions\entry{\i}}$ (for the $\i\th$ coordinate).
	                    Therefore, addition and subtraction take into account the modulo.
	                    By construction of the torus graph, we have that $\forall \vertexVector \in \vertices_t, \forall \i \in \intInterval{1, \gridNbDimensions} : \edge{\vertexVector}{\vertexVector + \diracVector_\i} \in \edges_t$.
	                    
	                    Note that this is also true for the inverse Dirac vectors containing a single non-null entry $\i$ being $-1$.
	                    Consequently, the following results also apply using such vectors.
	                    
	                    \begin{lemma}[Contamination lemma on the torus graph]
	                        Let $\translation \in \losslessTranslations{\graph_t}$ be a lossless translation on the torus graph, with $\forall \i \in \intInterval{1, \gridNbDimensions} : \gridDimensions\entry{\i} \geq 5$.
	                        Let $\vertexVector_1 \in \vertices_t$.
	                        Let us consider the Dirac vector $\diracVector_\j = \translation(\vertexVector_1) - \vertexVector_1$.
	                        Then, $\forall \vertexVector_2 \in \vertices_t : \translation(\vertexVector_2) = \vertexVector_2 + \diracVector_\j$.
	                        \label{contaminationlemmaTorus}
	                    \end{lemma}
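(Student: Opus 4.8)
The plan is to reduce the lemma to a purely local \emph{spreading} statement and then invoke connectivity of the torus. First I would introduce the displacement $\differenceVector(\vertexVector) := \translation(\vertexVector) - \vertexVector$ for every $\vertexVector \in \vertices_t$. Since $\translation$ is lossless, $\translation(\vertexVector) \neq \blackHole$, and since it is edge-constrained, $\edge{\vertexVector}{\translation(\vertexVector)} \in \edges_t$; by \defref{defTorus} this forces $\differenceVector(\vertexVector) = \pm\diracVector_\k$ for some $\k \in \intInterval{1, \gridNbDimensions}$, i.e. the displacement of \emph{every} vertex is a signed Dirac vector. The hypothesis fixes $\differenceVector(\vertexVector_1) = \diracVector_\j$. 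The whole lemma then follows from the claim that $\differenceVector$ is constant, and it suffices to prove the one-step \emph{spreading} property: whenever $\differenceVector(\vertexVector_0) = \diracVector_\j$ and $\vertexVector' \in \neighborhood{1}{\vertexVector_0}$, one also has $\differenceVector(\vertexVector') = \diracVector_\j$. Since the torus graph is connected, propagating this from $\vertexVector_1$ along edges yields $\differenceVector \equiv \diracVector_\j$ on all of $\vertices_t$, which is exactly the conclusion. (The case $\differenceVector(\vertexVector_1) = -\diracVector_\j$ is symmetric, as noted before the statement.)

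To prove the spreading property I would translate coordinates so that $\vertexVector_0$ is the origin, hence $\translation(\vertexVector_0) = \vertexVector_0 + \diracVector_\j$, and split on how the edge $\edge{\vertexVector_0}{\vertexVector'}$ sits relative to direction $\j$, writing $\vertexVector' = \vertexVector_0 + \epsilon\diracVector_\i$ with $\epsilon \in \set{-1, +1}$. The two tools are that $\translation(\vertexVector')$ must be adjacent both to $\vertexVector'$ (edge-constraint) and to $\translation(\vertexVector_0) = \vertexVector_0 + \diracVector_\j$ (because $\translation$ is SNP and $\edge{\vertexVector_0}{\vertexVector'} \in \edges_t$), together with injectivity of $\translation$ (losslessness). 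In the \emph{backward} case $\i = \j$, $\epsilon = -1$, the two anchor vertices $\vertexVector' = \vertexVector_0 - \diracVector_\j$ and $\vertexVector_0 + \diracVector_\j$ differ by $2\diracVector_\j$, and $\gridDimensions\entry{\j} \geq 5$ guarantees they have a \emph{unique} common neighbor, namely $\vertexVector_0$; hence $\translation(\vertexVector_0 - \diracVector_\j) = \vertexVector_0$, i.e. $\differenceVector(\vertexVector') = \diracVector_\j$. In the \emph{perpendicular} case $\i \neq \j$, the common neighbors of $\vertexVector'$ and $\vertexVector_0 + \diracVector_\j$ are the two opposite corners $\vertexVector_0$ and $\vertexVector' + \diracVector_\j$ of a unit square, so $\translation(\vertexVector')$ is one of them; the value $\vertexVector_0$ is excluded by injectivity, since $\translation(\vertexVector_0 - \diracVector_\j) = \vertexVector_0$ was just established and $\vertexVector_0 - \diracVector_\j \neq \vertexVector'$, leaving $\translation(\vertexVector') = \vertexVector' + \diracVector_\j$.

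The genuinely delicate case is the \emph{forward} one, $\i = \j$, $\epsilon = +1$: here the edge-constraint on $\translation(\vertexVector_0 + \diracVector_\j)$ and its adjacency to $\translation(\vertexVector_0) = \vertexVector_0 + \diracVector_\j$ coincide (both merely say ``neighbor of $\vertexVector_0 + \diracVector_\j$''), so neighborhood preservation alone yields \emph{no} information and one must argue purely by injectivity. I would enumerate the neighbors of $\vertexVector_0 + \diracVector_\j$, namely $\vertexVector_0$, $\vertexVector_0 + 2\diracVector_\j$, and the lateral vertices $\vertexVector_0 + \diracVector_\j \pm \diracVector_e$ for $e \neq \j$, and discard all but $\vertexVector_0 + 2\diracVector_\j$: the vertex $\vertexVector_0$ is already the image of $\vertexVector_0 - \diracVector_\j$ (backward case), and each lateral $\vertexVector_0 + \diracVector_\j \pm \diracVector_e$ is already the image of the perpendicular neighbor $\vertexVector_0 \pm \diracVector_e$ (perpendicular case), so injectivity forbids them. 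This forces $\translation(\vertexVector_0 + \diracVector_\j) = \vertexVector_0 + 2\diracVector_\j$, i.e. $\differenceVector(\vertexVector') = \diracVector_\j$ again; note that when $\gridNbDimensions = 1$ there are no lateral neighbors and the argument degenerates to ruling out $\vertexVector_0$ only, so no separate one-dimensional treatment is needed.

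The main obstacle, and the place where the hypothesis $\gridDimensions\entry{\i} \geq 5$ is essential, is controlling these common-neighbor counts: both the uniqueness of the common neighbor along an axis (backward case) and the absence of spurious coincidences among $\vertexVector_0 - \diracVector_\j$, $\vertexVector_0 + \diracVector_\j$ and $\vertexVector_0 \pm \diracVector_e$ fail for small dimensions, where wrap-around creates short cycles (e.g. $\gridDimensions\entry{\j} = 4$ would make $\vertexVector_0 + 2\diracVector_\j$ a second common neighbor of $\vertexVector_0 - \diracVector_\j$ and $\vertexVector_0 + \diracVector_\j$). I would therefore first record, as a preliminary fact derived directly from \defref{defTorus}, the exact common-neighbor structure of pairs of vertices at geodesic distance $2$ under the assumption $\gridDimensions\entry{\i} \geq 5$, after which the three cases above go through as described and connectivity closes the proof.
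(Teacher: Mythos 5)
Your proof is correct, and its skeleton matches the paper's: both arguments are anchored on the backward step ($\translation(\vertexVector_0 - \diracVector_\j) = \vertexVector_0$, obtained because $\vertexVector_0$ is the \emph{unique} common neighbor of $\vertexVector_0 - \diracVector_\j$ and $\vertexVector_0 + \diracVector_\j$, which is exactly where $\gridDimensions\entry{\j} \geq 5$ is needed), then settle the perpendicular neighbors by intersecting neighborhoods and discarding $\vertexVector_0$ through injectivity, and finally propagate over the whole vertex set. The one genuine difference is your forward case: the paper never treats $\translation(\vertexVector_0 + \diracVector_\j)$ directly --- its closing ``by induction'' implicitly exploits the torus wrap-around, since iterating the backward step around the $\j$-cycle (together with perpendicular moves) already reaches every vertex, including $\vertexVector_0 + \diracVector_\j$. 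Your explicit pigeonhole argument --- enumerate the neighbors of $\vertexVector_0 + \diracVector_\j$, exclude $\vertexVector_0$ and the lateral vertices because they are images of previously settled vertices, and invoke losslessness to force $\translation(\vertexVector_0 + \diracVector_\j) = \vertexVector_0 + 2\diracVector_\j$ --- is correct and buys a strictly stronger spreading property, valid for \emph{every} neighbor, so that plain connectivity suffices with no appeal to the wrap-around structure; this is also the ingredient one would need to adapt the contamination argument to domains without periodicity, such as the grid graph. In short, both proofs are sound; yours is slightly longer but more self-contained, whereas the paper's is terser and leans on the torus topology to close the induction.
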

	                    
	                    \begin{proof}
	                        We proceed in two steps:
	                        \begin{enumerate}
	                            \item First, let us show that $\translation(\vertexVector_1 - \diracVector_\j) = \vertexVector_1$.
	                                  By construction of the torus graph in \defref{defTorus}, we have that $\edge{\vertexVector_1 - \diracVector_\j}{\vertexVector_1} \in \edges_t$.
	                                  Since $\translation$ is EC, we must have $\translation(\vertexVector_1 - \diracVector_\j) \in \neighborhood{1}{\vertexVector_1 - \diracVector_\j}$.
	                                  Also, since $\translation$ is SNP, we must have $\translation(\vertexVector_1 - \diracVector_\j) \in \neighborhood{1}{\translation(\vertexVector_1)}$.
	                                  As a consequence, $\translation(\vertexVector_1 - \diracVector_\j) \in \neighborhood{1}{\vertexVector_1 - \diracVector_\j} \cap \neighborhood{1}{\translation(\vertexVector_1)}$.
	                                  The neighborhood of $\translation(\vertexVector_1)$ is $\neighborhood{1}{\translation(\vertexVector_1)} = \neighborhood{1}{\vertexVector_1 + \diracVector_\j} = \{\vertexVector_1 + \diracVector_\j + \diracVector_1, \vertexVector_1 + \diracVector_\j - \diracVector_1, \vertexVector_1 + \diracVector_\j + \diracVector_2, \vertexVector_1 + \diracVector_\j - \diracVector_2, \dots, \vertexVector_1 + 2 \diracVector_\j, \vertexVector_1, \dots, \vertexVector_1 + \diracVector_\j + \diracVector_\gridNbDimensions, \vertexVector_1 + \diracVector_\j - \diracVector_\gridNbDimensions\}$.
	                                  Similarly, the neighborhood of $\vertexVector_1 - \diracVector_\j$ is $\neighborhood{1}{\vertexVector_1 - \diracVector_\j} = \{\vertexVector_1 - \diracVector_\j + \diracVector_1, \vertexVector_1 - \diracVector_\j - \diracVector_1, \vertexVector_1 - \diracVector_\j + \diracVector_2, \vertexVector_1 - \diracVector_\j - \diracVector_2, \dots, \vertexVector_1, \vertexVector_1 - 2 \diracVector_\j, \dots, \vertexVector_1 - \diracVector_\j + \diracVector_\gridNbDimensions, \vertexVector_1 - \diracVector_\j - \diracVector_\gridNbDimensions\}$.
	                                  Since $\forall \i \in \intInterval{1, \gridNbDimensions} : \gridDimensions\entry{\i} \geq 3$, we have $\vertexVector_1 + \diracVector_\j + \diracVector_\k \neq \vertexVector_1 - \diracVector_\j + \diracVector_\k$ ($\j \neq \k$).
	                                  Therefore, vertices with coordinates that differ by an entry in dimension $\k \neq \j$ cannot belong to the intersection by construction of the torus graph.
	                                  Similarly, because $\forall \i \in \intInterval{1, \gridNbDimensions} : \gridDimensions\entry{\i} \geq 5$, we obtain that $\vertexVector_1 + 2 \diracVector_\j$ cannot be the same vertex as $\vertexVector_1 - 2 \diracVector_\j$, since they differ by $\alpha \diracVector_\j$ ($\alpha > 1$).
	                                  As a consequence, $\neighborhood{1}{\vertexVector_1 - \diracVector_\j} \cap \neighborhood{1}{\translation(\vertexVector_1)} = \set{\vertexVector_1}$, and thus $\translation(\vertexVector_1 - \diracVector_\j) = \vertexVector_1$.
	                            
	                            \item Now, let us consider a vertex $\vertexVector_2 \in \neighborhood{1}{\vertex_1} \setminus \set{\vertex_1 - \diracVector_\j, \vertex_1 + \diracVector_\j}$.
	                                  Let us show that $\translation(\vertexVector_2) = \vertexVector_2 + \diracVector_\j$.
	                                  As in the step 1), comparing the neighborhoods of $\vertexVector_2$ and $\translation(\vertexVector_1)$ gives us $\neighborhood{1}{\vertexVector_2} \cap \neighborhood{1}{\translation(\vertexVector_1)} = \set{\vertexVector_1, \vertexVector_2 + \diracVector_\j}$.
	                                  However, step 1) gives us that $\vertexVector_1$ is necessarily the image of $\vertexVector_1 - \diracVector_\j$.
	                                  Since $\translation$ is injective, it follows that $\translation(\vertexVector_2) = \vertexVector_2 + \diracVector_\j$.
	                        \end{enumerate}
	                        By induction, we conclude that for every vertex $\vertexVector_2 \in \vertices_t : \translation(\vertexVector_2) = \vertexVector_2 + \diracVector_\j$.
	                    \end{proof}
	
	                    The constraint of having all dimensions in $\gridDimensions$ being larger than $5$ allows any lossless translation to verify $\forall \vertexVector \in \vertices_t : \translation(\vertexVector) = \vertexVector + \diracVector_\j$.
	                    For smaller graphs, lossless translations can be found for which this property is not true.
	                    As an example, \figref{torusTranslation4} depicts lossless translations on a torus of dimensions $\gridDimensions = \vector{4\\4}$ that are not translations by a Dirac vector.
	                    Still, note that any translation $\translation$ such that $\forall \vertexVector \in \vertices_t : \translation(\vertexVector) = \vertexVector + \diracVector_\j$ is lossless even for smaller grid graphs.
	                    
	                    \begin{figure}
	                        \centering
	                        \begin{tikzpicture}[scale=0.6,thick]
	                          \tikzstyle{every node} = [draw, circle];
	                          \foreach \i in {0,1,2,3}{
	                            \foreach \j in {0,1,2,3}{
	                              \node(\i\j) at (\i,\j) {};
	                            }
	                          }
	                          \path[dotted]
	                          \foreach \i in {0,1,2,3}{
	                            \foreach \j/\jj in {0/1,1/2,2/3}{
	                              (\i\j) edge (\i\jj)
	                            }
	                          }
	                          \foreach \i/\ii in {0/1,1/2,2/3}{
	                            \foreach \j in {0,1,2,3}{
	                              (\i\j) edge (\ii\j)
	                            }
	                          }
	                          \foreach \i in {0,1,2,3}{
	                            (\i3) edge[bend left] (\i0)
	                          }
	                          \foreach \j in {0,1,2,3}{
	                            (0\j) edge[bend left] (3\j)
	                          }
	                          ;
	                          \path[->, red]
	                          (00) edge (01)
	                          (01) edge (11)
	                          (11) edge (10)
	                          (10) edge (00)
	                          (30) edge (31)
	                          (31) edge (21)
	                          (21) edge (20)
	                          (20) edge (30)
	                          (03) edge (02)
	                          (02) edge (12)
	                          (12) edge (13)
	                          (13) edge (03)
	                          (33) edge (32)
	                          (32) edge (22)
	                          (22) edge (23)
	                          (23) edge (33)
	                          ;
	                        \end{tikzpicture}
	                        ~~~~~
	                        \begin{tikzpicture}[scale=0.6,thick]
	                          \tikzstyle{every node} = [draw, circle];
	                          \foreach \i in {0,1,2,3}{
	                            \foreach \j in {0,1,2,3}{
	                              \node(\i\j) at (\i,\j) {};
	                            }
	                          }
	                          \path[dotted]
	                          \foreach \i in {0,1,2,3}{
	                            \foreach \j/\jj in {0/1,1/2,2/3}{
	                              (\i\j) edge (\i\jj)
	                            }
	                          }
	                          \foreach \i/\ii in {0/1,1/2,2/3}{
	                            \foreach \j in {0,1,2,3}{
	                              (\i\j) edge (\ii\j)
	                            }
	                          }
	                          \foreach \i in {0,1,2,3}{
	                            (\i3) edge[bend left] (\i0)
	                          };
	                          \path[->, red]
	                          \foreach \j in {0,1,2,3}{
	                            (0\j) edge[bend left] (3\j)
	                            (3\j) edge[bend right] (0\j)
	                          }
	                          \foreach \j in {0,1,2,3}{
	                            (1\j) edge (2\j)
	                            (2\j) edge (1\j)
	                          };
	                        \end{tikzpicture}
	                        \caption[]{Examples of lossless translations on a torus of dimensions $\gridDimensions = \vector{4\\4}$ that are not translations by a Dirac vector.}
	                        \label{torusTranslation4}
	                    \end{figure}
	                    
	                    Note that a direct consequence of \lemref{contaminationlemmaTorus} is that there are as many lossless translations as there are neighbors for a given vertex.
	                    By composing the lossless translations on the torus graph, we obtain more complex functions, that induce the following monoid:
	                    
	                    \begin{definition}[Monoid induced by $\losslessTranslations{\graph_t}$]
	                        We call \emph{monoid induced by $\losslessTranslations{\graph_t}$} the minimum monoid containing $\losslessTranslations{\graph_t}$ with the composition of fonctions as inner law.
	                    \end{definition}
	                    
	                    \begin{proposition}
	                        For torus graphs with $\forall \i \in \intInterval{1, \gridNbDimensions} : \gridDimensions\entry{\i} \geq 5$, the monoid induced by $\losslessTranslations{\graph_t}$ is exactly the set of Euclidean translations on the torus graph.
	                        \label{equivalenceTorus}
	                    \end{proposition}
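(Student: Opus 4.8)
The plan is to first make the set $\losslessTranslations{\graph_t}$ completely explicit, and then to compute the monoid it generates by a direct closure argument. By \lemref{contaminationlemmaTorus}, any $\translation \in \losslessTranslations{\graph_t}$ is entirely determined by the image of a single vertex: writing $\diracVector_\j = \translation(\vertexVector_1) - \vertexVector_1$ for some (equivalently, any) $\vertexVector_1 \in \vertices_t$, we have $\translation(\vertexVector) = \vertexVector + \diracVector_\j$ for every $\vertexVector \in \vertices_t$, where $\diracVector_\j$ ranges over the $2\gridNbDimensions$ \emph{signed} Dirac vectors (the remark following the definition of $\diracVector_\i$ extends the lemma to the entries equal to $-1$). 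Conversely, I would check that each of these $2\gridNbDimensions$ maps is itself a lossless translation: the map $\vertexVector \mapsto \vertexVector + \diracVector_\j$ is a bijection of $\vertices_t$ (hence lossless), it is EC because $\edge{\vertexVector}{\vertexVector + \diracVector_\j} \in \edges_t$ by construction of the torus, and it is SNP because adding a constant vector is a graph automorphism of the torus and therefore both preserves and reflects adjacency. Thus $\losslessTranslations{\graph_t}$ is exactly the set of the $2\gridNbDimensions$ elementary signed shifts, consistent with the observation that there are as many lossless translations as neighbors of a vertex.

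Next, I would identify the induced monoid. The composition of two constant shifts is again a constant shift, since $(\vertexVector \mapsto \vertexVector + \differenceVector_1) \circ (\vertexVector \mapsto \vertexVector + \differenceVector_2) = (\vertexVector \mapsto \vertexVector + \differenceVector_1 + \differenceVector_2)$, addition being taken coordinatewise modulo $\gridDimensions$. Consequently, the minimum monoid containing $\losslessTranslations{\graph_t}$ is precisely the set of maps $\vertexVector \mapsto \vertexVector + \differenceVector$ whose displacement $\differenceVector$ is a finite sum of signed Dirac vectors, the empty sum yielding the identity.

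It then remains to prove the two inclusions against \defref{defEuclideanTranslationTorus}. For $\subseteq$, each generator is an Euclidean translation: a positive Dirac vector $\diracVector_\i$ already lies in $\positiveIntegers^\gridNbDimensions$, and a negative one is congruent modulo $\gridDimensions\entry{\i}$ to the nonnegative displacement $(\gridDimensions\entry{\i} - 1)\,\diracVector_\i$; since Euclidean translations are closed under composition, the whole monoid consists of Euclidean translations. For $\supseteq$, given any $\differenceVector \in \positiveIntegers^\gridNbDimensions$ I would realize $\vertexVector \mapsto \vertexVector + \differenceVector$ by composing, for each dimension $\i \in \intInterval{1, \gridNbDimensions}$, the generator $\vertexVector \mapsto \vertexVector + \diracVector_\i$ exactly $\differenceVector\entry{\i}$ times, since the resulting displacement is $\sum_{\i} \differenceVector\entry{\i}\,\diracVector_\i = \differenceVector$.

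The substantive step is the explicit characterization of $\losslessTranslations{\graph_t}$ as the signed Dirac shifts, and this is exactly what \lemref{contaminationlemmaTorus} supplies, together with the elementary converse verification above; once the generators are known, the remainder is a routine closure computation in the abelian group $\prod_{\i} \integersModulo{\gridDimensions\entry{\i}}$. The only point requiring genuine care is reconciling the $\positiveIntegers$-valued displacement of \defref{defEuclideanTranslationTorus} with the signed generators, which the torus modulo arithmetic resolves by replacing each negative shift in dimension $\i$ with $\gridDimensions\entry{\i} - 1$ positive ones.
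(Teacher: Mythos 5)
Your proof is correct and follows essentially the same route as the paper's: both rest on \lemref{contaminationlemmaTorus} to identify $\losslessTranslations{\graph_t}$ with the signed Dirac shifts, and then obtain every Euclidean translation as a composition of these generators. Your write-up is in fact more complete than the paper's own proof, which leaves implicit the converse check that each Dirac shift really is a lossless translation, the inclusion of the monoid into the set of Euclidean translations, and the reconciliation of negative shifts with the $\positiveIntegers^\gridNbDimensions$-valued displacement required by \defref{defEuclideanTranslationTorus}.
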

	                    
	                    \begin{proof}
	                        A direct consequence of \lemref{contaminationlemmaTorus} is that lossless translations $\translation \in \losslessTranslations{\graph_t}$ on the torus graph can be obtained by choosing a Dirac vector for a dimension $\i \in \intInterval{1, \gridNbDimensions}$ and applying the contamination.
	                        Therefore, $\forall \i \in \intInterval{1, \gridNbDimensions} : \exists! \translation \in \losslessTranslations{\graph_t} : \forall \vertexVector \in \vertices_t : \translation(\vertexVector) = \vertexVector + \diracVector_\i$.
	                        We obtain that $\differenceVector$ in \defref{defEuclideanTranslationTorus} is a linear combination of vectors in $\set{\diracVector_1, \diracVector_2, \dots, \diracVector_\gridNbDimensions}$.
	                        As a consequence, any Euclidean translation on the torus graph can be written as a composition of lossless translations on the torus graph, which are elements of the monoid induced by $\losslessTranslations{\graph_t}$.
	                    \end{proof}

	                \subsubsection{Application to the grid graph}
	                \label{resultsGrid} ~
	                    
	                    Let us now proceed with grid graphs $\graph_g = \tuple{\vertices_g, \edges_g}$ yielded by a dimensions vector $\gridDimensions \in \strictlyPositiveIntegers^\gridNbDimensions$.
	                    We can adapt the definition of Euclidean translation on the torus graph in \defref{defEuclideanTranslationTorus} to grid graphs as follows:
	                    
	                    \begin{definition}[Euclidean translation on the grid graph]
	                        An \emph{Euclidean translation} $\translation_g$ on the grid graph $\graph_g = \tuple{\vertices_g, \edges_g}$ of dimensions $\gridDimensions \in \strictlyPositiveIntegers^\gridNbDimensions$ is such that:
	                        $$
	                            \exists \differenceVector \in \positiveIntegers^\gridNbDimensions : \forall \vertexVector \in \vertices_g : \translation(\vertexVector) =
	                            \left\{
	                                \begin{array}{ll}
	                                    \vertexVector + \differenceVector & \text{if } \vertexVector + \differenceVector \in \vertices_g \\
	                                    \blackHole & \text{otherwise}
	                                \end{array}
	                            \right.
	                            \;.
	                        $$
	                        \label{defEuclideanTranslationGrid}
	                    \end{definition}
	                    
	                    \begin{proposition}
	                        Let us consider the translation $\translation \in \translations{\graph_g}$ such that
	                        $$
	                            \forall \vertexVector \in \vertices_g : \translation(\vertexVector) =
	                            \left\{
	                                \begin{array}{ll}
	                                    \vertexVector + \diracVector_\i & \text{if } \vertexVector + \diracVector_\i \in \vertices_g \\
	                                    \blackHole & \text{otherwise}
	                                \end{array}
	                            \right.
	                            \;,
	                        $$
	                        for $\diracVector_\i$ the Dirac vector for dimension $\i$.
	                        We have $\loss(\translation) = \prod_{\j \in \intInterval{1, \gridNbDimensions}, \j \neq \i} \gridDimensions\entry{\j}$.
	                        \label{translationsGrid}
	                    \end{proposition}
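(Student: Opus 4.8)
The plan is to compute $\loss(\translation)$ directly from its definition as the number of vertices sent to $\blackHole$. First I would recall that $\loss(\translation) = \cardinal{\condSet{\vertexVector \in \vertices_g}{\translation(\vertexVector) = \blackHole}}$, and that by the definition of $\translation$ we have $\translation(\vertexVector) = \blackHole$ precisely when $\vertexVector + \diracVector_\i \notin \vertices_g$. The problem therefore reduces to counting the vertices $\vertexVector \in \vertices_g$ for which $\vertexVector + \diracVector_\i$ falls outside the grid.

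Next I would characterize this set exactly, using the coordinatewise structure of $\vertices_g$ from \defref{defGrid}. Since $\diracVector_\i$ has its only non-null entry at position $\i$ (equal to $1$), adding it to $\vertexVector$ increments the $\i$-th coordinate by one and leaves every other coordinate unchanged. Consequently, for any $\vertexVector \in \vertices_g$ the $\j$-th coordinate of $\vertexVector + \diracVector_\i$ still lies in $\intInterval{1, \gridDimensions\entry{\j}}$ for every $\j \neq \i$, so the sum can leave $\vertices_g$ only through its $\i$-th coordinate. This occurs if and only if $\vertexVector\entry{\i} + 1 > \gridDimensions\entry{\i}$, that is, exactly when $\vertexVector\entry{\i} = \gridDimensions\entry{\i}$.

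Then the vertices mapped to $\blackHole$ are precisely those lying on the boundary slice where the $\i$-th coordinate is fixed to its maximal value $\gridDimensions\entry{\i}$ while all remaining coordinates range freely over their respective intervals $\intInterval{1, \gridDimensions\entry{\j}}$. Counting such vertices is a product over the free dimensions, yielding $\prod_{\j \in \intInterval{1, \gridNbDimensions}, \j \neq \i} \gridDimensions\entry{\j}$, which is exactly the claimed value, and the proof concludes.

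I do not expect a substantial obstacle here: the statement is a direct count of a boundary hyperplane of the grid. The only point requiring mild care is the \emph{if and only if} in the second step --- namely verifying that incrementing a single coordinate can never push any \emph{other} coordinate out of range, so that no vertex outside this slice is lost --- but this follows immediately from the fact that $\vertices_g$ is a Cartesian product and $\diracVector_\i$ affects one coordinate only.
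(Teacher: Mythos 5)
Your proof is correct and follows essentially the same route as the paper's: identify the vertices sent to $\blackHole$ as exactly the boundary slice where the $\i$-th coordinate equals $\gridDimensions\entry{\i}$, then count them as the product of the remaining dimensions. Your explicit verification that incrementing the $\i$-th coordinate cannot push any other coordinate out of range is a small added precision over the paper's argument, but it is the same proof.
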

	                    
	                    \begin{proof}
	                        By construction of the grid graph, two vertices are neighbors if their coordinates differ by $1$ or $-1$ along a single dimension.
	                        In particular, it is true for dimension $\i$.
	                        Therefore, any vertex $\vertexVector$ such that $\vertexVector + \diracVector_\i \not\in \vertices_g$ is such that $\vertexVector\entry{\i} = \gridDimensions\entry{\i}$.
	                        The product of dimensions that are different from $\i$ gives us the number of such vertices, hence the loss of $\translation$.
	                    \end{proof}
	                    
	                    \begin{remark}
	                        As for torus graphs, note that all the results in this section also apply when considering inverse Dirac vectors containing a single non-null entry $\i$ being $-1$.
	                    \end{remark}
	                    
	                    As for torus graphs, we can introduce the monoid induced by the translations on the grid graph as follows:
	                    
	                    \begin{definition}[Monoid induced by $\translations{\graph_g}$]
	                        We call \emph{monoid induced by $\translations{\graph_g}$} the minimum monoid containing $\translations{\graph_g}$ with the composition of fonctions as inner law.
	                    \end{definition}
	                    
	                    \begin{proposition}
	                        The monoid induced by $\translations{\graph_g}$ includes the set of Euclidean translations on the grid graph.
	                    \end{proposition}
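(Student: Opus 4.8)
The plan is to realize every Euclidean translation of \defref{defEuclideanTranslationGrid} as a finite composition of the elementary Dirac-vector translations whose membership in $\translations{\graph_g}$ is guaranteed by \propref{translationsGrid}. Fix a displacement $\differenceVector \in \positiveIntegers^\gridNbDimensions$ and decompose it as $\differenceVector = \sum_{\i \in \intInterval{1, \gridNbDimensions}} \differenceVector\entry{\i}\, \diracVector_\i$. For each dimension $\i$, \propref{translationsGrid} supplies a translation $\translation_\i \in \translations{\graph_g}$ that maps $\vertexVector$ to $\vertexVector + \diracVector_\i$ whenever $\vertexVector + \diracVector_\i \in \vertices_g$ and to $\blackHole$ otherwise; since the induced monoid contains $\translations{\graph_g}$ and is closed under composition, it also contains every composition of the $\translation_\i$. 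I would then show that the specific composition $\translation_1^{\differenceVector\entry{1}} \circ \dots \circ \translation_\gridNbDimensions^{\differenceVector\entry{\gridNbDimensions}}$ coincides with the Euclidean translation by $\differenceVector$, border effects included, where $\translation_\i^0$ is read as the identity element of the monoid.

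The key structural fact is the monotonicity of the grid under these maps: applying $\translation_\i$ increments only the $\i\th$ coordinate, and by one unit at a time. Because all entries of $\differenceVector$ are non-negative, each coordinate of $\vertexVector$ moves monotonically toward its target value. Concretely, for a fixed $\i$, iterating $\translation_\i$ a total of $\differenceVector\entry{\i}$ times sends $\vertexVector$ through the successive points whose $\i\th$ coordinate runs over $\vertexVector\entry{\i}, \vertexVector\entry{\i}+1, \dots, \vertexVector\entry{\i} + \differenceVector\entry{\i}$. All of these lie in $\intInterval{1, \gridDimensions\entry{\i}}$ if and only if the final one does --- the lower bound being automatic since $\vertexVector\entry{\i} \geq 1$ and the increments are positive. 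Hence $\translation_\i^{\differenceVector\entry{\i}}$ maps $\vertexVector$ to $\vertexVector + \differenceVector\entry{\i}\diracVector_\i$ exactly when $\vertexVector\entry{\i} + \differenceVector\entry{\i} \leq \gridDimensions\entry{\i}$; otherwise some intermediate step leaves the grid, the vertex is sent to $\blackHole$, and $\translation_\i(\blackHole) = \blackHole$ keeps it there. This identifies $\translation_\i^{\differenceVector\entry{\i}}$ with the Euclidean translation by $\differenceVector\entry{\i}\diracVector_\i$.

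It then remains to compose across dimensions and check that the black-hole set of the resulting map matches that of the Euclidean translation by $\differenceVector$. Since translations in distinct dimensions act on disjoint coordinates, a vertex $\vertexVector$ survives the full composition without being sent to $\blackHole$ if and only if $\vertexVector\entry{\i} + \differenceVector\entry{\i} \leq \gridDimensions\entry{\i}$ for every $\i \in \intInterval{1, \gridNbDimensions}$, which is precisely the condition $\vertexVector + \differenceVector \in \vertices_g$; in that case the image is $\vertexVector + \differenceVector$, and otherwise it is $\blackHole$. I expect the main obstacle to be exactly this bookkeeping: verifying that the losses accumulated by the successive border truncations neither exceed nor fall short of the loss of the target translation, and that this is independent of the order in which the dimensions are processed. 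Both points reduce to the monotonicity argument above, since each coordinate is affected only once and only in the non-negative direction. Composing over all dimensions therefore exhibits the Euclidean translation by an arbitrary $\differenceVector \in \positiveIntegers^\gridNbDimensions$ as an element of the monoid induced by $\translations{\graph_g}$, which is the claim.
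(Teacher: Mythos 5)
Your proposal is correct and follows essentially the same route as the paper's proof: decompose the displacement $\differenceVector$ into Dirac vectors, invoke \propref{translationsGrid} to obtain the elementary translations $\translation_\i \in \translations{\graph_g}$, and realize the Euclidean translation as their composition inside the induced monoid. In fact, your explicit verification that the $\blackHole$-sets (border effects) of the composition and of the target Euclidean translation coincide --- via monotonicity of the coordinate increments --- is a step the paper's proof asserts without detail, so your write-up is, if anything, more complete.
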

	                    
	                    \begin{proof}
	                        Translations by Dirac vectors introduced in \propref{translationsGrid} exist for every dimension $i$.
	                        It follows that $\differenceVector$ in \defref{defEuclideanTranslationGrid} is a linear combination of vectors in $\set{\diracVector_1, \diracVector_2, \dots, \diracVector_\gridNbDimensions}$.
	                        As a consequence, any Euclidean translation on the grid graph can be written as a composition of translations on the grid graph, which are elements of the monoid induced by $\translations{\graph_g}$.
	                        
	                        However, contrary to the case of torus graphs in \propref{equivalenceTorus}, translations introduced in \propref{translationsGrid} are only a subset of $\translations{\graph_g}$.
	                        Therefore, Euclidean translations are included in the monoid induced by $\translations{\graph_g}$, but the converse is not true.
	                        
	                        As an counterexample, for the Dirac vector $\diracVector_1$ and a grid graph such that $\forall \i \in \intInterval{1, \gridNbDimensions} : \gridDimensions\entry{\i} \geq 3$, the translation $\translation \in \translations{\graph_g}$ such that
	                        $$
	                            \scalebox{0.92}{$
	                                \forall \vertexVector \in \vertices_g : \translation(\vertexVector) =
	                                \left\{
	                                    \begin{array}{ll}
	                                        \vertexVector + \diracVector_1 & \text{if } \left(\vertexVector + \diracVector_1 \in \vertices_g\right) \wedge \left(\vertexVector \neq \vector{1\\\dots\\1}\right) \\
	                                        \blackHole & \text{otherwise}
	                                    \end{array}
	                                \right.
	                            $}
	                        $$
	                        is not an Euclidean translation on the grid graph.
	                    \end{proof}
	                    
	                    Now, we are interested in showing that Euclidean translations by $\diracVector_\i$ (or $-\diracVector_\i$) are pseudo-minimal on the grid graph.
	                    We restrict our study to a subclass of grid graphs such that each dimension is large compared to the following ones, \ie,
	                    \begin{equation}
	                        \gridDimensions\entry{\gridNbDimensions} \geq 3 \wedge \forall \i \in \intInterval{1, \gridNbDimensions - 1} : \gridDimensions\entry{\i} \geq 2 + 2 \prod_{\j = \i + 1}^{\gridNbDimensions} \gridDimensions\entry{\j}
	                        \;.
	                        \label{assumptionLargeGrid}
	                    \end{equation}
	                    This hypothesis is necessary for the subsequent proofs.
	                    However, we conjecture the following result:
	                    
	                    \begin{conjecture}
	                        The forthcoming results apply for grid graphs such that $\forall \i \in \intInterval{1, \gridNbDimensions} : \gridDimensions\entry{\i} \geq 6$.
	                        \label{conjecture6}
	                    \end{conjecture}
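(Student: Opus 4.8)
The plan is to establish a \emph{contamination lemma} for grid graphs analogous to \lemref{contaminationlemmaTorus}, but valid under the uniform hypothesis $\forall \i \in \intInterval{1, \gridNbDimensions} : \gridDimensions\entry{\i} \geq 6$ rather than under the much stronger hierarchical assumption \eqref{assumptionLargeGrid}. The guiding idea is that, exactly as on the torus, a low-loss translation should be forced --- vertex by vertex --- to act as a shift by a single Dirac vector $\diracVector_\j$, so that the Dirac translations of \propref{translationsGrid} are the only candidates for pseudo-minimality. The extra difficulty on the grid is that boundary vertices have strictly fewer than $2\gridNbDimensions$ neighbors, so the neighborhood-intersection argument that pins down each image degrades near the border, and one must show that the uniform bound still leaves enough room to run the forcing argument and to account for the induced loss.

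First I would isolate the \emph{interior} of the grid --- the vertices whose every coordinate lies in $\intInterval{2, \gridDimensions\entry{\i} - 1}$ --- on which each vertex has full degree $2\gridNbDimensions$. For an interior vertex $\vertexVector_1$ with $\translation(\vertexVector_1) = \vertexVector_1 + \diracVector_\j$, I would replay the two-step neighborhood argument of \lemref{contaminationlemmaTorus}: the EC and SNP constraints force $\translation(\vertexVector_1 - \diracVector_\j) = \vertexVector_1$, and then $\translation(\vertexVector_2) = \vertexVector_2 + \diracVector_\j$ for every $\vertexVector_2 \in \neighborhood{1}{\vertexVector_1}$. On the grid this separation is in fact cleaner than on the torus, since the absence of modular wrap-around makes $\vertexVector_1 + 2\diracVector_\j$ and $\vertexVector_1 - 2\diracVector_\j$ automatically distinct; the burden of the hypothesis therefore shifts entirely onto guaranteeing a large enough interior to seed the contamination and onto controlling the boundary vertices, whose images should be forced to $\blackHole$ once the interior shift is pinned down.

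The \textbf{main obstacle} is precisely the step where \eqref{assumptionLargeGrid} was used in the forthcoming proofs: that hierarchical assumption imposes a strict ordering of the dimensions, so that a contamination propagating along one axis can never be confused with, or collide with, a contamination along another. Under the uniform bound the dimensions may be comparable, and one must instead rule out the \emph{exotic} low-loss translations that exploit this comparability --- the grid analogues of the wrap-around translations of \figref{torusTranslation4}, which exist precisely because some dimension falls below the contamination threshold. Concretely, showing that for dimensions at least $6$ every translation that shares an edge with, and strictly improves the loss of, a Dirac shift is itself non-pseudo-minimal --- so that the Dirac shift satisfies the inductive definition of pseudo-minimality --- is the crux of the conjecture. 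I expect this to require either a new global invariant, such as a per-axis count of oriented edges that a competing translation cannot match, or an exhaustive analysis of how a putative competitor could route around the border. It is the apparent loss of the dimension hierarchy as a tie-breaker that makes the uniform-bound statement resist the present proof technique, which is why it is recorded here only as a conjecture.
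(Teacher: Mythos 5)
You were asked to prove a statement that the paper itself does not prove: \conjref{conjecture6} is explicitly left open. The text preceding it states that the hierarchical assumption \eqref{assumptionLargeGrid} ``is necessary for the subsequent proofs,'' and the only evidence offered for the uniform bound $\gridDimensions\entry{\i} \geq 6$ is negative --- the counterexamples of \figref{counterexamples345} showing that minimality of the Dirac translations fails for dimensions $3$, $4$ and $5$. There is therefore no paper proof to compare your attempt against, and any submission claiming to settle the statement would have to contain a genuinely new argument.

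Judged as a proof, your proposal does have a gap, but it is exactly the gap that makes the statement a conjecture, and you identify it correctly and concede it honestly. The first half of your plan (replaying the contamination argument of \lemref{contaminationlemmaTorus} on interior vertices, where the absence of wrap-around makes the neighborhood intersections even cleaner than on the torus) is sound as far as it goes. The crux, as you say, is excluding the exotic low-loss competitors --- the grid analogues of the cycles in \figref{counterexamples345} --- once the dimension hierarchy of \eqref{assumptionLargeGrid} is no longer available as a tie-breaker; the paper's own slice-based lemmas (\lemref{lemmaNotInSlices}, \lemref{lemmaBeforeSliceIsOK}, \propref{e1isMinimal}) all lean on that hierarchy and do not survive under the uniform bound. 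You propose no concrete invariant or case analysis to close this, so nothing is proved. One minor imprecision worth flagging: under a Dirac shift $\diracVector_\j$, boundary vertices are not all sent to $\blackHole$; by \propref{translationsGrid} only the far face in dimension $\j$ is lost, and any contamination-style proof must force exactly that loss pattern rather than loss on the whole boundary.
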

	                    
	                    To ease exposition of the following results, we introduce the notion of \emph{slice of a grid graph} as follows:
	                    \begin{definition}[Grid graph slice]
	                        We call \emph{slice of a grid graph}, noted $\vertices_g^{\i,\j}$ the subset of vertices $\vertices_g$ such that they have their $\i\th$ coordinate equal to $\j$, \ie,
	                        $$
	                            \vertices_g^{\i,\j} = \condSet{\vertexVector \in \vertices_g}{ \vertexVector\entry{\i} = \j}
	                            \;.
	                        $$
	                    \end{definition}
	                    
	                    \begin{lemma}
	                        Let $\graph_g$ be a grid graph respecting assumption \eqref{assumptionLargeGrid}.
	                        If $\translation \in \translations{\graph_g}$ is a minimal translation, then:
	                        $$
	                            \exists \i : \forall \vertexVector \in \vertices_g^{1,\i} \cup \vertices_g^{1,\i+1} : \translation(\vertexVector) \neq \blackHole
	                            \;.
	                        $$
	                    \end{lemma}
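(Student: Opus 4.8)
The plan is to compare the loss of a minimal translation against the number of ``black-hole'' vertices that the negation of the statement would force, exploiting the growth condition \eqref{assumptionLargeGrid} on the first dimension $\gridDimensions\entry{1}$.

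First I would fix notation. Write $M = \prod_{\j = 2}^{\gridNbDimensions} \gridDimensions\entry{\j}$, which is exactly the cardinality of each slice $\vertices_g^{1,\j}$, and note that the $\gridDimensions\entry{1}$ slices $\vertices_g^{1,1}, \dots, \vertices_g^{1,\gridDimensions\entry{1}}$ partition $\vertices_g$. By \propref{translationsGrid} applied with $\i = 1$, the translation by $\diracVector_1$ belongs to $\translations{\graph_g}$ and has loss exactly $M$. Since $\translation$ is a minimal translation, it minimizes the loss, so $\loss(\translation) \leq M$. This upper bound is the quantity the rest of the argument must beat.

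Next I would argue by contradiction. Suppose that for every $\i \in \intInterval{1, \gridDimensions\entry{1} - 1}$ there is a vertex $\vertexVector \in \vertices_g^{1,\i} \cup \vertices_g^{1,\i+1}$ with $\translation(\vertexVector) = \blackHole$. Call a slice \emph{free} when none of its vertices maps to $\blackHole$, and let $F \subseteq \intInterval{1, \gridDimensions\entry{1}}$ collect the indices of the free slices. The contradiction hypothesis says precisely that no two consecutive indices are both free, i.e.\ $F$ contains no pair $\set{\i, \i+1}$; equivalently $F$ is an independent set in the path on $\intInterval{1, \gridDimensions\entry{1}}$, whence $\cardinal{F} \leq \lceil \gridDimensions\entry{1} / 2 \rceil$. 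Consequently at least $\gridDimensions\entry{1} - \lceil \gridDimensions\entry{1}/2\rceil = \lfloor \gridDimensions\entry{1}/2 \rfloor$ slices are non-free, and each non-free slice contributes at least one vertex with image $\blackHole$, so $\loss(\translation) \geq \lfloor \gridDimensions\entry{1}/2 \rfloor$.

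Finally I would invoke \eqref{assumptionLargeGrid}, which gives $\gridDimensions\entry{1} \geq 2 + 2M$ and hence $\lfloor \gridDimensions\entry{1}/2\rfloor \geq 1 + M > M$, contradicting $\loss(\translation) \leq M$. This forces the existence of an index $\i$ with both $\vertices_g^{1,\i}$ and $\vertices_g^{1,\i+1}$ free, which is exactly the claim. The whole argument is a covering/counting estimate, so the only real subtlety — and the step I would double-check — is that the constants in \eqref{assumptionLargeGrid} are exactly what is needed: the additive $2$ together with the factor $2$ guarantees that the \emph{forced} loss $\lfloor \gridDimensions\entry{1}/2\rfloor$ strictly exceeds the slice size $M$, which is precisely the \emph{achievable} loss realized by the $\diracVector_1$ translation. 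I would also confirm that using $\diracVector_1$ rather than another $\diracVector_\i$ is the correct choice, since it is its loss $M$, matched against the per-slice count along dimension $1$, that drives the separation.
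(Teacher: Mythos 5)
Your proof is correct and follows essentially the same route as the paper: bound the loss of a minimal translation by $M = \prod_{\j=2}^{\gridNbDimensions}\gridDimensions\entry{\j}$ via \propref{translationsGrid}, then show that the negation of the claim forces a loss exceeding $M$ under assumption \eqref{assumptionLargeGrid}. The only difference is one of rigor: where the paper informally argues that ``there cannot be a strict alternance'' of lost and non-lost vertices, you make the counting precise via the independent-set bound $\cardinal{F} \leq \lceil \gridDimensions\entry{1}/2 \rceil$ on free slices, yielding the forced loss $\lfloor \gridDimensions\entry{1}/2 \rfloor \geq 1 + M > M$.
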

	                    
	                    \begin{proof}
	                        By \propref{translationsGrid}, we have an upper bound on the loss of minimal translations when translating vertices along a single dimension.
	                        When considering dimension $1$, any minimal translation has therefore at most $\prod_{\j \in \intInterval{2, \gridNbDimensions}} \gridDimensions\entry{\j}$ vertices that have their image through $\translation$ being $\blackHole$.
	                        From assumption \eqref{assumptionLargeGrid}, we have that $\gridDimensions\entry{1} \geq 2 + 2 \prod_{\j \in \intInterval{2, \gridNbDimensions}} \gridDimensions\entry{\j}$.
	                        Since $\gridDimensions\entry{1} - 2 \prod_{\j \in \intInterval{2, \gridNbDimensions}} \gridDimensions\entry{\j} + 1 > 1$, there cannot be a strict alternance of vertices with image in $\vertices_g$, and vertices with image equal to $\blackHole$, as it would violate the upper bound on the loss.
	                        Therefore, there exist two slices $\vertices_g^{1,\i}$ and $\vertices_g^{1,\i+1}$ that contain no vertex $\vertexVector$ such that $\translation(\vertexVector) = \blackHole$.
	                    \end{proof}
	                    
	                    \begin{lemma}
	                        Let $\graph_g$ be a grid graph respecting assumption \eqref{assumptionLargeGrid}, and let $\translation \in \translations{\graph_g}$ be a minimal translation.
	                        If $\vertices_g^{1,\i}$ and $\vertices_g^{1,\i+1}$ are two slices containing no vertex of which image by $\translation$ is $\blackHole$, then:
	                        $$
	                            \translation(\vertices_g^{1,\i} \cup \vertices_g^{1,\i+1}) \not\subset \vertices_g^{1,\i} \cup \vertices_g^{1,\i+1}
	                            \;.
	                        $$
	                        \label{lemmaNotInSlices}
	                    \end{lemma}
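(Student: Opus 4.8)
The plan is to argue by contradiction. Suppose $\translation(\vertices_g^{1,\i} \cup \vertices_g^{1,\i+1}) \subseteq \vertices_g^{1,\i} \cup \vertices_g^{1,\i+1}$, and write $U = \vertices_g^{1,\i} \cup \vertices_g^{1,\i+1}$. Since neither slice contains a vertex whose image is $\blackHole$, the map $\translation$ is injective on $U$; as $\cardinal{\translation(U)} = \cardinal{U}$ and $\translation(U) \subseteq U$, the restriction $\translation|_U$ is in fact a bijection of $U$ onto itself. The subgraph of $\graph_g$ induced by $U$ is itself a grid whose first dimension has size $2$ (the two admissible values $\i,\i+1$ of the first coordinate) and whose remaining dimensions are $\gridDimensions\entry{2}, \dots, \gridDimensions\entry{\gridNbDimensions}$; by assumption \eqref{assumptionLargeGrid} these are all at least $3$ and pairwise distinct.

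First I would pin down $\translation|_U$ exactly. Because $\translation$ is SNP and $\translation(U)=U$, the restriction $\translation|_U$ preserves adjacency in both directions on $U$, hence is an automorphism of this induced grid; because $\translation$ is EC, it sends every vertex to a neighbor, i.e. to a vertex at $\lNorm{1}$-distance $1$. Since all axes of $U$ have distinct lengths, the automorphisms of the induced grid reduce to products of axis reflections. A reflection of an axis of length $\geq 3$ moves some vertex by distance $0$ or $>1$, and a product of two or more reflections moves a generic vertex by distance $\geq 2$; the only automorphism moving \emph{every} vertex by distance exactly $1$ is the reflection of the length-$2$ first axis, i.e. the flip $\translation(\vertexVector) = \vertexVector + \diracVector_1$ for $\vertexVector \in \vertices_g^{1,\i}$ and $\translation(\vertexVector) = \vertexVector - \diracVector_1$ for $\vertexVector \in \vertices_g^{1,\i+1}$. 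I expect this rigidity step to be the main obstacle, since it combines the classification of grid automorphisms with the $\lNorm{1}$-distance-$1$ constraint coming from EC.

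Next I would contaminate the neighboring slices. Take a vertex $\vertexVector \in \vertices_g^{1,\i+2}$ (if this slice exists) with $\translation(\vertexVector) \neq \blackHole$. Its neighbor $\vertexVector - \diracVector_1 \in \vertices_g^{1,\i+1}$ is flipped to $\vertexVector - 2\diracVector_1$, so SNP forces $\translation(\vertexVector) \in \neighborhood{1}{\vertexVector - 2\diracVector_1}$ while EC forces $\translation(\vertexVector) \in \neighborhood{1}{\vertexVector}$; intersecting these two neighborhoods leaves only $\vertexVector - \diracVector_1$, which is already the image of $\vertexVector - 2\diracVector_1$ under the flip, contradicting injectivity. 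Hence $\translation(\vertexVector) = \blackHole$, and the symmetric computation shows every vertex of $\vertices_g^{1,\i-1}$ (when it exists) is mapped to $\blackHole$ as well.

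Finally I would contradict minimality. On $\vertices_g^{1,\i}$ the flip coincides with the Dirac translation $\translation_{\diracVector_1}$ of \propref{translationsGrid} (there $\vertexVector + \diracVector_1$ is always in the grid), so $\translation$ and $\translation_{\diracVector_1}$ share a value, and $\loss(\translation_{\diracVector_1}) = \prod_{\j \in \intInterval{2, \gridNbDimensions}} \gridDimensions\entry{\j}$. If both slices $\vertices_g^{1,\i-1}$ and $\vertices_g^{1,\i+2}$ exist, the previous step maps two full slices to $\blackHole$, so $\loss(\translation) \geq 2\prod_{\j \in \intInterval{2, \gridNbDimensions}} \gridDimensions\entry{\j} > \loss(\translation_{\diracVector_1})$, whence $\translation \prec \translation_{\diracVector_1}$, contradicting the minimality of $\translation$. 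In the remaining boundary case exactly one neighboring slice exists; the vertices on the far side of the black-holed slice can reach neither it (by the same SNP obstruction as above) nor any lower slice (by EC), so $\translation$ restricts there to a translation on a sub-grid all of whose dimensions are $\geq 3$. Such a sub-grid has no EC automorphism, hence no lossless translation, so at least one further vertex is mapped to $\blackHole$; this again gives $\loss(\translation) > \prod_{\j \in \intInterval{2, \gridNbDimensions}} \gridDimensions\entry{\j}$ and the same contradiction, proving $\translation(\vertices_g^{1,\i} \cup \vertices_g^{1,\i+1}) \not\subset \vertices_g^{1,\i} \cup \vertices_g^{1,\i+1}$.
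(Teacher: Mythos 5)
Your proof is correct in substance, but it takes a genuinely different route from the paper's. The paper never negates the conclusion; it invokes \propref{translationsAreCyclesOrPaths} to split orbits into paths reaching $\blackHole$ --- which must exit the two slices, giving the result --- and cycles, and then excludes cycles by an elementary sweep of local configurations (all vertices move to the next slice; the two slices swap; a four-vertex ``turn''; motion inside a slice), each refuted by injectivity, a neighborhood-preservation failure, or the loss bound of \propref{translationsGrid}. You instead assume $\translation(U) \subseteq U$ for $U = \vertices_g^{1,\i} \cup \vertices_g^{1,\i+1}$, observe that injectivity and finiteness make $\translation$ restricted to $U$ a bijection of $U$, recognize this restriction as an EC automorphism of the induced $2 \times \gridDimensions\entry{2} \times \dots \times \gridDimensions\entry{\gridNbDimensions}$ grid, and use rigidity of grid automorphisms to force the flip. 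This collapses the paper's cases 2a)--2d) into a single algebraic step, at the price of importing the classification of automorphisms of Cartesian products of paths (a Sabidussi-type theorem), which the paper's self-contained case analysis avoids. From the flip onward the two arguments converge: neighboring slices are black-holed, the loss strictly exceeds that of the Dirac translation of \propref{translationsGrid}, which shares values with $\translation$ on $\vertices_g^{1,\i}$, so $\translation \prec \translation_{\diracVector_1}$ contradicts minimality. A genuine merit of your version is the explicit treatment of the boundary case where only one of $\vertices_g^{1,\i-1}$, $\vertices_g^{1,\i+2}$ exists; the paper's case 2b) tacitly assumes both exist when it claims a loss ``twice higher'' than the bound.

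Two steps still need tightening. First, a vertex $\vertexVector \in \vertices_g^{1,\i+3}$ is not excluded from mapping into the black-holed slice ``by the same SNP obstruction as above'': its neighbor in $\vertices_g^{1,\i+2}$ has image $\blackHole$ and therefore imposes no SNP constraint whatsoever. The correct obstruction pairs $\vertexVector$ with $\vertexVector - 3\diracVector_1 \in \vertices_g^{1,\i}$: these two vertices are not adjacent, yet if $\translation(\vertexVector) = \vertexVector - \diracVector_1$, their images $\vertexVector - \diracVector_1$ and $\translation(\vertexVector - 3\diracVector_1) = \vertexVector - 2\diracVector_1$ would be adjacent, creating a neighborhood and violating SNP. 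Second, your assertion that a grid with all dimensions at least $3$ admits no lossless translation needs its (short) proof: a lossless translation is an adjacency-preserving bijection, hence preserves degrees; a corner has degree $\gridNbDimensions$ while every one of its neighbors has degree $\gridNbDimensions + 1$, so the EC requirement fails at any corner. With these two repairs, and a citation for the automorphism classification, your argument is complete.
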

	                    
	                    \begin{proof}
	                        Let us consider a vertex $\vertexVector_1 \in \vertices_g^{1,\i} \cup \vertices_g^{1,\i+1}$.
	                        \propref{translationsAreCyclesOrPaths} tells us that there are two cases to consider:
	                        \begin{enumerate}
	                            \item $\exists n : \translation^n(\vertexVector_1) = \blackHole$.
	                                  Since no vertex in $\vertices_g^{1,\i} \cup \vertices_g^{1,\i+1}$ has its image being $\blackHole$, the sequence $\sequence{\translation^n(\vertexVector_1)}_n$ necessarily contains a vertex $\vertexVector_2 \not\in \vertices_g^{1,\i} \cup \vertices_g^{1,\i+1}$.
	                            \item $\exists n : \translation^n(\vertexVector_1) = \vertexVector_1$.
	                                  In this case, we distinguish the following situations, illustrated on a $\vector{8\\3}$ grid graph:
	                                  \begin{enumerate}
	                                      \item Every vertex from $\vertices_g^{1,\i}$ is sent to its neighbor in $\vertices_g^{1,\i+1}$, and every vertex from $\vertices_g^{1,\i+1}$ is sent to its neighbor in $\vertices_g^{1,\i+2}$.
	                                            \begin{center}
	                                                \begin{tikzpicture}[thick]
	                                                  \draw [fill=gray!20, gray!20] (2.55, -1.5) rectangle (3.45, 2.4);
	                                                  \draw [fill=gray!20, gray!20] (3.55, -1.5) rectangle (4.45, 2.4);
	                                                  \node (00) [draw, fill=white, circle] at (0,0) {};
	                                                  \node (01) [draw, fill=white, circle] at (0,1) {};
	                                                  \node (02) [draw, fill=white, circle] at (0,2) {};
	                                                  \node (10) [draw, fill=white, circle] at (1,0) {};
	                                                  \node (11) [draw, fill=white, circle] at (1,1) {};
	                                                  \node (12) [draw, fill=white, circle] at (1,2) {};
	                                                  \node (20) [draw, fill=white, circle] at (2,0) {};
	                                                  \node (21) [draw, fill=white, circle] at (2,1) {};
	                                                  \node (22) [draw, fill=white, circle] at (2,2) {};
	                                                  \node (30) [draw, fill=white, circle] at (3,0) {};
	                                                  \node (31) [draw, fill=white, circle] at (3,1) {};
	                                                  \node (32) [draw, fill=white, circle] at (3,2) {};
	                                                  \node (40) [draw, fill=white, circle] at (4,0) {};
	                                                  \node (41) [draw, fill=white, circle] at (4,1) {};
	                                                  \node (42) [draw, fill=white, circle] at (4,2) {};
	                                                  \node (50) [draw, fill=white, circle] at (5,0) {};
	                                                  \node (51) [draw, fill=white, circle] at (5,1) {};
	                                                  \node (52) [draw, fill=white, circle] at (5,2) {};
	                                                  \node (60) [draw, fill=white, circle] at (6,0) {};
	                                                  \node (61) [draw, fill=white, circle] at (6,1) {};
	                                                  \node (62) [draw, fill=white, circle] at (6,2) {};
	                                                  \node (70) [draw, fill=white, circle] at (7,0) {};
	                                                  \node (71) [draw, fill=white, circle] at (7,1) {};
	                                                  \node (72) [draw, fill=white, circle] at (7,2) {};
	                                                  \node (slice1) at (3,-1) {$\vertices_g^{1,\i}$};
	                                                  \node (slice2) at (4,-1) {$\vertices_g^{1,\i+1}$};
	                                                  \path[dotted]
	                                                  \foreach \i in {0,1,2,3,4,5,6,7}{
	                                                    \foreach \j/\jj in {0/1,1/2}{
	                                                      (\i\j) edge (\i\jj)
	                                                    }
	                                                  }
	                                                  \foreach \i/\ii in {0/1,1/2,2/3,3/4,4/5,5/6,6/7}{
	                                                    \foreach \j in {0,1,2}{
	                                                      (\i\j) edge (\ii\j)
	                                                    }
	                                                  };
	                                                  \path[->, red]
	                                                  (30) edge (40)
	                                                  (31) edge (41)
	                                                  (32) edge (42)
	                                                  (40) edge (50)
	                                                  (41) edge (51)
	                                                  (42) edge (52);
	                                                \end{tikzpicture}
	                                            \end{center}
	                                            In this situation, there cannot exist a cycle such that $\translation^n(\vertexVector_1) = \vertexVector_1$ due to injectivity of $\translation$.
	                                            Note that this situation also applies in the case where every vertex from $\vertices_g^{1,\i}$ is sent to its neighbor in $\vertices_g^{1,\i-1}$, and every vertex from $\vertices_g^{1,\i+1}$ is sent to its neighbor in $\vertices_g^{1,\i}$.
	                                      \item Every vertex from $\vertices_g^{1,\i}$ is sent to its neighbor in $\vertices_g^{1,\i+1}$, and every vertex from $\vertices_g^{1,\i+1}$ is sent to its neighbor in $\vertices_g^{1,\i}$.
	                                            \begin{center}
	                                                \begin{tikzpicture}[thick]
	                                                  \draw [fill=gray!20, gray!20] (2.55, -1.5) rectangle (3.45, 2.4);
	                                                  \draw [fill=gray!20, gray!20] (3.55, -1.5) rectangle (4.45, 2.4);
	                                                  \node (00) [draw, fill=white, circle] at (0,0) {};
	                                                  \node (01) [draw, fill=white, circle] at (0,1) {};
	                                                  \node (02) [draw, fill=white, circle] at (0,2) {};
	                                                  \node (10) [draw, fill=white, circle] at (1,0) {};
	                                                  \node (11) [draw, fill=white, circle] at (1,1) {};
	                                                  \node (12) [draw, fill=white, circle] at (1,2) {};
	                                                  \node (20) [fill=black, circle] at (2,0) {};
	                                                  \node (21) [fill=black, circle] at (2,1) {};
	                                                  \node (22) [fill=black, circle] at (2,2) {};
	                                                  \node (30) [draw, fill=white, circle] at (3,0) {};
	                                                  \node (31) [draw, fill=white, circle] at (3,1) {};
	                                                  \node (32) [draw, fill=white, circle] at (3,2) {};
	                                                  \node (40) [draw, fill=white, circle] at (4,0) {};
	                                                  \node (41) [draw, fill=white, circle] at (4,1) {};
	                                                  \node (42) [draw, fill=white, circle] at (4,2) {};
	                                                  \node (50) [fill=black, circle] at (5,0) {};
	                                                  \node (51) [fill=black, circle] at (5,1) {};
	                                                  \node (52) [fill=black, circle] at (5,2) {};
	                                                  \node (60) [draw, fill=white, circle] at (6,0) {};
	                                                  \node (61) [draw, fill=white, circle] at (6,1) {};
	                                                  \node (62) [draw, fill=white, circle] at (6,2) {};
	                                                  \node (70) [draw, fill=white, circle] at (7,0) {};
	                                                  \node (71) [draw, fill=white, circle] at (7,1) {};
	                                                  \node (72) [draw, fill=white, circle] at (7,2) {};
	                                                  \node (slice1) at (3,-1) {$\vertices_g^{1,\i}$};
	                                                  \node (slice2) at (4,-1) {$\vertices_g^{1,\i+1}$};
	                                                  \path[dotted]
	                                                  \foreach \i in {0,1,2,3,4,5,6,7}{
	                                                    \foreach \j/\jj in {0/1,1/2}{
	                                                      (\i\j) edge (\i\jj)
	                                                    }
	                                                  }
	                                                  \foreach \i/\ii in {0/1,1/2,2/3,3/4,4/5,5/6,6/7}{
	                                                    \foreach \j in {0,1,2}{
	                                                      (\i\j) edge (\ii\j)
	                                                    }
	                                                  };
	                                                  \path[->, red]
	                                                  (30) edge (40)
	                                                  (31) edge (41)
	                                                  (32) edge (42)
	                                                  (40) edge (30)
	                                                  (41) edge (31)
	                                                  (42) edge (32);
	                                                \end{tikzpicture}
	                                            \end{center}
	                                            This causes all vertices in $\vertices_g^{1,\i-1} \cup \vertices_g^{1,\i+2}$ to be sent to $\blackHole$, leading to a loss twice higher than the upper bound for minimal translations given in \propref{translationsGrid}.
	                                            Therefore we reach a contradiction.
	                                      \item There exists a vertex $\vertexVector_2 \in \vertices_g^{1,\i}$ such that $\translation(\vertex_2) \in \vertices_g^{1,\i+1}$, and a vertex There exists a vertex $\vertexVector_3 \in \vertices_g^{1,\i+1} \cap \neighborhood{1}{\translation(\vertex_2)}$ such that $\translation(\vertex_3) \in \vertices_g^{1,\i}$.
	                                            Necessarily, $\translation^2(\vertexVector_2) = \vertexVector_3$ and $\translation^2(\vertexVector_3) = \vertexVector_2$, causing apparition of a cycle of $4$ vertices.
	                                            \begin{center}
	                                                \begin{tikzpicture}[thick]
	                                                  \draw [fill=gray!20, gray!20] (2.55, -1.5) rectangle (3.45, 2.4);
	                                                  \draw [fill=gray!20, gray!20] (3.55, -1.5) rectangle (4.45, 2.4);
	                                                  \node (00) [draw, fill=white, circle] at (0,0) {};
	                                                  \node (01) [draw, fill=white, circle] at (0,1) {};
	                                                  \node (02) [draw, fill=white, circle] at (0,2) {};
	                                                  \node (10) [draw, fill=white, circle] at (1,0) {};
	                                                  \node (11) [draw, fill=white, circle] at (1,1) {};
	                                                  \node (12) [draw, fill=white, circle] at (1,2) {};
	                                                  \node (20) [draw, fill=white, circle] at (2,0) {};
	                                                  \node (21) [draw, fill=white, circle] at (2,1) {};
	                                                  \node (22) [draw, fill=white, circle] at (2,2) {};
	                                                  \node (30) [draw, fill=white, circle] at (3,0) {};
	                                                  \node (31) [draw, fill=white, circle] at (3,1) {};
	                                                  \node (32) [draw, fill=black, circle] at (3,2) {};
	                                                  \node (40) [draw, fill=white, circle] at (4,0) {};
	                                                  \node (41) [draw, fill=white, circle] at (4,1) {};
	                                                  \node (42) [draw, fill=white, circle] at (4,2) {};
	                                                  \node (50) [draw, fill=white, circle] at (5,0) {};
	                                                  \node (51) [draw, fill=white, circle] at (5,1) {};
	                                                  \node (52) [draw, fill=white, circle] at (5,2) {};
	                                                  \node (60) [draw, fill=white, circle] at (6,0) {};
	                                                  \node (61) [draw, fill=white, circle] at (6,1) {};
	                                                  \node (62) [draw, fill=white, circle] at (6,2) {};
	                                                  \node (70) [draw, fill=white, circle] at (7,0) {};
	                                                  \node (71) [draw, fill=white, circle] at (7,1) {};
	                                                  \node (72) [draw, fill=white, circle] at (7,2) {};
	                                                  \node (slice1) at (3,-1) {$\vertices_g^{1,\i}$};
	                                                  \node (slice2) at (4,-1) {$\vertices_g^{1,\i+1}$};
	                                                  \path[dotted]
	                                                  \foreach \i in {0,1,2,3,4,5,6,7}{
	                                                    \foreach \j/\jj in {0/1,1/2}{
	                                                      (\i\j) edge (\i\jj)
	                                                    }
	                                                  }
	                                                  \foreach \i/\ii in {0/1,1/2,2/3,3/4,4/5,5/6,6/7}{
	                                                    \foreach \j in {0,1,2}{
	                                                      (\i\j) edge (\ii\j)
	                                                    }
	                                                  };
	                                                  \path[->, red]
	                                                  (30) edge (40)
	                                                  (41) edge (31);
	                                                  \path[->, red, dotted]
	                                                  (40) edge (41)
	                                                  (31) edge (30)
	                                                  (42) edge (32);
	                                                \end{tikzpicture}
	                                            \end{center}
	                                            Then, since all dimensions are larger than $3$, at least one of the vertices from this cycle has a neighbor $\vertexVector_4 \in \vertices_g^{1,\i} \cup \vertices_g^{1,\i+1}$ for which neighborhood cannot be preserved.
	                                            As a consequence, there exists a vertex in $\vertices_g^{1,\i} \cup \vertices_g^{1,\i+1}$ that has its image equal to $\blackHole$, and we reach a contradiction.
	                                            
	                                            Additionally, note that in the case where the diedges of opposite directions are not adjacent, there is necessarily at least a vertex of which image is $\blackHole$ between them.
	                                      \item Every vertex from $\vertices_g^{1,\i}$ (resp. $\vertices_g^{1,\i+1}$) is sent to a neighbor in $\vertices_g^{1,\i}$ (resp. $\vertices_g^{1,\i+1}$).
	                                            If the corresponding diedges are of opposite directions, this situation eventually leads to a \emph{turn}, in this case situation c) concludes.
	                                            If they take the same direction, then due to border effects, at least a vertex in $\vertices_g^{1,\i} \cup \vertices_g^{1,\i+1}$ has its image equal to $\blackHole$, leading to a contradiction.
	                                            \begin{center}
	                                                \begin{tikzpicture}[thick]
	                                                  \draw [fill=gray!20, gray!20] (2.55, -1.5) rectangle (3.45, 2.4);
	                                                  \draw [fill=gray!20, gray!20] (3.55, -1.5) rectangle (4.45, 2.4);
	                                                  \node (00) [draw, fill=white, circle] at (0,0) {};
	                                                  \node (01) [draw, fill=white, circle] at (0,1) {};
	                                                  \node (02) [draw, fill=white, circle] at (0,2) {};
	                                                  \node (10) [draw, fill=white, circle] at (1,0) {};
	                                                  \node (11) [draw, fill=white, circle] at (1,1) {};
	                                                  \node (12) [draw, fill=white, circle] at (1,2) {};
	                                                  \node (20) [draw, fill=white, circle] at (2,0) {};
	                                                  \node (21) [draw, fill=white, circle] at (2,1) {};
	                                                  \node (22) [draw, fill=white, circle] at (2,2) {};
	                                                  \node (30) [draw, fill=white, circle] at (3,0) {};
	                                                  \node (31) [draw, fill=white, circle] at (3,1) {};
	                                                  \node (32) [draw, fill=black, circle] at (3,2) {};
	                                                  \node (40) [draw, fill=white, circle] at (4,0) {};
	                                                  \node (41) [draw, fill=white, circle] at (4,1) {};
	                                                  \node (42) [draw, fill=white, circle] at (4,2) {};
	                                                  \node (50) [draw, fill=white, circle] at (5,0) {};
	                                                  \node (51) [draw, fill=black, circle] at (5,1) {};
	                                                  \node (52) [draw, fill=white, circle] at (5,2) {};
	                                                  \node (60) [draw, fill=white, circle] at (6,0) {};
	                                                  \node (61) [draw, fill=white, circle] at (6,1) {};
	                                                  \node (62) [draw, fill=white, circle] at (6,2) {};
	                                                  \node (70) [draw, fill=white, circle] at (7,0) {};
	                                                  \node (71) [draw, fill=white, circle] at (7,1) {};
	                                                  \node (72) [draw, fill=white, circle] at (7,2) {};
	                                                  \node (slice1) at (3,-1) {$\vertices_g^{1,\i}$};
	                                                  \node (slice2) at (4,-1) {$\vertices_g^{1,\i+1}$};
	                                                  \path[dotted]
	                                                  \foreach \i in {0,1,2,3,4,5,6,7}{
	                                                    \foreach \j/\jj in {0/1,1/2}{
	                                                      (\i\j) edge (\i\jj)
	                                                    }
	                                                  }
	                                                  \foreach \i/\ii in {0/1,1/2,2/3,3/4,4/5,5/6,6/7}{
	                                                    \foreach \j in {0,1,2}{
	                                                      (\i\j) edge (\ii\j)
	                                                    }
	                                                  };
	                                                  \path[->, red]
	                                                  (30) edge (31)
	                                                  (31) edge (32)
	                                                  (40) edge (41)
	                                                  (41) edge (42)
	                                                  (42) edge (52);
	                                                \end{tikzpicture}
	                                            \end{center}
	                                            \begin{center}
	                                                \begin{tikzpicture}[thick]
	                                                  \draw [fill=gray!20, gray!20] (2.55, -1.5) rectangle (3.45, 2.4);
	                                                  \draw [fill=gray!20, gray!20] (3.55, -1.5) rectangle (4.45, 2.4);
	                                                  \node (00) [draw, fill=white, circle] at (0,0) {};
	                                                  \node (01) [draw, fill=white, circle] at (0,1) {};
	                                                  \node (02) [draw, fill=white, circle] at (0,2) {};
	                                                  \node (10) [draw, fill=white, circle] at (1,0) {};
	                                                  \node (11) [draw, fill=white, circle] at (1,1) {};
	                                                  \node (12) [draw, fill=white, circle] at (1,2) {};
	                                                  \node (20) [draw, fill=white, circle] at (2,0) {};
	                                                  \node (21) [draw, fill=white, circle] at (2,1) {};
	                                                  \node (22) [draw, fill=white, circle] at (2,2) {};
	                                                  \node (30) [draw, fill=white, circle] at (3,0) {};
	                                                  \node (31) [draw, fill=black, circle] at (3,1) {};
	                                                  \node (32) [draw, fill=white, circle] at (3,2) {};
	                                                  \node (40) [draw, fill=white, circle] at (4,0) {};
	                                                  \node (41) [draw, fill=white, circle] at (4,1) {};
	                                                  \node (42) [draw, fill=white, circle] at (4,2) {};
	                                                  \node (50) [draw, fill=white, circle] at (5,0) {};
	                                                  \node (51) [draw, fill=white, circle] at (5,1) {};
	                                                  \node (52) [draw, fill=black, circle] at (5,2) {};
	                                                  \node (60) [draw, fill=white, circle] at (6,0) {};
	                                                  \node (61) [draw, fill=white, circle] at (6,1) {};
	                                                  \node (62) [draw, fill=white, circle] at (6,2) {};
	                                                  \node (70) [draw, fill=white, circle] at (7,0) {};
	                                                  \node (71) [draw, fill=white, circle] at (7,1) {};
	                                                  \node (72) [draw, fill=white, circle] at (7,2) {};
	                                                  \node (slice1) at (3,-1) {$\vertices_g^{1,\i}$};
	                                                  \node (slice2) at (4,-1) {$\vertices_g^{1,\i+1}$};
	                                                  \path[dotted]
	                                                  \foreach \i in {0,1,2,3,4,5,6,7}{
	                                                    \foreach \j/\jj in {0/1,1/2}{
	                                                      (\i\j) edge (\i\jj)
	                                                    }
	                                                  }
	                                                  \foreach \i/\ii in {0/1,1/2,2/3,3/4,4/5,5/6,6/7}{
	                                                    \foreach \j in {0,1,2}{
	                                                      (\i\j) edge (\ii\j)
	                                                    }
	                                                  };
	                                                  \path[->, red]
	                                                  (30) edge (31)
	                                                  (40) edge (41)
	                                                  (41) edge (42)
	                                                  (42) edge (32);
	                                                \end{tikzpicture}
	                                            \end{center}
	                                            \begin{center}
	                                                \begin{tikzpicture}[thick]
	                                                  \draw [fill=gray!20, gray!20] (2.55, -1.5) rectangle (3.45, 2.4);
	                                                  \draw [fill=gray!20, gray!20] (3.55, -1.5) rectangle (4.45, 2.4);
	                                                  \node (00) [draw, fill=white, circle] at (0,0) {};
	                                                  \node (01) [draw, fill=white, circle] at (0,1) {};
	                                                  \node (02) [draw, fill=white, circle] at (0,2) {};
	                                                  \node (10) [draw, fill=white, circle] at (1,0) {};
	                                                  \node (11) [draw, fill=white, circle] at (1,1) {};
	                                                  \node (12) [draw, fill=white, circle] at (1,2) {};
	                                                  \node (20) [draw, fill=white, circle] at (2,0) {};
	                                                  \node (21) [draw, fill=white, circle] at (2,1) {};
	                                                  \node (22) [draw, fill=white, circle] at (2,2) {};
	                                                  \node (30) [draw, fill=white, circle] at (3,0) {};
	                                                  \node (31) [draw, fill=white, circle] at (3,1) {};
	                                                  \node (32) [draw, fill=black, circle] at (3,2) {};
	                                                  \node (40) [draw, fill=white, circle] at (4,0) {};
	                                                  \node (41) [draw, fill=white, circle] at (4,1) {};
	                                                  \node (42) [draw, fill=black, circle] at (4,2) {};
	                                                  \node (50) [draw, fill=white, circle] at (5,0) {};
	                                                  \node (51) [draw, fill=white, circle] at (5,1) {};
	                                                  \node (52) [draw, fill=white, circle] at (5,2) {};
	                                                  \node (60) [draw, fill=white, circle] at (6,0) {};
	                                                  \node (61) [draw, fill=white, circle] at (6,1) {};
	                                                  \node (62) [draw, fill=white, circle] at (6,2) {};
	                                                  \node (70) [draw, fill=white, circle] at (7,0) {};
	                                                  \node (71) [draw, fill=white, circle] at (7,1) {};
	                                                  \node (72) [draw, fill=white, circle] at (7,2) {};
	                                                  \node (slice1) at (3,-1) {$\vertices_g^{1,\i}$};
	                                                  \node (slice2) at (4,-1) {$\vertices_g^{1,\i+1}$};
	                                                  \path[dotted]
	                                                  \foreach \i in {0,1,2,3,4,5,6,7}{
	                                                    \foreach \j/\jj in {0/1,1/2}{
	                                                      (\i\j) edge (\i\jj)
	                                                    }
	                                                  }
	                                                  \foreach \i/\ii in {0/1,1/2,2/3,3/4,4/5,5/6,6/7}{
	                                                    \foreach \j in {0,1,2}{
	                                                      (\i\j) edge (\ii\j)
	                                                    }
	                                                  };
	                                                  \path[->, red]
	                                                  (30) edge (31)
	                                                  (31) edge (32)
	                                                  (40) edge (41)
	                                                  (41) edge (42);
	                                                \end{tikzpicture}
	                                            \end{center}
	                                  \end{enumerate}
	                                  Note that all these situations lead to a contradiction.
	                                  Therefore, a minimal translation $\translation$ cannot lead to the creation of a cycle such that $\translation^n(\vertexVector_1) = \vertexVector_1$.
	                                  As a consequence, only case 1) applies, and $\translation(\vertices_g^{1,\i} \cup \vertices_g^{1,\i+1}) \not\subset \vertices_g^{1,\i} \cup \vertices_g^{1,\i+1}$.
	                        \end{enumerate}
	                    \end{proof}
	                    
	                    \begin{corollary}
	                        Let $\graph_g$ be a grid graph respecting assumption \eqref{assumptionLargeGrid}, and let $\translation \in \translations{\graph_g}$ be a minimal translation.
	                        Let $\vertices_g^{1,\i}$ and $\vertices_g^{1,\i+1}$ be two slices containing no vertex of which image through $\translation$ is $\blackHole$.
	                        Every vertex from $\vertices_g^{1,\i}$ is sent to its neighbor in $\vertices_g^{1,\i+1}$, and every vertex from $\vertices_g^{1,\i+1}$ is sent to its neighbor in $\vertices_g^{1,\i+2}$.
	                        \label{everyVertexToRight}
	                    \end{corollary}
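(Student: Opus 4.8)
The plan is to combine the no-cycle conclusion of \lemref{lemmaNotInSlices} with the SNP, EC and injectivity constraints to pin down $\translation$ on the two slices exactly. The standing facts are: by hypothesis no vertex of $\vertices_g^{1,\i} \cup \vertices_g^{1,\i+1}$ has image $\blackHole$; by \lemref{lemmaNotInSlices} none of these vertices lies on a directed cycle (so, by \propref{translationsAreCyclesOrPaths}, each lies on a path whose orbit must leave the two slices); and since $\translation$ is EC, the image of $\vertexVector \in \vertices_g^{1,\i}$ differs from $\vertexVector$ by a single $\pm\diracVector_\k$, hence lies in $\vertices_g^{1,\i-1}$, $\vertices_g^{1,\i}$ or $\vertices_g^{1,\i+1}$ (symmetrically for $\vertices_g^{1,\i+1}$). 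The first, and load-bearing, step is to show that no vertex of either slice is mapped inside its own slice, i.e. that no image is of the form $\vertexVector \pm \diracVector_\k$ with $\k \neq 1$. This is exactly situations (c) and (d) treated inside \lemref{lemmaNotInSlices}: such a move either produces the forbidden four-cycle, or, propagated to the border of dimension $\k$, forces some vertex of the two slices to have image $\blackHole$, contradicting the hypothesis. This is where the size assumption \eqref{assumptionLargeGrid} (or the weaker bound conjectured in \conjref{conjecture6}) is needed, and it is the main obstacle of the argument.

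Granting this, every vertex of the two slices moves by $+\diracVector_1$ or by $-\diracVector_1$. I would then show this direction is common to the whole slab $\vertices_g^{1,\i} \cup \vertices_g^{1,\i+1}$, which is connected. Take any edge of the slab. If its endpoints moved in opposite directions along dimension $1$, there are two possibilities. For an edge $\edge{\vertexVector}{\vertexVector + \diracVector_\k}$ internal to a slice ($\k \neq 1$), opposite moves send the endpoints to $\vertexVector + \diracVector_1$ and $(\vertexVector + \diracVector_\k) - \diracVector_1$, which differ by $2\diracVector_1 - \diracVector_\k$ and are therefore at distance $3$ and non-adjacent, contradicting the SNP property. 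For the inter-slice edge $\edge{\vertexVector}{\vertexVector + \diracVector_1}$ with $\vertexVector \in \vertices_g^{1,\i}$, the ``diverging'' choice (images $\vertexVector - \diracVector_1$ and $\vertexVector + 2\diracVector_1$) again places the images at distance $3$, while the ``converging'' choice ($\vertexVector \mapsto \vertexVector + \diracVector_1 \mapsto \vertexVector$) closes a $2$-cycle inside the two slices, excluded by \lemref{lemmaNotInSlices}. Hence opposite directions are impossible across every edge, and by connectivity the whole slab shifts in a single direction; relabelling if necessary, I take it to be $+\diracVector_1$.

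Finally, since each vertex of $\vertices_g^{1,\i}$ moves by $+\diracVector_1$, EC forces $\translation(\vertexVector) = \vertexVector + \diracVector_1 \in \vertices_g^{1,\i+1}$, and likewise each vertex of $\vertices_g^{1,\i+1}$ satisfies $\translation(\vertexVector) = \vertexVector + \diracVector_1 \in \vertices_g^{1,\i+2}$, which is precisely the claim. All the steps after the exclusion of in-slice moves are elementary consequences of SNP, EC and the absence of cycles; the only delicate point, and the place where the hypotheses on the dimensions are really used, is that first exclusion.
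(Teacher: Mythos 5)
Your proof is correct, and its core dependence is the same as the paper's: the paper proves this corollary in a single sentence, by deferring entirely to cases 2b), 2c) and 2d) of the proof of \lemref{lemmaNotInSlices}, and your ``load-bearing'' first step --- excluding in-slice moves $\vertexVector \mapsto \vertexVector \pm \diracVector_\k$ with $\k \neq 1$ --- makes exactly the same deferral to cases (c) and (d) (note that, as stated there, those cases are global configurations, so the correspondence is not quite ``exact''; it needs the propagation argument you sketch, which is also what the paper implicitly relies on). Where you genuinely differ is in everything after that. The paper's remaining cases are again global (all of slice $\i$ swaps with all of slice $\i+1$, or a partial cross-slice swap creating a $4$-cycle), and taken literally they are not exhaustive: mixed configurations, where different vertices of the two slices move in different directions along dimension $1$, are never named. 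Your edge-local argument closes that hole cleanly: for any edge of $\vertices_g^{1,\i} \cup \vertices_g^{1,\i+1}$ whose endpoints move oppositely along dimension $1$, either the two images are at geodesic ($\ell_1$) distance $3$, violating SNP, or they form a $2$-cycle inside the two slices, excluded by the no-cycle conclusion; connectivity of the two slices then forces a single common direction. This is more rigorous than the paper's one-liner and is the real added value of your write-up. Two caveats, both shared with the paper rather than defects of your argument: the no-cycle fact you invoke is established in the \emph{proof} of \lemref{lemmaNotInSlices}, not by its statement, so it should be cited as such; and your final relabelling step is doing real work, since the corollary as literally stated fails for, e.g., the Euclidean translation by $-\diracVector_1$ (what is actually proved is the disjunction ``every vertex shifts by $+\diracVector_1$ or every vertex shifts by $-\diracVector_1$''), the paper making the same silent appeal to the reflection symmetry of dimension $1$ (cf.\ the note closing its case 2a).
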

	                    
	                    \begin{proof}
	                        This is a direct consequence of the proof of \lemref{lemmaNotInSlices}.
	                        Any other case corresponds to the situations described by cases 2b), 2c) and 2d) of the proof of \lemref{lemmaNotInSlices}, leading to existence of vertices of which image through $\translation$ is $\blackHole$ in $\vertices_g^{1,\i} \cup \vertices_g^{1,\i+1}$.
	                    \end{proof}
	                    
	                    \begin{lemma}
	                        Let $\graph_g$ be a grid graph respecting assumption \eqref{assumptionLargeGrid}, and let $\translation \in \translations{\graph_g}$ be a minimal translation.
	                        Let $\vertices_g^{1,\i}$ and $\vertices_g^{1,\i+1}$ be two slices containing no vertex of which image by $\translation$ is $\blackHole$:
	                        $$
	                            \forall \j \in \intInterval{1, \i+1} : \forall \vertexVector \in \vertices_g^{1,\j} : \translation(\vertexVector) \neq \blackHole
	                            \;.
	                        $$
	                        \label{lemmaBeforeSliceIsOK}
	                    \end{lemma}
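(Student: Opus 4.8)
The plan is to run a downward induction on the slice index, contaminating the rightward structure of the two clean slices $\vertices_g^{1,\i}$ and $\vertices_g^{1,\i+1}$ towards slice $1$. First, by \corref{everyVertexToRight} I may assume (up to reversing the orientation, the symmetric case being identical) that $\translation(\vertexVector) = \vertexVector + \diracVector_1$ for every $\vertexVector \in \vertices_g^{1,\i} \cup \vertices_g^{1,\i+1}$; write $P(\j)$ for the statement ``$\translation(\vertexVector) = \vertexVector + \diracVector_1$ for all $\vertexVector \in \vertices_g^{1,\j}$'', which in particular forces slice $\j$ to be free of $\blackHole$. The base cases $P(\i)$ and $P(\i+1)$ hold by assumption, and the lemma follows once I establish $P(\j)$ for every $\j \in \intInterval{1,\i+1}$.

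For the inductive step I assume $P(\j+1)$ and prove $P(\j)$ for $\j \leq \i$. Fix $\vertexVector \in \vertices_g^{1,\j}$ and let $w = \vertexVector + \diracVector_1 \in \vertices_g^{1,\j+1}$ be its right neighbor; by $P(\j+1)$ we have $\translation(w) = \vertexVector + 2\diracVector_1$. Suppose first that $\translation(\vertexVector) \neq \blackHole$. Then edge-constrainedness gives $\translation(\vertexVector) \in \neighborhood{1}{\vertexVector}$, while applying the SNP property to the edge $\edge{\vertexVector}{w}$ gives $\translation(\vertexVector) \in \neighborhood{1}{\vertexVector + 2\diracVector_1}$. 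The key computation is that in the grid the vertices $\vertexVector$ and $\vertexVector + 2\diracVector_1$ admit a single common neighbor, namely $\vertexVector + \diracVector_1$ (any common neighbor must lie on a shortest path between two vertices at distance $2$ along dimension $1$, and in a grid that intermediate vertex is unique); hence $\translation(\vertexVector) = \vertexVector + \diracVector_1$, as desired.

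It remains to rule out $\translation(\vertexVector) = \blackHole$, which I expect to be the main obstacle. The useful observation is that, under $P(\j+1)$, the slot $w = \vertexVector + \diracVector_1$ can only be reached from $\vertexVector$: its other neighbors are $\vertexVector + 2\diracVector_1 \in \vertices_g^{1,\j+2}$ and the vertices $w \pm \diracVector_\k$ ($\k \geq 2$) of slice $\j+1$, all of which are sent strictly to the right by $P(\j+1)$ (and $P(\j+2)$ where needed), hence never onto $w$. Thus a $\blackHole$ at $\vertexVector$ would simultaneously create a sink (the vertex $\vertexVector$ itself) and an unreachable vertex $w$ in the already-clean slice $\j+1$. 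My plan is to convert this into a contradiction with minimality: when the left neighbor $\vertexVector - \diracVector_1$ is itself sent to $\vertexVector$ (or to $\blackHole$), redefining $\translation(\vertexVector) := w$ is readily checked to remain EC and SNP --- the only neighborhoods affected are those of $\vertexVector$, and every vertex whose image is adjacent to $w$ is already adjacent to $\vertexVector$ --- so it produces a translation $\translation'$ of strictly smaller loss that still agrees with $\translation$ on the nonempty slice $\vertices_g^{1,\j+1}$, whence $\translation \prec \translation'$, contradicting minimality of $\translation$.

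The genuinely delicate case is when $\vertexVector - \diracVector_1$ ``turns'' (is sent leftwards or within its slice), so that this local repair breaks the SNP constraint on the edge $\edge{\vertexVector - \diracVector_1}{\vertexVector}$. Here I would fall back on the global loss count: recalling from the subsection's first lemma (via the bound of \propref{translationsGrid}) that a minimal translation has loss at most $\prod_{\j = 2}^{\gridNbDimensions} \gridDimensions\entry{\j}$, and that by \propref{translationsAreCyclesOrPaths} the number of vertices sent to $\blackHole$ equals the number of unreachable vertices, I would argue that such a turn adjacent to a clean rightward region forces an entire further slice's worth of $\blackHole$'s (exactly as in cases 2b)--2d) of \lemref{lemmaNotInSlices}), pushing the total loss past this bound. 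Balancing these sinks and sources against the saturated loss budget is the step I expect to require the most care.
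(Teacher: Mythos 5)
Your downward induction is well set up, and its Case 1 is correct: for $\vertexVector$ in slice $\j$ with $\translation(\vertexVector) \neq \blackHole$, edge-constrainedness plus SNP applied to the edge $\edge{\vertexVector}{\vertexVector + \diracVector_1}$, together with the fact that $\vertexVector$ and $\vertexVector + 2\diracVector_1$ have the unique common neighbor $\vertexVector + \diracVector_1$ in a grid, does force $\translation(\vertexVector) = \vertexVector + \diracVector_1$; your local repair in the sub-case $\translation(\vertexVector - \diracVector_1) \in \set{\vertexVector, \blackHole}$ also checks out. The genuine gap is the one you flag yourself: the ``turn'' sub-case ($\translation(\vertexVector) = \blackHole$ while $\vertexVector - \diracVector_1$ is sent leftwards or sideways) is only a plan, not a proof. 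The analogy with cases 2b)--2d) of \lemref{lemmaNotInSlices} does not transfer: those arguments derive a contradiction from a $\blackHole$-image appearing \emph{inside two slices assumed clean}, whereas your turn sits in slices $\j-1$ and $\j$, about which nothing is yet known --- that is precisely what you are trying to prove --- so no contradiction is immediate, and the claim that a turn ``forces an entire further slice's worth of $\blackHole$'s'' is never established.

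What closes this sub-case --- and is, in fact, the paper's entire proof --- is a single global count, requiring no case distinction on $\vertexVector - \diracVector_1$ and no repair. From the proof of \lemref{lemmaNotInSlices}, no cycle of the partition of \propref{translationsAreCyclesOrPaths} meets $\vertices_g^{1,\i} \cup \vertices_g^{1,\i+1}$, so each of the $\prod_{\k \in \intInterval{2, \gridNbDimensions}} \gridDimensions\entry{\k}$ vertices of $\vertices_g^{1,\i}$ lies on a directed path terminating at a vertex whose image is $\blackHole$. By \corref{everyVertexToRight} and injectivity, the only diedges entering slices $\i$ and $\i+1$ come from the left, so no such path can visit slice $\i$ twice: the paths through distinct vertices of $\vertices_g^{1,\i}$ are distinct, and their terminal vertices are distinct and all lie in slices of index at least $\i+2$. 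This already accounts for a loss of $\prod_{\k \in \intInterval{2, \gridNbDimensions}} \gridDimensions\entry{\k}$, the maximum allowed for a minimal translation by the first lemma of the subsection (via \propref{translationsGrid}). Hence any additional vertex $\vertexVector$ in a slice $\j < \i$ with $\translation(\vertexVector) = \blackHole$ pushes the loss strictly above that bound and contradicts minimality --- regardless of what its left neighbour does. Once you have this count, your Case 2 dies in both sub-cases at once, and indeed the whole induction, common-neighbour argument and repair become unnecessary for the lemma as stated; what they buy instead is the stronger conclusion $\translation(\vertexVector) = \vertexVector + \diracVector_1$ on slices $1$ to $\i+1$, which the paper recovers anyway by reapplying \corref{everyVertexToRight} to successive pairs of clean slices.
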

	                    
	                    \begin{proof}
	                        From the proof of \lemref{lemmaNotInSlices}, there cannot exist any cycle including vertices in $\vertices_g^{1,\i} \cup \vertices_g^{1,\i+1}$.
	                        As a consequence, for every vertex $\vertexVector_1 \in \vertices_g^{1,\i}$, the sequence $\sequence{\translation^n(\vertexVector_1)}_n$ eventually leads to $\blackHole$.
	                        Since the cardinal of $\vertices_g^{1,\i}$ is $\prod_{\j \in \intInterval{2, \gridNbDimensions}} \gridDimensions\entry{\j}$, there cannot exist a vertex $\vertexVector_2$ in slices $\vertices_g^{1,\j}$ ($\j < \i$) such that $\translation(\vertexVector_2) = \blackHole$, since $\translation$ would not be minimal.
	                    \end{proof}
	                    
	                    \begin{proposition}
	                        Let $\translation_1 \in \translations{\graph_g}$ be the Euclidean translation by $\diracVector_1$ as introduced in \propref{translationsGrid} on a grid graph $\graph_g$ respecting assumption \eqref{assumptionLargeGrid}.
	                        We have that $\translation_1$ is minimal.
	                        \label{e1isMinimal}
	                    \end{proposition}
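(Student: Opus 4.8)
The plan is to argue by contradiction, combining the existence of minimal translations (\propref{minimalTranslationsExist}) with the structural lemmas just established. Suppose $\translation_1$ — the Euclidean translation by $\diracVector_1$ — were not minimal. Then, by definition of minimality, there is a translation $\translation_2 \in \translations{\graph_g}$ with $\translation_1 \prec \translation_2$, so in particular $\loss(\translation_2) < \loss(\translation_1)$. By \propref{translationsGrid} applied with $\i = 1$ we have $\loss(\translation_1) = \prod_{\j \in \intInterval{2, \gridNbDimensions}} \gridDimensions\entry{\j}$, hence $\translation_2$ would be a translation whose loss is strictly below this value.

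Next I would replace $\translation_2$ by an honest minimal translation. Reusing the well-founded chain construction of \propref{minimalTranslationsExist}, I take a maximal $\prec$-increasing sequence starting at $\translation_2$; since the loss strictly decreases along $\prec$ and is a non-negative integer, this sequence terminates at a minimal translation $\translation^*$ with $\loss(\translation^*) \leq \loss(\translation_2) < \prod_{\j \in \intInterval{2, \gridNbDimensions}} \gridDimensions\entry{\j}$. The contradiction then reduces to a single claim: every minimal translation on a grid satisfying \eqref{assumptionLargeGrid} has loss at least $\prod_{\j \in \intInterval{2, \gridNbDimensions}} \gridDimensions\entry{\j}$.

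To prove that claim for $\translation^*$, I would feed it through the preceding lemmas. The lemma preceding \lemref{lemmaNotInSlices} supplies two adjacent $\blackHole$-free slices $\vertices_g^{1,\i}$ and $\vertices_g^{1,\i+1}$; \corref{everyVertexToRight} then forces $\translation^*$ to act as $+\diracVector_1$ on them, sending $\vertices_g^{1,\i}$ into $\vertices_g^{1,\i+1}$ and $\vertices_g^{1,\i+1}$ into $\vertices_g^{1,\i+2}$; and \lemref{lemmaBeforeSliceIsOK}, applied both to $\translation^*$ and to its inverse (minimal by \propref{sameLossForInverse}), certifies that all slices $\vertices_g^{1,\j}$ except the last one are $\blackHole$-free, so the rigid $+\diracVector_1$ behaviour propagates slice by slice across the whole grid. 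Finally, for any $\vertexVector \in \vertices_g^{1,\gridDimensions\entry{1}}$ the only outward neighbour $\vertexVector - \diracVector_1$ is already used as the image of $\vertexVector - 2\diracVector_1$, and any within-slice image would collide, by injectivity, with the image produced by the shift of $\vertices_g^{1,\gridDimensions\entry{1}-1}$; hence the entire last slice maps to $\blackHole$, giving $\loss(\translation^*) \geq \prod_{\j \in \intInterval{2, \gridNbDimensions}} \gridDimensions\entry{\j}$ and contradicting the strict inequality above.

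The main obstacle is this propagation-to-the-boundary step. The lemmas only assert the rigid-shift structure locally — on one pair of $\blackHole$-free slices and on the slices preceding it — so the delicate part is certifying that the downstream slices are $\blackHole$-free as well (for which I expect to lean on the minimality of $(\translation^{*})^{-1}$ via \propref{sameLossForInverse}) and then excluding any rerouting inside the final slice, which should follow from the same injectivity and cycle analysis already carried out in the cases of \lemref{lemmaNotInSlices}. By \propref{sameLossForInverse} the symmetric statement for $-\diracVector_1$ is then immediate.
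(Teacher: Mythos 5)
Your reduction is sound and follows the paper's skeleton: both arguments rest on \propref{minimalTranslationsExist} plus the same chain of structural lemmas (the existence of two adjacent $\blackHole$-free slices, \corref{everyVertexToRight}, \lemref{lemmaBeforeSliceIsOK}), and your contradiction framing versus the paper's direct analysis of an arbitrary minimal translation is only a cosmetic difference. The genuine problem sits exactly where you flag it, in the propagation-to-the-boundary step, and your proposed mechanism does not deliver what you claim. A slice being $\blackHole$-free \emph{for the inverse} $(\translation^*)^{-1}$ means that every vertex of that slice has a \emph{preimage} under $\translation^*$, not an image; so \lemref{lemmaBeforeSliceIsOK} applied to the inverse controls surjectivity onto downstream slices, and conflating the two notions is what lets you assert that ``all slices except the last are $\blackHole$-free.'' The cited lemmas do not give this: the forward argument certifies the slices up to $\i+1$, the inverse argument certifies slices from the inverse's own free pair onward, and a middle band of slices remains uncontrolled. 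Moreover $(\translation^*)^{-1}$ moves in the direction $-\diracVector_1$, so you would need mirrored statements of \lemref{lemmaNotInSlices}, \corref{everyVertexToRight} and \lemref{lemmaBeforeSliceIsOK} (the corollary as written fixes the direction $+\diracVector_1$); this is justifiable by symmetry but must be argued. Finally, minimality of the inverse is the second assertion of \propref{minimalTranslationsExist}, not a consequence of \propref{sameLossForInverse}, which only gives equality of losses. Your endgame could in principle be salvaged --- the last slice alone already contributes $\prod_{\j \in \intInterval{2, \gridNbDimensions}} \gridDimensions\entry{\j}$ lost vertices --- but as written this step is a gap.

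The paper closes that step with a much lighter, purely local device. Taking a minimal translation $\translation_2$ and the $\blackHole$-free pair $(\vertices_g^{1,\i}, \vertices_g^{1,\i+1})$, it considers the \emph{minimum} $\k > \i+1$ such that $\vertices_g^{1,\k}$ contains a vertex $\vertexVector_1$ with $\translation_2(\vertexVector_1) = \blackHole$ (such a $\k$ exists, else \corref{everyVertexToRight} applied to the last pair of slices would force the last slice to be mapped outside the grid). If $\k = \gridDimensions\entry{1}$, then $\translation_2 = \translation_1$ and minimality of $\translation_1$ follows. If $\k < \gridDimensions\entry{1}$, then $\translation_2(\vertexVector_1 - \diracVector_1) = \vertexVector_1$ by \corref{everyVertexToRight}, and since $\neighborhood{1}{\vertexVector_1 - \diracVector_1} \cap \neighborhood{1}{\vertexVector_1 + \diracVector_1} = \set{\vertexVector_1}$ while $\translation_2(\vertexVector_1) = \blackHole$, the vertex $\vertexVector_1 + \diracVector_1$ can be the image of no vertex; it therefore heads a path of the decomposition of \propref{translationsAreCyclesOrPaths} ending in $\blackHole$, disjoint from the paths issued from the first slice, which forces $\loss(\translation_2) \geq 1 + \prod_{\j \in \intInterval{2, \gridNbDimensions}} \gridDimensions\entry{\j}$ and contradicts minimality of $\translation_2$. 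This analysis around the first lossy slice is what replaces your global propagation via the inverse, and it avoids both the mirrored lemmas and the preimage/image confusion.
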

	                    
	                    \begin{proof}
	                        Let $\translation_2 \in \translations{\graph_g}$ be a minimal translation on $\graph_g$.
	                        Let $\vertices_g^{1,\i}$ and $\vertices_g^{1,\i+1}$ be two slices containing no vertex of which image through $\translation_2$ is $\blackHole$.
	                        \lemref{lemmaBeforeSliceIsOK} tells us that no vertex $\vertexVector$ in slices $\vertices_g^{1,\j}$ ($\j \leq \i+1$) has its image equal to $\blackHole$.
	                        Additionally \lemref{lemmaNotInSlices} and \corref{everyVertexToRight} indicate that for these vertices, $\translation_2(\vertexVector) = \vertexVector + \diracVector_1$.
	                        
	                        Now, let us consider the minimum $\k > \i+1$ such that $\exists \vertexVector_1 \in \vertices_g^{1,\k} : \translation_2(\vertexVector_1) = \blackHole$.
	                        \corref{everyVertexToRight} tells us that every vertex from $\vertices_g^{1,\k-1}$ has its image through $\translation_2$ in $\vertices_g^{1,\k}$.
	                        Now, let us distinguish two cases:
	                        \begin{enumerate}
	                            \item If $\k = \gridDimensions\entry{1}$, then $\translation_2 = \translation_1$, for which the loss is equal to the upper bound on losses for minimal translations;
	                            \item If $\k < \gridDimensions\entry{1}$, then we proceed by contradiction.
	                                  By \corref{everyVertexToRight}, we have that $\translation_2(\vertexVector_1 - \diracVector_1) = \vertexVector_1$.
	                                  Since $\neighborhood{1}{\vertexVector_1 - \diracVector_1} \cap \neighborhood{1}{\vertexVector_1 + \diracVector_1} = \set{\vertexVector_1}$, and since $\translation_2(\vertexVector_1) = \blackHole$, we obtain that $\vertexVector_1 + \diracVector_1$ cannot be the image of any vertex.
	                                  As a consequence, we have a sequence $\sequence{\translation_2^n(\vertexVector_1 + \diracVector_1)}_n$ that ends with $\blackHole$.
	                                  Therefore, the loss of $\translation_2$ is at least $1 + \prod_{\j \in \intInterval{2, \gridNbDimensions}} \gridDimensions\entry{\j}$, and $\translation_2$ is not minimal.
	                        \end{enumerate}
	                    \end{proof}
	                    
	                    \begin{corollary}
	                        Let $\graph_g$ be a grid graph respecting assumption \eqref{assumptionLargeGrid}.
	                        Euclidean translations by Dirac vectors $\diracVector_\i$ ($\i \in \intInterval{1, \gridNbDimensions}$) are pseudo-minimal.
	                    \end{corollary}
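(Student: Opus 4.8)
The plan is to work directly from the inductive definition of pseudo-minimality, viewing $\prec$ as a descent relation: a translation is pseudo-minimal precisely when every translation reachable from it by one loss-decreasing step that keeps a diedge in common fails to be pseudo-minimal, with the minimal translations as base cases. The case $\i = 1$ is immediate, since \propref{e1isMinimal} gives that the Euclidean translation $\translation_{\diracVector_1}$ is minimal, hence pseudo-minimal by the first clause of the definition; the symmetric remark after \propref{translationsGrid} handles $-\diracVector_1$ identically.

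For $\i \geq 2$ the translation $\translation_{\diracVector_\i}$ is \emph{not} minimal: by \propref{translationsGrid} its loss $\prod_{\j \neq \i}\gridDimensions\entry{\j}$ strictly exceeds the minimal loss $\prod_{\j \neq 1}\gridDimensions\entry{\j}$, because $\gridDimensions\entry{1}$ is the dominant dimension under \eqref{assumptionLargeGrid}. I therefore argue through the second clause: I fix an arbitrary $\translation_2 \in \translations{\graph_g}$ with $\translation_{\diracVector_\i} \prec \translation_2$ and show it is not pseudo-minimal. The first ingredient I would isolate is a converse to \propref{e1isMinimal}: under \eqref{assumptionLargeGrid} the minimal translations are \emph{exactly} $\translation_{\diracVector_1}$ and $\translation_{-\diracVector_1}$. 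This is essentially already contained in the proof of \propref{e1isMinimal}, which takes an arbitrary minimal translation and, via \lemref{lemmaNotInSlices}, \corref{everyVertexToRight} and \lemref{lemmaBeforeSliceIsOK}, forces it to send every vertex by $+\diracVector_1$ (or $-\diracVector_1$) up to the border. Being minimal, these two translations are pseudo-minimal.

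To show $\translation_2$ is not pseudo-minimal it then suffices to exhibit a single pseudo-minimal translation below it, and the natural candidate is one of the minimal translations $\translation_{\pm\diracVector_1}$. The comparison $\translation_2 \prec \translation_{\pm\diracVector_1}$ needs two things: that $\loss(\translation_2)$ is strictly above the minimal loss, and that $\translation_2$ has a diedge in common with $\translation_{\diracVector_1}$ or $\translation_{-\diracVector_1}$, i.e. $\translation_2$ moves some vertex by $+\diracVector_1$ or $-\diracVector_1$. The loss condition is free: $\translation_2$ cannot itself be minimal, for the only minimal translations are $\translation_{\pm\diracVector_1}$, and these share no diedge with $\translation_{\diracVector_\i}$ when $\i \geq 2$ (they move in incompatible directions), whereas $\translation_2$ does share one with $\translation_{\diracVector_\i}$; hence $\translation_2 \neq \translation_{\pm\diracVector_1}$ and its loss lies strictly above the minimum.

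The genuine obstacle is the diedge condition: I must show that any loss-decreasing, diedge-sharing modification $\translation_2$ of $\translation_{\diracVector_\i}$ is forced to contain at least one $\pm\diracVector_1$ move. For this I would replay the slice dissection of \lemref{lemmaNotInSlices} along dimension $\i$ rather than dimension $1$. The shared diedge means $\translation_2$ sends some vertex by $+\diracVector_\i$; SNP and EC rigidity then contaminate this into a slab translated by $+\diracVector_\i$, and a translation moving \emph{every} vertex by $+\diracVector_\i$ would realize the full loss $\prod_{\j \neq \i}\gridDimensions\entry{\j} = \loss(\translation_{\diracVector_\i})$. To undercut this loss, $\translation_2$ must turn somewhere, and---exactly as in the turn analysis of cases 2b)--2d) of \lemref{lemmaNotInSlices}, now weighted by \eqref{assumptionLargeGrid}, which makes each dimension enormous relative to all later ones---I expect to be able to show that the only turns recovering more vertices than they sacrifice are turns into the dominant dimension $1$, producing a $\pm\diracVector_1$ move. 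Controlling all competing orientations simultaneously, rather than the single orientation treated in \lemref{lemmaNotInSlices}, is the hard part. Once the diedge condition is secured, $\translation_2 \prec \translation_{\pm\diracVector_1}$ exhibits a pseudo-minimal translation below the arbitrary successor $\translation_2$, so $\translation_2$ is not pseudo-minimal; as $\translation_2$ was arbitrary, $\translation_{\diracVector_\i}$ is pseudo-minimal.
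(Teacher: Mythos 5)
Your proposal takes essentially the same route as the paper's own proof: dimension $1$ is settled by \propref{e1isMinimal} (minimal, hence pseudo-minimal), and for $\i \geq 2$ every competitor $\translation_2$ with $\translation_{\diracVector_\i} \prec \translation_2$ is forced to contain a $\pm\diracVector_1$ diedge, hence satisfies $\translation_2 \prec \translation_{\pm\diracVector_1}$ with $\translation_{\pm\diracVector_1}$ pseudo-minimal, so $\translation_2$ is not pseudo-minimal and the second clause of the definition applies to $\translation_{\diracVector_\i}$. The step you single out as the genuine obstacle --- showing any loss-improving, diedge-sharing competitor must use a $\pm\diracVector_1$ move --- is precisely the step the paper also leaves at sketch level (``we can perform the same reasoning as above''), and your explicit appeal to the fact that minimal translations are exactly $\translation_{\pm\diracVector_1}$ (extracted from the proof of \propref{e1isMinimal}) only makes visible what the paper's argument uses implicitly.
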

	                    
	                    \begin{proof}
	                        Let us denote by $\translation_\i$ the translation of every vertex by $\diracVector_i$ as introduced in \propref{translationsGrid}.
	                        \propref{e1isMinimal} shows that $\translation_1$ is minimal, hence pseudo-minimal.
	                        
	                        Now, let us consider a translation $\translation \in \translations{\graph_g}$ such that $\forall \vertexVector \in \vertices_g : \translation(\vertexVector) \neq \vertexVector + \diracVector_1$.
	                        For such translations, we can perform the same reasoning as above, leading to $\translation_2$ being pseudo-minimal.
	                        However, in the case where such a vertex exists, we can have the situation where $\translation_2 \prec \translation$.
	                        The following illustration depicts such a possible $\translation$:
	                        
	                        \begin{center}
	                            \begin{tikzpicture}[scale=0.6, thick]
	                              \node (00) [draw, circle] at (0,0) {};
	                              \node (01) [draw, circle] at (0,1) {};
	                              \node (02) [fill=black, circle] at (0,2) {};
	                              \node (10) [fill=black, circle] at (1,0) {};
	                              \node (11) [fill=black, circle] at (1,1) {};
	                              \node (12) [draw, circle] at (1,2) {};
	                              \node (20) [draw, circle] at (2,0) {};
	                              \node (21) [draw, circle] at (2,1) {};
	                              \node (22) [draw, circle] at (2,2) {};
	                              \node (30) [draw, circle] at (3,0) {};
	                              \node (31) [draw, circle] at (3,1) {};
	                              \node (32) [draw, circle] at (3,2) {};
	                              \node (40) [draw, circle] at (4,0) {};
	                              \node (41) [draw, circle] at (4,1) {};
	                              \node (42) [draw, circle] at (4,2) {};
	                              \node (50) [draw, circle] at (5,0) {};
	                              \node (51) [draw, circle] at (5,1) {};
	                              \node (52) [draw, circle] at (5,2) {};
	                              \node (60) [draw, circle] at (6,0) {};
	                              \node (61) [draw, circle] at (6,1) {};
	                              \node (62) [draw, circle] at (6,2) {};
	                              \node (70) [fill=black, circle] at (7,0) {};
	                              \node (71) [fill=black, circle] at (7,1) {};
	                              \node (72) [fill=black, circle] at (7,2) {};
	                              \path[dotted]
	                              \foreach \i in {0,1,2,3,4,5,6,7}{
	                                \foreach \j/\jj in {0/1,1/2}{
	                                  (\i\j) edge (\i\jj)
	                                }
	                              }
	                              \foreach \i/\ii in {0/1,1/2,2/3,3/4,4/5,5/6,6/7}{
	                                \foreach \j in {0,1,2}{
	                                  (\i\j) edge (\ii\j)
	                                }
	                              };
	                              \path[->, red]
	                              (00) edge (01)
	                              (01) edge (02)
	                              (20) edge (30)
	                              (30) edge (40)
	                              (40) edge (50)
	                              (50) edge (60)
	                              (60) edge (70)
	                              (21) edge (31)
	                              (31) edge (41)
	                              (41) edge (51)
	                              (51) edge (61)
	                              (61) edge (71)
	                              (12) edge (22)
	                              (22) edge (32)
	                              (32) edge (42)
	                              (42) edge (52)
	                              (52) edge (62)
	                              (62) edge (72);
	                            \end{tikzpicture}
	                        \end{center}
	                        
	                        However, in this situation, $\translation$ is not pseudo-minimal since $\translation \prec \translation_1$.
	                        It follows that $\translation_2$ is pseudo-minimal.
	                        The same reasoning can be made for all higher dimensions.
	                    \end{proof}
	                    
	                    Finally, recall from \conjref{conjecture6} that we believe all the results in this section apply for grid graphs such that $\forall \i \in \intInterval{1, \gridNbDimensions} : \gridDimensions\entry{\i} \geq 6$.
	                    Interestingly, for smaller dimensions, counterexamples as depicted in \figref{counterexamples345} can be found.
	                    The depicted translations are minimal, while not being Euclidean translations by $\diracVector_i$ as introduced in \propref{translationsGrid}.
	
	                    \begin{figure}
	                        \centering
	                        \begin{tikzpicture}[scale=0.6, thick]
	                          \node (label) [] at (1,0) {(a)};
	                          \node (00) [draw, circle] at (0,2) {};
	                          \node (01) [draw, circle] at (0,3) {};
	                          \node (02) [draw, circle] at (0,4) {};
	                          \node (10) [draw, circle] at (1,2) {};
	                          \node (11) [fill=black, circle] at (1,3) {};
	                          \node (12) [draw, circle] at (1,4) {};
	                          \node (20) [draw, circle] at (2,2) {};
	                          \node (21) [draw, circle] at (2,3) {};
	                          \node (22) [draw, circle] at (2,4) {};
	                          \path[dotted]
	                          \foreach \i in {0,1,2}{
	                            \foreach \j/\jj in {0/1,1/2}{
	                              (\i\j) edge (\i\jj)
	                            }
	                          }
	                          \foreach \i/\ii in {0/1,1/2}{
	                            \foreach \j in {0,1,2}{
	                              (\i\j) edge (\ii\j)
	                            }
	                          };
	                          \path[->, red]
	                          (20) edge (10)
	                          (10) edge (00)
	                          (00) edge (01)
	                          (01) edge (02)
	                          (02) edge (12)
	                          (12) edge (22)
	                          (22) edge (21)
	                          (21) edge (20);
	                        \end{tikzpicture}
	                        ~~~
	                        \begin{tikzpicture}[scale=0.6, thick]
	                          \node (label) [] at (1.5,0) {(b)};
	                          \node (00) [fill=black, circle] at (0,1.5) {};
	                          \node (01) [draw, circle] at (0,2.5) {};
	                          \node (02) [draw, circle] at (0,3.5) {};
	                          \node (03) [draw, circle] at (0,4.5) {};
	                          \node (10) [draw, circle] at (1,1.5) {};
	                          \node (11) [draw, circle] at (1,2.5) {};
	                          \node (12) [fill=black, circle] at (1,3.5) {};
	                          \node (13) [draw, circle] at (1,4.5) {};
	                          \node (20) [draw, circle] at (2,1.5) {};
	                          \node (21) [draw, circle] at (2,2.5) {};
	                          \node (22) [fill=black, circle] at (2,3.5) {};
	                          \node (23) [draw, circle] at (2,4.5) {};
	                          \node (30) [draw, circle] at (3,1.5) {};
	                          \node (31) [draw, circle] at (3,2.5) {};
	                          \node (32) [draw, circle] at (3,3.5) {};
	                          \node (33) [draw, circle] at (3,4.5) {};
	                          \path[dotted]
	                          \foreach \i in {0,1,2,3}{
	                            \foreach \j/\jj in {0/1,1/2,2/3}{
	                              (\i\j) edge (\i\jj)
	                            }
	                          }
	                          \foreach \i/\ii in {0/1,1/2,2/3}{
	                            \foreach \j in {0,1,2,3}{
	                              (\i\j) edge (\ii\j)
	                            }
	                          };
	                          \path[->, red]
	                          (01) edge (02)
	                          (02) edge (03)
	                          (03) edge (13)
	                          (13) edge (23)
	                          (23) edge (33)
	                          (33) edge (32)
	                          (32) edge (31)
	                          (31) edge (21)
	                          (21) edge (11)
	                          (11) edge (01)
	                          (10) edge (00)
	                          (20) edge (10)
	                          (30) edge (20);
	                        \end{tikzpicture}
	                        ~~~
	                        \begin{tikzpicture}[scale=0.6, thick]
	                          \node (label) [] at (2,0) {(c)};
	                          \node (00) [fill=black, circle] at (0,1) {};
	                          \node (01) [fill=black, circle] at (0,2) {};
	                          \node (02) [draw, circle] at (0,3) {};
	                          \node (03) [draw, circle] at (0,4) {};
	                          \node (04) [draw, circle] at (0,5) {};
	                          \node (10) [draw, circle] at (1,1) {};
	                          \node (11) [draw, circle] at (1,2) {};
	                          \node (12) [draw, circle] at (1,3) {};
	                          \node (13) [fill=black, circle] at (1,4) {};
	                          \node (14) [draw, circle] at (1,5) {};
	                          \node (20) [draw, circle] at (2,1) {};
	                          \node (21) [draw, circle] at (2,2) {};
	                          \node (22) [draw, circle] at (2,3) {};
	                          \node (23) [fill=black, circle] at (2,4) {};
	                          \node (24) [draw, circle] at (2,5) {};
	                          \node (30) [draw, circle] at (3,1) {};
	                          \node (31) [draw, circle] at (3,2) {};
	                          \node (32) [draw, circle] at (3,3) {};
	                          \node (33) [fill=black, circle] at (3,4) {};
	                          \node (34) [draw, circle] at (3,5) {};
	                          \node (40) [draw, circle] at (4,1) {};
	                          \node (41) [draw, circle] at (4,2) {};
	                          \node (42) [draw, circle] at (4,3) {};
	                          \node (43) [draw, circle] at (4,4) {};
	                          \node (44) [draw, circle] at (4,5) {};
	                          \path[dotted]
	                          \foreach \i in {0,1,2,3,4}{
	                            \foreach \j/\jj in {0/1,1/2,2/3,3/4}{
	                              (\i\j) edge (\i\jj)
	                            }
	                          }
	                          \foreach \i/\ii in {0/1,1/2,2/3,3/4}{
	                            \foreach \j in {0,1,2,3,4}{
	                              (\i\j) edge (\ii\j)
	                            }
	                          };
	                          \path[->, red]
	                          (02) edge (03)
	                          (03) edge (04)
	                          (04) edge (14)
	                          (14) edge (24)
	                          (24) edge (34)
	                          (34) edge (44)
	                          (44) edge (43)
	                          (43) edge (42)
	                          (42) edge (32)
	                          (32) edge (22)
	                          (22) edge (12)
	                          (12) edge (02)
	                          (11) edge (01)
	                          (21) edge (11)
	                          (31) edge (21)
	                          (41) edge (31)
	                          (10) edge (00)
	                          (20) edge (10)
	                          (30) edge (20)
	                          (40) edge (30);
	                        \end{tikzpicture}
	                        \caption[]
	                        {
	                            Counterexamples for grid graphs of dimensions $\vector{3\\3}$ (a), $\vector{4\\4}$ (b) and $\vector{5\\5}$ (c).
	                            For such graphs, the translations that are depicted are minimal, while not being Euclidean translations by $\diracVector_i$ as introduced in \propref{translationsGrid}.
	                        }
	                        \label{counterexamples345}
	                    \end{figure}

                    These results on the torus and on the grid graph show that in the classical case of signals evolving on regular topologies, there is a correspondence between Euclidean translations and pseudo-minimal translations on graphs.
                    Such loss-minimizing translations however do not require any vector space for their definition, and can therefore be identified on any particular topology.
                    
                    Still, the above translations suffer from two main drawbacks.
                    First, their definitions are too strict, and only allow border effects (\ie, loss) as side effects when translating signals.
                    Second, we have shown that their identification is an NP-complete problem.
                    In the following section, we introduce relaxed versions of these operators to tackle the first limitation, and propose a greedy approach to provide a solution to the complexity issue.

            \section{Relaxation of the proposed translations}
            \label{relaxation}            
             
                In this section, we propose a relaxation of translations that better apply to arbitrary graphs. 
                In order to tackle the complexity issue of identifying translations, we propose a greedy approach that find reasonable pseudo-translations, and also note that complexity can be highly reduced when focusing on localized signals on graphs.
            
                \subsection{Relaxation of the translations}
                \label{approxTranslationsScore}            
                    
                    Translations as introduced in \secref{definitions} are defined in a very strict way.
                    Small irregularities in the graph necessarily imply the loss of signal entries due to the translation.
                    One may want to find a balance between loss of the signal, EC violations (\ie, the possibility to send signal entries to non-neighboring vertices) and SNP violations (\ie, the possibility to deform the signal by losing some existing neighborhood or creating new ones).
                    To this end, we introduce approximate translations $\approximate{\translation}$ on a graph $\graph = \tuple{\vertices, \edges}$ as follows:
		            $$
		                \approximate{\translation} : \vertices_1 \cup \{\blackHole\} \to \vertices_2 \cup \{\blackHole\}
		            $$
                    where $\vertices_1 \subseteq \vertices$ and $\vertices_2 \subseteq \vertices$, and with $\approximate{\translation}(\blackHole) = \blackHole$.

                    Note that in the case when $\vertices_1 = \vertices_2 = \vertices$, approximate translations are simply transformations as introduced in \secref{definitions}.
                    Use of two distinct sets $\vertices_1$ and $\vertices_2$ here allows simplified exposition of further results.
                    
                    In order to encourage EC and SNP properties of these operators, we are interested in approximate translations $\approximate{\translation}$ that minimize the following score $\score(\approximate{\translation})$:
		            \begin{equation}
		                \begin{array}{lll}
		                    \score(\approximate{\translation}) & = & \alpha~\frac{1}{\cardinal{\vertices_1}} \loss(\approximate{\translation}) \\
		                                                       & + & \beta~\frac{1}{\cardinal{\vertices_{1\not\blackHole}}} \overline{EC}(\approximate{\translation}) \\
		                                                       & + & \gamma~\frac{2}{\cardinal{\vertices_{1\not\blackHole}} \left(\cardinal{\vertices_{1\not\blackHole}} - 1\right)} def(\approximate{\translation})
		                \end{array}
		                \;,
		                \label{score}
		            \end{equation}
		            with:
		            $$
		                \overline{EC}(\approximate{\translation}) = \cardinal{ \vertex \in \vertices_{1\not\blackHole} : \edge{\vertex}{\approximate{\translation}(\vertex)} \not\in \edges}
		                \;,
		            $$
		            and:
		            $$
		                def(\approximate{\translation}) = \sum\limits_{\edge{\vertex_1}{\vertex_2} \in \binom{\vertices_{1\not\blackHole}}{2}} \absoluteValue{\geodesic(\vertex_1, \vertex_2) - \geodesic\left(\approximate{\translation}(\vertex_1), \approximate{\translation}(\vertex_2)\right)}
		                \;,
		            $$
		            where $\vertices_{1\not\blackHole} = \set{\vertex \in \vertices_1 : \approximate{\translation}(\vertex) \neq \blackHole}$, and $\geodesic$ is the geodesic distance on the graph.
		            In \eqref{score}, $\alpha$, $\beta$ and $\gamma$ are parameters controlling the respective importance of 1) Percentage of function loss; 2) Percentage of EC violations; and 3) average signal deformation.
		            The various normalization terms associated with these values make these quantities independent from the choice of $\vertices_1$.
                    Experimentally, we suggest that $\alpha$ should be set to the highest value in order to minimize the approximate translation loss.
                    Then, $\gamma$ should have the second larger value so that the signal is as little deformed as possible.
                    A low value of $\beta$ should be sufficient to encourage going through a non-empty set of existing edges in the graph, thus conserving the overall orientation of the signal that is translated.
                    
                    Note that in \eqref{score}, we choose the signal deformation --- controlled by the $\gamma$ parameter --- to be measured by the average change in pairwise distances between non-lost vertices, instead of counting the number of created or removed neighborhoods as follows:
		            \begin{equation}
		                \begin{array}{ll}
		                    \overline{SNP}(\approximate{\translation}) = & | \{ \edge{\vertex_1}{\vertex_2} \in \binom{\vertices_1}{2} : \\
		                       & \bigg(\Big(\big(\edge{\vertex_1}{\vertex_2} \in \edges\big) \wedge \big(\edge{\approximate{\translation}(\vertex_1)}{\approximate{\translation}(\vertex_2)} \not\in \edges\big)\Big) \vee \\
		                       & \Big(\big(\edge{\vertex_1}{\vertex_2} \not\in \edges\big) \wedge \big(\edge{\approximate{\translation}(\vertex_1)}{\approximate{\translation}(\vertex_2)} \in \edges\big)\Big)\bigg) \\
		                       & \wedge~(\approximate{\translation}(\vertex_1) \neq \blackHole) \wedge (\approximate{\translation}(\vertex_2) \neq \blackHole) \}|
		                \end{array}
		                \;.
		                \label{snpViolations}
		            \end{equation}
                    This choice encourages $\approximate{\translation}$ to be an isometry, as \eqref{snpViolations} does not consider relative distances between disconnected components of a sparse signal to translate.
                    
                    When composing multiple approximate translations $\approximate{\translation} = \approximate{\translation}_1 \circ \approximate{\translation}_2 \circ \dots \circ \approximate{\translation}_T$ on the graph, we define the total score of the composed translation as follows:
                    \begin{equation}
                        \score(\approximate{\translation}) = \sum\limits_{t = 1}^T \score(\approximate{\translation}_i)
                        \;.
                        \label{scoreComposition}
                    \end{equation}
                    Due to the positivity of $\score$, the score of consecutive approximate translations of a same signal on the graph can only increase as they are applied to the signal.
                    This quantity thus measures the total deformation of a signal as it is translated, as a sum of individual deformations.

                    Note that this score is an overestimation of the actual total deformation.
                    As an example, consider a lossless approximate translation $\approximate{\translation}_1 \in \losslessTransformations{\graph}$ on the whole set of vertices, defined as a random permutation of vertices, and the inverse approximate translation $\approximate{\translation}_2 = \approximate{\translation}_1^{-1}$.
                    The composed approximate translation $\approximate{\translation}_2 \circ \approximate{\translation}_1$ has a positive score, although it is identity.
                    
                    Finally, we highlight that a composition $\approximate{\translation}$ of approximate translations can be evaluated in terms of signal deformation using a pair:
                    \begin{equation}
                        \left(\frac{1}{\cardinal{\vertices_1}}\loss(\approximate{\translation}), \frac{2}{\cardinal{\vertices_{1\not\blackHole}} \left(\cardinal{\vertices_{1\not\blackHole}} - 1\right)} \overline{SNP}(\approximate{\translation})\right)
                        \;.
    		                \label{evaluationPair}
    		           \end{equation}
                    Here, the second term gives the ratio of violations of the SNP property within the non-lost part of the translated signal.
                    Pareto optima with respect to these two quantities are the most interesting compositions in terms of a trade-off between loss and signal deformation.

                \subsection{Finding the best composition of approximate translations}
                \label{bestComposition}
                    
                    While being imprecise, monotonicity of the score function $\score$ with respect to the composition operator allows one to discriminate between multiple compositions of translations.
                    In more details, let us choose $\vertex_{src} \in \vertices_1$ and $\vertex_{tgt} \in \vertices$.
                    We are here interested in identifying the best composition of approximate translations $\approximate{\translation}_{\vertex_{src} \rightarrow \vertex_{tgt}} = \approximate{\translation}_1 \circ \approximate{\translation}_2 \circ \dots \circ \approximate{\translation}_T$ --- in the sense that $\score(\approximate{\translation}_{\vertex_{src} \rightarrow \vertex_{tgt}})$ is minimized --- such that $\approximate{\translation}_{\vertex_{src} \rightarrow \vertex_{tgt}}(\vertex_{src}) = \vertex_{tgt}$.
                    
                    In order to do so, we can exploit the monotonicity of $\score$ with respect to the composition operator to find the best composition of approximate translations by \emph{propagating} $\vertex_{src}$ using a Dijkstra-like algorithm \cite{dijkstra1959note}, as shown in \algref{algoBestComposition}.
                    This algorithm can be understood as a simple traversal in an abstract graph, in which each vertex denotes a configuration --- here, a assignment of elements in $\vertices \cup \set{\blackHole}$ to each vertex in $\vertices_1$ --- and edges are weighted by the deformation to go from a configuration to another, given by the score function in \eqref{score}.
                    A shortest path on this graph from an initial configuration $(\vertex_1 \mapsto \vertex_1, \dots, \vertex_\graphOrder \mapsto \vertex_\graphOrder)$ to a configuration such that $\vertex_{src} \mapsto \vertex_{tgt}$ gives the sequence of approximate translations that minimizes \eqref{scoreComposition}.
                    
		            \begin{algorithm}
		                \ForEach{$\vertex \in \vertices$}
		                {
		                    \State{$visited[\vertex] := \KwFalse$;}
		                    \State{$predecessors[\vertex] := \KwNull$;}
		                }
		                \State{$Q := \set{\tuple{0, \vertex_{src}, \KwNull, \KwNull}}$;}
                        \While{$Q \neq \emptyset$}
		                {
    		                    \State{$total\_s_{\vertex_1}, \vertex_1, pred, \approximate{\translation}_{pred \rightarrow \vertex_1} := \underline{extract\_min}(Q)$;}
		                    \If{$\KwNot~visited[\vertex_1]$}
		                    {
		                        \State{$visited[\vertex_1] := \KwTrue$;}
		                        \State{$predecessors[\vertex_1] := \tuple{pred, \approximate{\translation}_{pred \rightarrow \vertex_1}}$;}
			                    \If{$\vertex_1 = \vertex_{tgt}$}
			                    {
			                        \State{$\KwBreak$;}
			                    }
			                    \If{$\approximate{\translation}_{pred \rightarrow \vertex_1} \neq \KwNull$}
	    		                    {
	    		                        \State{$\vertices_1 := \condSet{\vertex \in \vertices}{\approximate{\translation}_{pred \rightarrow \vertex_1}(\vertex) \neq \blackHole}$;}
	    		                    }
			                    \State{$\vertices_2 := \underline{choose\_subset}(\vertices)$;}
			                    \ForEach{$\vertex_2 \in \vertices_2 \setminus \set{\vertex_1}$}
			                    {
			                        \If{$\KwNot~visited[\vertex_2]$}
		                            {
				                        \State{$\approximate{\translation}_{\vertex_1 \rightarrow \vertex_2}, s_{\approximate{\translation}_{\vertex_1 \rightarrow \vertex_2}} := \underline{minimize\_\score}(\vertex_1, \vertex_2, \graph, \vertices_1, \vertices_2)$;}
				                        \State{$total\_s_{\vertex_2} := total\_s_{\vertex_1} + s_{\approximate{\translation}_{\vertex_1 \rightarrow \vertex_2}}$;}
				                        \State{$Q := Q \cup \set{\tuple{total\_s_{\vertex_2}, \vertex_2, \vertex_1, \approximate{\translation}_{\vertex_1 \rightarrow \vertex_2}}}$;}
				                    }
			                    }
			                }
		                }
		                \Return{$predecessors$;}
	                    \caption{$\vertex_{src}\_to\_\vertex_{tgt}~(\graph = \tuple{\vertices, \edges}, \vertices_1, \vertex_{src}, \vertex_{tgt})$}
	                    \label{algoBestComposition}
	                \end{algorithm}
	                
	                In \algref{algoBestComposition}, function $\underline{extract\_min}$ returns the element from the priority queue with minimum key, here set as the current total score.
	                \algref{algoBestComposition} being a simple adaptation of a Dijkstra algorithm in which we want to minimize the total deformation \eqref{scoreComposition}, its worst-case complexity is $\complexity{\cardinal{\vertices}^2 + \cardinal{\vertices} \log \cardinal{\vertices}}$ when using a Fibonacci heap \cite{fredman1987fibonacci}.
	                Here, the $\cardinal{\vertices}^2$ term comes from the choice of $\vertices_2$ through $\underline{choose\_subset}$, which in the worst case can contain all vertices of $\vertices$.
	                The major complexity thus resides in the $\underline{minimize\_\score}$ function, that identifies the approximate translation $\approximate{\translation}_{\vertex_1 \rightarrow \vertex_2}$ minimizing $\score$ in \eqref{score}, and such that $\approximate{\translation}_{\vertex_1 \rightarrow \vertex_2}(\vertex_1) = \vertex_2$.
                    
                    Approximate translations being by definition transformations on the graph, the number of such objects is combinatorial.
                    In the general case, it is therefore not tractable to exhaustively iterate over all possible approximate translations, and one needs to approximate $\underline{minimize\_\score}$ with heuristics.
                    A greedy approach to do so is presented in \secref{greedyApproach}.

                \subsection{The case of localized signals}
                \label{localizedSignals}            
                    
                    As stated before, exhaustive enumeration of all possible approximate translations in $\underline{minimize\_\score}$ (in \algref{algoBestComposition}) is in general not tractable.
                    However, it is worth mentioning that the number of possible approximate translations heavily depends on choices of $\vertices_1$ and $\vertices_2$.
                                        
                    In particular, in the case of localized signals for which most entries are null, there is no interest in finding a translation operator for every vertex in the graph.
                    Let $\signalVector \in \reals^\graphOrder$ be a signal on a graph $\graph = \tuple{\vertices, \edges}$, such that $\norm{\signalVector}{0} \ll \graphOrder$.
                    We denote by $\signalVector\entry{\vertex}$ the value of the signal associated with vertex $\vertex \in \vertices$.
                    Let us assume that $\signalVector$ is localized on the graph, \ie, that the induced subgraph of vertices with non-zero signal entries is connected.

                    We can adapt \algref{algoBestComposition} as follows:
                    \begin{itemize}
                        \item Initially: $\vertices_1 := \condSet{\vertex \in \vertices}{\signalVector\entry{\vertex} \neq 0}$;
                        \item L. 20: $\vertices_2 := \vertices_1 \sqcup \condSet{\vertex \in \vertices}{\neighborhood{1}{\vertex} \cap \vertices_1 \neq \emptyset}$.
                    \end{itemize}
                    In that case, we are only interested in finding approximate translations that operate on the subset of vertices $\vertices_1$ on which the signal is defined.
                    At each step of the main loop in \algref{algoBestComposition}, $\vertices_2$ is the set of vertices at most $1$-hop away from a vertex in $\vertices_1$.
                    Note that this choice has some impact on the approximate translations that can be found, as it reduces the set of such functions to those that translate the signal to a close location.
                    Including $H$-hop neighbors ($H > 1$) in the definition of $\vertices_2$ enables finding approximate translations with more EC violations but possibly less SNP violations.
                    However, the number of such functions exponentially increases.
                    
                    Using the adaptations above, it may happen that exhaustive computation of approximate translations becomes tractable.
                    This may arise in the case when the graph is regular enough, with a bounded degree for each vertex.
                    Interestingly, examples of such graphs include random geometric graphs, meshes or -- more generally -- graphs built from a uniform sampling of a manifold.

                \subsection{A greedy approach to identify approximate translations}
                \label{greedyApproach}     
                
                    A possible approach to tackle the complexity of identifying interesting pseudo-translations in $\underline{minimize\_\score}$ (in \algref{algoBestComposition}) is to use a greedy algorithm.
                    The idea is to gradually increase the subset of vertices from $\vertices_1$ that are given an image through $\approximate{\translation}_{\vertex_1 \rightarrow \vertex_2}$.
                    For a chosen such subset, one can find the assignments of these vertices such that $\score(\approximate{\translation}_{\vertex_1 \rightarrow \vertex_2})$ --- restricted to the vertices that are given an image --- is minimized.
                    \algref{algoGreedy} gives a possible implementation for this heuristics.
                    
		            \begin{algorithm}
		                \ForEach{$\vertex \in \vertices_1$}
                        {
		                    \State{$\approximate{\translation}_{\vertex_1 \rightarrow \vertex_2}(\vertex) := \KwNull$;}
                        }
		                \State{$\approximate{\translation}_{\vertex_1 \rightarrow \vertex_2}(\vertex_1) := \vertex_2$;}
		                \State{$score_{\approximate{\translation}_{\vertex_1 \rightarrow \vertex_2}} := 0$;}
                        \For{$i = 1 ~\KwTo \left\lceil\frac{\cardinal{\vertices_1}}{K}\right\rceil$}
		                {
		                    \State{$current\_best_\approximate{\translation} := \KwNull$;}
		                    \State{$current\_best\_score := \infty$;}
		                    \State{$\vertices_1^- := \vertices_1 \setminus \condSet{\vertex \in \vertices_1}{best_{\approximate{\translation}_{\vertex_1 \rightarrow \vertex_2}}(\vertex) \neq \KwNull}$;}
		                    \State{$\vertices_2^- := \vertices_2 \setminus \set{best_{\approximate{\translation}_{\vertex_1 \rightarrow \vertex_2}}(\vertex), \forall \vertex \in \vertices_1} \cup \underbrace{\set{\blackHole, \dots, \blackHole}}_{K~times}$;}
		                    \ForEach{$\sequence{\vertex_{11}, \dots, \vertex_{1K}} \in \underline{permutations}(\vertices_1^-, K)$}
		                    {
		                        \ForEach{$\sequence{\vertex_{21}, \dots, \vertex_{2K}} \in \underline{unique\_permutations}(\vertices_2^-, K)$}
		                        {
    		                            \State{$\approximate{\translation} := \approximate{\translation}_{\vertex_1 \rightarrow \vertex_2}$;}
    		                            \For{$j = 1 ~\KwTo~ K$}
    		                            {
        		                            \State{$\approximate{\translation}[\vertex_{1j}] := \vertex_{2j}$;}
    		                            }
    		                            \State{$score := \underline{\score}(\approximate{\translation})$;}
    		                            \If{$score < current\_best\_score$}
    		                            {
    		                                \State{$current\_best\_score := score$;}
    		                                \State{$current\_best_\approximate{\translation} := \approximate{\translation}$;}
    		                            }
		                        }
		                    }
		                    \State{$\approximate{\translation}_{\vertex_1 \rightarrow \vertex_2} := current\_best_\approximate{\translation}$;}
		                    \State{$score_{\approximate{\translation}_{\vertex_1 \rightarrow \vertex_2}} := current\_best\_score$;}
		                }
                        \Return{$\approximate{\translation}_{\vertex_1 \rightarrow \vertex_2}, score_{\approximate{\translation}_{\vertex_1 \rightarrow \vertex_2}}$;}
	                    \caption{$minimize\_\score~(\vertex_1, \vertex_2, \graph = \tuple{\vertices, \edges}, \vertices_1, \vertices_2)$}
	                    \label{algoGreedy}
	                \end{algorithm}
	                
	                In this algorithm, we decompose assignment of an image to all vertices of $\vertices_1$ into a sequence of assignments of $K$ vertices from that set.
                    Note that the choice of $K$ here controls the complexity of the algorithm.
                    As a matter of fact, when $K = 1$, there are $\cardinal{\vertices_2^-} + 1$ possible assignments for a single vertex.
                    In that case, this leads to a worst-case complexity for $\underline{minimize\_\score}$ of $\complexity{\cardinal{\vertices_1} \cardinal{\vertices_2}}$.
                    On the opposite, when $K = \cardinal{\vertices_1}$, this heuristics becomes an exhaustive search of the best approximate translation $\approximate{\translation}_{\vertex_1 \rightarrow \vertex_2} : \vertices_1 \to \vertices_2 \cup \set{\blackHole}$.

            \section{Approximate translations on random graphs}
            \label{examples}
                
                In this section, we illustrate translation of a localized signal on a random geometric graph using approximate translations\footnote{\label{github}In order to encourage reproducible research, Python codes are provided at \url{https://github.com/BastienPasdeloup/graph_translations}.}.
                In details, we consider the following settings for \algref{algoBestComposition}:
                \begin{itemize}
                    \item The graph is generated randomly using a random geometric model with parameters $\graphOrder = 100$ and $\radius = 0.15$;
                    \item Vertices on which signal is defined (\ie, initial value of $\vertices_1$) are chosen as a random vertex $\vertex_{src}$, and its 1-hop neighborhood;
                    \item Target vertex $\vertex_{tgt}$ is chosen at random in $\vertices \setminus \vertices_1$;
                    \item Selection of $\vertices_2$ on l.20 is made as indicated in the proposed adaptation in \secref{localizedSignals};
                    \item Experiments are performed for all possible combinations of values of $\alpha, \beta, \gamma \in \set{0.1, 0.5, 1}$ (in \eqref{score}), and $K \in \set{1, 2, 3}$ (in \algref{algoGreedy}).
                \end{itemize}
                
                \figref{approxResults} depicts the initial graph on which a signal defined over $5$ vertices --- identified with distinct colors --- is initialized.
                Approximate translations allowing to move from an identified $\vertex_{src}$ to another $\vertex_{tgt}$ that are given are those associated with one Pareto optima in terms of loss and deformation.
                A second Pareto optimum was found with the set of parameters identified above.
                It is given in \figref{approxResults2}.
                Additional examples --- for other graphs and other Pareto optima --- are provided online\footnoteref{github} due to space reasons.

                \begin{figure*}
                    \begin{subfigure}{0.25\textwidth}
                        \centering
                        \input{"Objective.tex"}
                        \caption{}
                    \end{subfigure}
                    \hfill
                    \begin{subfigure}{0.25\textwidth}
                        \centering
                        \input{"k_1_alpha_0.5_beta_0.1_gamma_0.5_Step_1.tex"}
                        \caption{}
                    \end{subfigure}
                    \hfill
                    \begin{subfigure}{0.25\textwidth}
                        \centering
                        \input{"k_1_alpha_0.5_beta_0.1_gamma_0.5_Step_2.tex"}
                        \caption{}
                    \end{subfigure}
                    \begin{subfigure}{0.25\textwidth}
                        \centering
                        \input{"k_1_alpha_0.5_beta_0.1_gamma_0.5_Step_3.tex"}
                        \caption{}
                    \end{subfigure}
                    \hfill
                    \begin{subfigure}{0.25\textwidth}
                        \centering
                        \input{"k_1_alpha_0.5_beta_0.1_gamma_0.5_Step_4.tex"}
                        \caption{}
                    \end{subfigure}
                    \hfill
                    \begin{subfigure}{0.25\textwidth}
                        \centering
                        \input{"k_1_alpha_0.5_beta_0.1_gamma_0.5_Step_5.tex"}
                        \caption{}
                    \end{subfigure}
                    \begin{subfigure}{0.25\textwidth}
                        \centering
                        \input{"k_1_alpha_0.5_beta_0.1_gamma_0.5_Step_6.tex"}
                        \caption{}
                    \end{subfigure}
                    \hfill
                    \begin{subfigure}{0.25\textwidth}
                        \centering
                        \input{"k_1_alpha_0.5_beta_0.1_gamma_0.5_Step_7.tex"}
                        \caption{}
                    \end{subfigure}
                    \hfill
                    \begin{subfigure}{0.25\textwidth}
                        \centering
                        \input{"k_1_alpha_0.5_beta_0.1_gamma_0.5_Step_8.tex"}
                        \caption{}
                    \end{subfigure}
                    \begin{subfigure}{0.25\textwidth}
                        \centering
                        \input{"k_1_alpha_0.5_beta_0.1_gamma_0.5_Step_9.tex"}
                        \caption{}
                    \end{subfigure}
                    \hfill
                    \begin{subfigure}{0.25\textwidth}
                        \centering
                        \input{"k_1_alpha_0.5_beta_0.1_gamma_0.5_Step_10.tex"}
                        \caption{}
                    \end{subfigure}
                    \hfill
                    \begin{subfigure}{0.25\textwidth}
                        \centering
                        \input{"k_1_alpha_0.5_beta_0.1_gamma_0.5_Step_11.tex"}
                        \caption{}
                    \end{subfigure}
                    \caption{
                        \textbf{(a)} Initial signal to translate.
                        Vertex $\vertex_{src}$ is located at the source of the blue arrow, and $\vertex_{tgt}$ at its target.
                        \textbf{(b-l)} 11 steps to follow to reach the target.
                        At each step, green arrow indicate that an existing edge is taken (preservation of the EC property), while a red arrow depicts usage of a non-existing edge.
                        The depicted translation corresponds to a Pareto optimum of $10\%$ deformation with zero loss, obtained for parameters $K=1, \alpha = 0.5, \beta = 0.1, \gamma = 0.5$.
                    }
                    \label{approxResults}
                \end{figure*}
                
                \begin{figure*}
                    \begin{subfigure}{0.25\textwidth}
                        \centering
                        \input{"Objective.tex"}
                        \caption{}
                    \end{subfigure}
                    \hfill
                    \begin{subfigure}{0.25\textwidth}
                        \centering
                        \input{"k_1_alpha_0.5_beta_0.1_gamma_1_Step_1.tex"}
                        \caption{}
                    \end{subfigure}
                    \hfill
                    \begin{subfigure}{0.25\textwidth}
                        \centering
                        \input{"k_1_alpha_0.5_beta_0.1_gamma_1_Step_2.tex"}
                        \caption{}
                    \end{subfigure}
                    \begin{subfigure}{0.25\textwidth}
                        \centering
                        \input{"k_1_alpha_0.5_beta_0.1_gamma_1_Step_3.tex"}
                        \caption{}
                    \end{subfigure}
                    \hfill
                    \begin{subfigure}{0.25\textwidth}
                        \centering
                        \input{"k_1_alpha_0.5_beta_0.1_gamma_1_Step_4.tex"}
                        \caption{}
                    \end{subfigure}
                    \hfill
                    \begin{subfigure}{0.25\textwidth}
                        \centering
                        \input{"k_1_alpha_0.5_beta_0.1_gamma_1_Step_5.tex"}
                        \caption{}
                    \end{subfigure}
                    \begin{subfigure}{0.25\textwidth}
                        \centering
                        \input{"k_1_alpha_0.5_beta_0.1_gamma_1_Step_6.tex"}
                        \caption{}
                    \end{subfigure}
                    \hfill
                    \begin{subfigure}{0.25\textwidth}
                        \centering
                        \input{"k_1_alpha_0.5_beta_0.1_gamma_1_Step_7.tex"}
                        \caption{}
                    \end{subfigure}
                    \hfill
                    \begin{subfigure}{0.25\textwidth}
                        \centering
                        \input{"k_1_alpha_0.5_beta_0.1_gamma_1_Step_8.tex"}
                        \caption{}
                    \end{subfigure}
                    \begin{subfigure}{0.25\textwidth}
                        \centering
                        \input{"k_1_alpha_0.5_beta_0.1_gamma_1_Step_9.tex"}
                        \caption{}
                    \end{subfigure}
                    \hfill
                    \begin{subfigure}{0.25\textwidth}
                        \centering
                        \input{"k_1_alpha_0.5_beta_0.1_gamma_1_Step_10.tex"}
                        \caption{}
                    \end{subfigure}
                    \hfill
                    \begin{subfigure}{0.25\textwidth}
                        \centering
                        \input{"k_1_alpha_0.5_beta_0.1_gamma_1_Step_11.tex"}
                        \caption{}
                    \end{subfigure}
                    \caption{
                        \textbf{(a)} Initial signal to translate.
                        Vertex $\vertex_{src}$ is located at the source of the blue arrow, and $\vertex_{tgt}$ at its target.
                        \textbf{(b-l)} 11 steps to follow to reach the target.
                        At each step, green arrow indicate that an existing edge is taken (preservation of the EC property), while a red arrow depicts usage of a non-existing edge.
                        Additionally, the moment at which loss happens is identified by adding label $\blackHole$ to the lost vertex in \textbf{(f)}.
                        The depicted translation corresponds to a Pareto optimum one lost vertex and no deformation, obtained for parameters $K=1, \alpha = 0.5, \beta = 0.1, \gamma = 1$.
                    }
                    \label{approxResults2}
                \end{figure*}

                Results depicted here show that the method is able to find nice approximate translations when the graph is sufficiently regular, even for small values of $K$.
                When the graph becomes too irregular however --- \eg, using a scale-free graph ---, numerous violations of the translation properties become necessary, and the signal generally suffers from strong losses.
                Very irregular topologies thus remains a challenge, for which we believe new solutions should be developed.

            \section{Conclusions}
            \label{conclusions}
                
                In this article, we have introduced a novel definition for translations on graphs.
                By observing that translations on a toric Euclidean space are exactly lossless translations on a torus graph, we have been able to propose a general notion of translation on arbitrary graphs that preserves neighboring properties.
                Our translations have the property to follow the edges of the graph and, when lossless, guarantee that two neighboring signal entries become located on neighboring vertices after translation of the whole signal.
                Additionally, lossless translations do not change any entry in the signal that is translated, contrary to most existing approaches.
                On a grid graph, translations admitting a loss are exactly those we expect to find when shifting an image on a non-toric Euclidean space.
                We have shown that identification of these translations --- lossless or not --- is an NP-complete problem, and have proposed a relaxation allowing controlled violations of translation properties.

                Our experiments have shown that --- due to their inspiration from translations on Euclidean spaces --- our relaxed operators perform well on graph that are regular enough, such as an irregular manifold sampling, represented by a random geometric graph.
                One of the most important challenges we need to address is therefore considering more complex graphs that feature high irregularities.
                Such graphs are prevalent in real-world networks, such as social networks or brain functional networks, for which few vertices have a large degree and act as hubs.
                In order to provide solutions for such cases, we believe new solutions should be developed, by possibly performing some regularizations on the graphs.
                
            %
            
            
            \bibliographystyle{IEEEtran}
            \bibliography{bibliography}

\begin{thebibliography}{10}
\providecommand{\url}[1]{#1}
\csname url@samestyle\endcsname
\providecommand{\newblock}{\relax}
\providecommand{\bibinfo}[2]{#2}
\providecommand{\BIBentrySTDinterwordspacing}{\spaceskip=0pt\relax}
\providecommand{\BIBentryALTinterwordstretchfactor}{4}
\providecommand{\BIBentryALTinterwordspacing}{\spaceskip=\fontdimen2\font plus
\BIBentryALTinterwordstretchfactor\fontdimen3\font minus
  \fontdimen4\font\relax}
\providecommand{\BIBforeignlanguage}[2]{{%
\expandafter\ifx\csname l@#1\endcsname\relax
\typeout{** WARNING: IEEEtran.bst: No hyphenation pattern has been}%
\typeout{** loaded for the language `#1'. Using the pattern for}%
\typeout{** the default language instead.}%
\else
\language=\csname l@#1\endcsname
\fi
#2}}
\providecommand{\BIBdecl}{\relax}
\BIBdecl

\bibitem{shuman2013emerging}
D.~I. Shuman, S.~K. Narang, P.~Frossard, A.~Ortega, and P.~Vandergheynst, ``The
  emerging field of signal processing on graphs: Extending high-dimensional
  data analysis to networks and other irregular domains,'' \emph{IEEE Signal
  Processing Magazine}, vol.~30, no.~3, pp. 83--98, 2013.

\bibitem{ortega2018graph}
A.~Ortega, P.~Frossard, J.~Kova{\v{c}}evi{\'c}, J.~M. Moura, and
  P.~Vandergheynst, ``Graph signal processing: Overview, challenges, and
  applications,'' \emph{Proceedings of the IEEE}, vol. 106, no.~5, pp.
  808--828, 2018.

\bibitem{pueschel2006}
\BIBentryALTinterwordspacing
M.~P{\"{u}}schel and J.~M.~F. Moura, ``Algebraic signal processing theory,''
  \emph{CoRR}, vol. abs/cs/0612077, 2006. [Online]. Available:
  \url{http://arxiv.org/abs/cs/0612077}
\BIBentrySTDinterwordspacing

\bibitem{sandryhaila2013discrete}
A.~Sandryhaila and J.~M. Moura, ``Discrete signal processing on graphs,''
  \emph{IEEE transactions on signal processing}, vol.~61, no.~7, pp.
  1644--1656, 2013.

\bibitem{sandryhaila2014discrete}
------, ``Discrete signal processing on graphs: Frequency analysis.''
  \emph{IEEE Trans. Signal Processing}, vol.~62, no.~12, pp. 3042--3054, 2014.

\bibitem{hammond2011wavelets}
D.~K. Hammond, P.~Vandergheynst, and R.~Gribonval, ``Wavelets on graphs via
  spectral graph theory,'' \emph{Applied and Computational Harmonic Analysis},
  vol.~30, no.~2, pp. 129--150, 2011.

\bibitem{shuman2012windowed}
D.~I. Shuman, B.~Ricaud, and P.~Vandergheynst, ``A windowed graph fourier
  transform,'' in \emph{Statistical Signal Processing Workshop (SSP), 2012
  IEEE}.\hskip 1em plus 0.5em minus 0.4em\relax Ieee, 2012, pp. 133--136.

\bibitem{girault2015translation}
B.~Girault, P.~Gon{\c{c}}alves, and {\'E}.~Fleury, ``Translation on graphs: an
  isometric shift operator,'' \emph{IEEE Signal Processing Letters}, vol.~22,
  no.~12, pp. 2416--2420, 2015.

\bibitem{Girault2015a}
B.~Girault, ``Signal processing on graphs-contributions to an emerging field,''
  Ph.D. dissertation, Lyon, {\'E}cole normale sup{\'e}rieure, 2015.

\bibitem{girault2016localization}
\BIBentryALTinterwordspacing
B.~Girault, P.~Gon{\c{c}}alves, S.~Narayanan, and A.~Ortega, ``Localization
  bounds for the graph translation,'' \emph{CoRR}, vol. abs/1609.08820, 2016.
  [Online]. Available: \url{http://arxiv.org/abs/1609.08820}
\BIBentrySTDinterwordspacing

\bibitem{2015arXiv151103512G}
A.~{Gavili} and X.-P. {Zhang}, ``{On the Shift Operator, Graph Frequency and
  Optimal Filtering in Graph Signal Processing},'' \emph{ArXiv e-prints}, Nov.
  2015.

\bibitem{grelier2016neighborhood}
N.~Grelier, B.~Pasdeloup, J.-C. Vialatte, and V.~Gripon,
  ``Neighborhood-preserving translations on graphs,'' in \emph{GlobalSIP 2016 :
  IEEE Global Conference on Signal and Information Processing}, Arlington,
  United States, Dec 2016, pp. 410--414.

\bibitem{pasdeloup2017extending}
B.~Pasdeloup, ``Extending convolutional neural networks to irregular domains
  through graph inference,'' Ph.D. dissertation, Ecole nationale sup{\'e}rieure
  Mines-T{\'e}l{\'e}com Atlantique, 2017.

\bibitem{lassance2018matching}
C.~E. R.~K. Lassance, J.-C. Vialatte, and V.~Gripon, ``Matching convolutional
  neural networks without priors about data,'' \emph{arXiv preprint
  arXiv:1802.09802}, 2018.

\bibitem{moore1959shortest}
E.~F. Moore, ``The shortest path through a maze,'' in \emph{Proc. Int. Symp.
  Switching Theory, 1959}, 1959, pp. 285--292.

\bibitem{10.2307/2689812}
\BIBentryALTinterwordspacing
D.~Hanson, K.~Seyffarth, and J.~H. Weston, ``Matchings, derangements,
  rencontres,'' \emph{Mathematics Magazine}, vol.~56, no.~4, pp. 224--229,
  1983. [Online]. Available: \url{http://www.jstor.org/stable/2689812}
\BIBentrySTDinterwordspacing

\bibitem{de1713essay}
P.~R. de~Montmort, \emph{Essay d'analyse sur les jeux de hazard}.\hskip 1em
  plus 0.5em minus 0.4em\relax chez Jacque Quillau, imprimeur-jur{\'e}-libraire
  de l'Universit{\'e}, rue Galande, 1713.

\bibitem{dijkstra1959note}
E.~W. Dijkstra, ``A note on two problems in connexion with graphs,''
  \emph{Numerische mathematik}, vol.~1, no.~1, pp. 269--271, 1959.

\bibitem{fredman1987fibonacci}
M.~L. Fredman and R.~E. Tarjan, ``Fibonacci heaps and their uses in improved
  network optimization algorithms,'' \emph{Journal of the ACM (JACM)}, vol.~34,
  no.~3, pp. 596--615, 1987.

\end{thebibliography}
        

    \end{document}